\setlist{topsep=1ex,itemsep=-1ex,partopsep=0ex,parsep=1ex}
\def\cqedsymbol{\ifmmode$\lrcorner$\else{\unskip\nobreak\hfil
\penalty50\hskip1em\null\nobreak\hfil$\lrcorner$
\parfillskip=0pt\finalhyphendemerits=0\endgraf}\fi} 
\newcommand{\cqed}{\renewcommand{\qed}{\cqedsymbol}}
\newtheorem{lemma}{Lemma}[section]
\newtheorem{theorem}[lemma]{Theorem}
\newtheorem{claim}[lemma]{Claim}
\newtheorem{conjecture}[lemma]{Conjecture}
\theoremstyle{definition}
\newtheorem{definition}[lemma]{Definition}
\newcommand{\Oh}{\mathcal{O}}
\newcommand{\cc}{\mathtt{cc}}
\newcommand{\poly}{\mathrm{poly}}
\newcommand{\eps}{\varepsilon}
\newcommand{\Qq}{\mathcal{Q}}
\newcommand{\Rr}{\mathcal{R}}
\newcommand{\N}{\mathbb{N}}
\newcommand{\Cc}{\mathcal{C}}
\newcommand{\Dd}{\mathcal{D}}
\newcommand{\Nn}{\mathcal{N}}
\renewcommand{\leq}{\leqslant}
\renewcommand{\geq}{\geqslant}
\newcommand{\esd}{\eta}
\newcommand{\wei}{\mathbf{w}}
\title{Quasi-polynomial time approximation schemes for the Maximum Weight Independent Set Problem in $H$-free graphs%
\thanks{An extended abstract of this work appeared at SODA 2020~\cite{ChudnovskyPPT20}.
The SODA 2020 version contained only approximation schemes, but not the subexponential algorithms presented here.}}
\author{ 
  Maria Chudnovsky\thanks{Supported by NSF grants DMS-1763817. This material is based upon work supported in part by the U. S. Army Research Office under
    grant number  W911NF-16-1-0404.}\\
Princeton University, Princeton, NJ 08544 \and
 Marcin Pilipczuk\thanks{This research is a part of a project that has received funding from the European Research Council (ERC)
under the European Union's Horizon 2020 research and innovation programme
Grant Agreement no.~714704.} 
  \\ Institute of Informatics, University of Warsaw\\Banacha 2, 02-097 Warsaw, Poland \and 
Micha\l{} Pilipczuk\thanks{This research is a part of a project that has received funding from the European Research Council (ERC)
under the European Union's Horizon 2020 research and innovation programme
Grant Agreement no.~677651.} 
  \\ Institute of Informatics, University of Warsaw\\Banacha 2, 02-097 Warsaw, Poland \and 
    St\'{e}phan Thomass\'{e} \\ 
   Institut Universitaire de France \\
   Laboratoire d'Informatique du Parallélisme, UMR 5668 ENS Lyon, \\ CNRS, UCBL, INRIA, Université de Lyon, France
}
\date{}
\begin{document}

\maketitle

\begin{textblock}{20}(0, 13.0)
\includegraphics[width=40px]{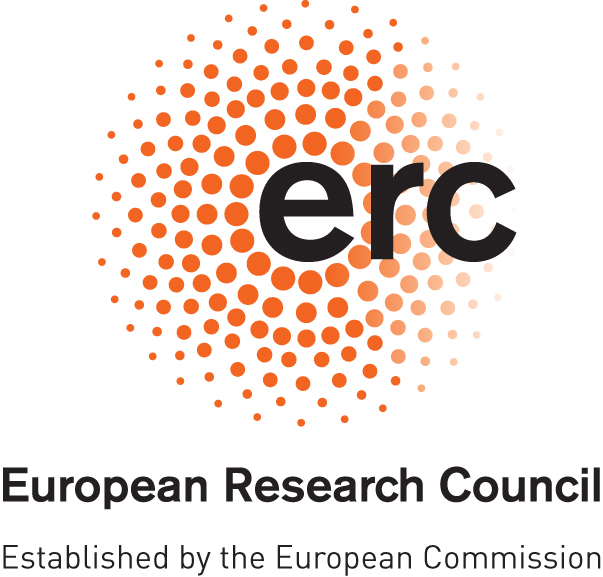}%
\end{textblock}
\begin{textblock}{20}(-0.25, 13.4)
\includegraphics[width=60px]{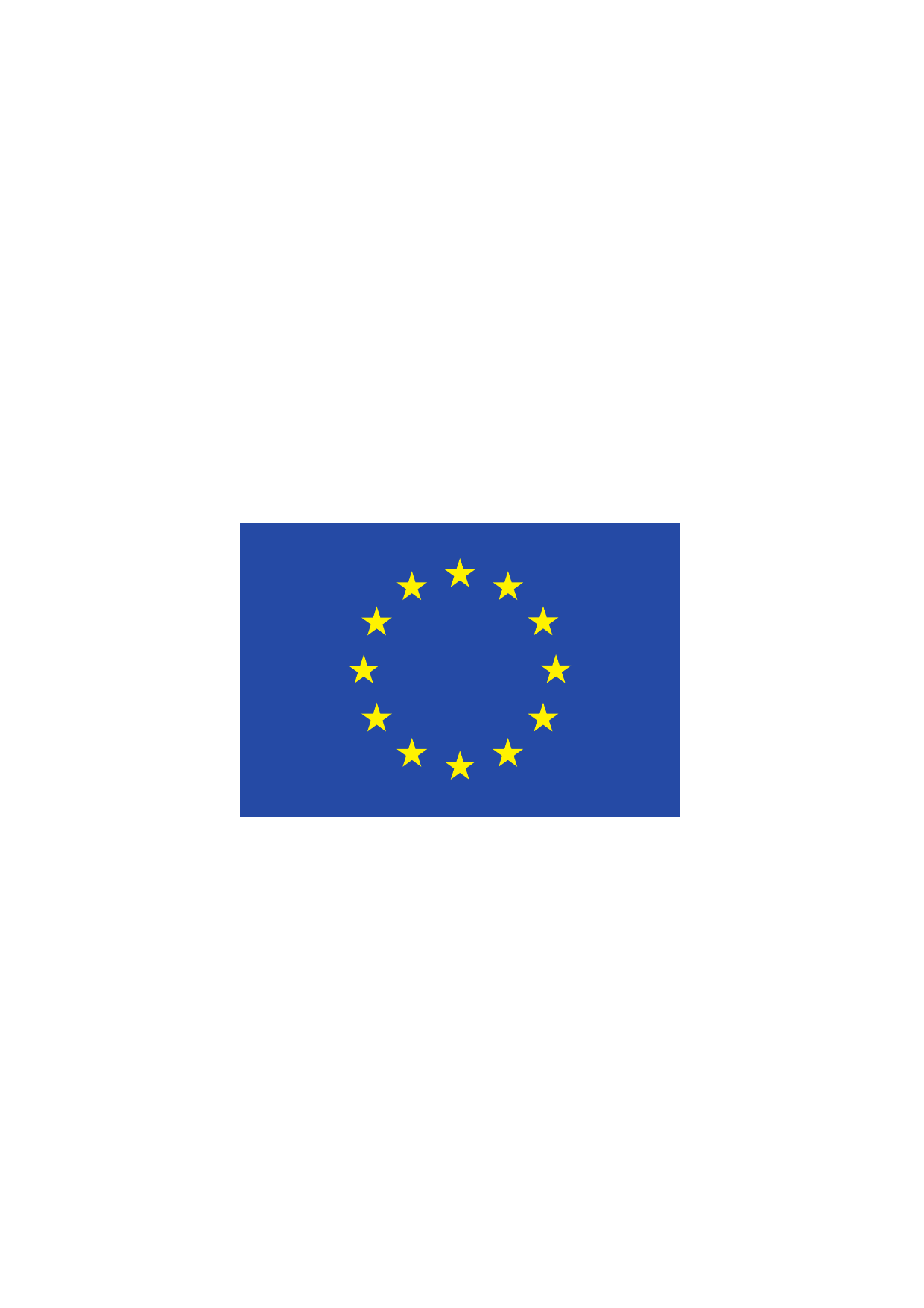}%
\end{textblock}

\begin{abstract}
In the \textsc{Maximum Independent Set} problem we are asked to find a set of pairwise nonadjacent vertices in a given graph with the maximum possible cardinality.
In general graphs, this classical problem is known to be NP-hard and hard to approximate within a factor of $n^{1-\varepsilon}$ for any $\varepsilon > 0$.
Due to this, investigating the complexity of \textsc{Maximum Independent Set} in various graph classes in hope of finding better tractability results is an active research direction.

In $H$-free graphs, that is, graphs not containing a fixed graph $H$ as an induced subgraph, the problem is known to remain NP-hard and APX-hard
whenever $H$ contains a cycle, a vertex of degree at least four, or two vertices of degree at least three in one connected component. 
For the remaining cases, where every component of $H$ is a path or a subdivided claw, the complexity of \textsc{Maximum Independent Set} remains
widely open, with only a handful of polynomial-time solvability results for small graphs $H$ such as $P_5$, $P_6$, the claw, or the fork.

We prove that for every such ``possibly tractable'' graph $H$ there exists an algorithm that, given an $H$-free graph $G$ and an accuracy parameter $\varepsilon > 0$,
   finds an independent set in $G$ of cardinality within a factor of $(1-\varepsilon)$ of the optimum 
   in time exponential in a polynomial of $\log |V(G)|$ and $\varepsilon^{-1}$.
   Furthermore, an independent set of maximum size can be found in subexponential time $2^{\Oh(|V(G)|^{8/9} \log |V(G)|)}$.
   That is, we show that for every graph $H$ for which \textsc{Maximum Independent Set} is not known to be APX-hard and SUBEXP-hard in $H$-free graphs,
   the problem admits a quasi-polynomial time approximation scheme and a subexponential-time exact algorithm in this graph class.
   Our algorithms work also in the more general weighted setting, where the input graph is supplied with a weight function on vertices and we are maximizing the total weight of an independent set.

\end{abstract}

\section{Introduction}
For an undirected graph $G$, a vertex subset $X \subseteq V(G)$ is \emph{independent} if no two vertices of $X$ are adjacent.
The size of the largest independent set in a graph, often denoted as $\alpha(G)$, is one of the fundamental graph parameters studied in graph theory. 
Therefore, it is natural to study the computational task of computing $\alpha(G)$, given $G$, which we call the \textsc{Maximum Independent Set} problem (\textsc{MIS}).
In the weighted generalization, \textsc{Maximum Weight Independent Set} (\textsc{MWIS}), the given graph $G$ is supplied with a weight function $\wei \colon V(G) \to \mathbb{N}$,
and we ask for an independent set $X$ in $G$ with the maximum possible total weight $\wei(X) = \sum_{x \in X} \wei(x)$. 
\textsc{MIS} is a classic problem that is known not only to be NP-hard, but also hard to approximate within a factor of $n^{1-\varepsilon}$ for every $\varepsilon > 0$, unless $\mathrm{P}=\mathrm{NP}$~\cite{Hastad99,Zuckerman07}.

In light of these lower bounds, a lot of effort has been put into understanding the complexity of \textsc{MIS} and \textsc{MWIS} in restricted graph classes. 
While the celebrated Baker's technique yields a polynomial-time approximation scheme (PTAS) for \textsc{MWIS} in planar graphs~\cite{Baker94},
\textsc{MIS} remains NP-hard in planar graphs of degree at most three and APX-hard in graphs of maximum degree at most three~\cite{GJbook,GareyJ77,GuruswamiS11}.
To extend these lower bounds to other graph classes, the following observation due to Poljak~\cite{Poljak74} is very useful: if $G'$ is created from $G$ by subdividing one edge twice,
then $\alpha(G') = \alpha(G)+1$.
Thus, if we fix any graph $H$ that contains either a cycle, a vertex of degree at least four, or two vertices of degree three in one connected component, then 
starting from a graph $G$ of maximum degree at most three
and subdividing each edge a sufficient number of times, we obtain a graph $G'$ where computing $\alpha(\cdot)$ is equally hard, while $G$ does not contain an induced subgraph isomorphic to~$H$.
Now, \textsc{MIS} is known to be APX-hard in graphs of maximum degree at most three, and in this case $\alpha(G)$ is linear in the size of the graph.
Moreover, under the Exponential-Time Hypothesis, \textsc{MIS} has no {\em{subexponential-time algorithm}} (that is, one with running time $2^{o(n)}$)
on graphs of maximum degree at most three; we call this property {\em{SUBEXP-hardness}}.
This implies that \textsc{MIS} remains APX-hard and SUBEXP-hard in $\mathcal{H}$-free graphs for every finite family of graphs $\mathcal{H}$ such that every $H \in \mathcal{H}$ is not a disjoint union of paths and subdivided claws.%
\footnote{A graph is \emph{$H$-free} if it does not contain an induced subgraph isomorphic to $H$.
  A graph $G$ is $\mathcal{H}$-free if $G$ is $H$-free for every $H \in \mathcal{H}$.
  A \emph{subdivided claw} is a tree with one vertex of degree three and all other vertices of degree at most two.}
  
However, when $H$ is a disjoint union of paths and subdivided claws, no hardness result on the complexity of \textsc{MIS} nor \textsc{MWIS} on $H$-free graphs is known.
In fact, it would be consistent with our knowledge if \textsc{MWIS} turns out to be polynomial-time solvable in $H$-free graphs for all such graphs $H$. 
Currently, such a result seems beyond our reach.
Let $P_t$ be the path on $t$ vertices and the \emph{claw} be the four-vertex tree with one vertex of degree three and three leaves. 
The class of $P_4$-free graphs (known also as \emph{cographs}) have a very  rigid structure (in particular, they have clique-width at most $2$), 
and hence they admit a simple polynomial-time algorithm for \textsc{MWIS}~\cite{CorneilLB81}.
Claw-free graphs also possess very strong structural properties and inherit many properties of their main subclass: line graphs.
In particular, the augmenting-path algorithm for maximum cardinality matching generalizes to a polynomial-time algorithm for \textsc{MWIS} in claw-free graphs~\cite{Minty80,Sbihi80,NakamuraT98}.
A more modern approach based on the decomposition theorems for claw-free graphs yields a different
line of algorithms~\cite{FaenzaOS11,FaenzaOS14,NobiliS17,NobiliS15a,OrioloPS08}. 
This, in turn, can be generalized to so-called fork-free graphs~\cite{LozinM08}, where the fork is constructed from the claw by subdividing one edge once. 
The case of $P_5$-free graphs, after being open for a long time, was resolved positively in 2014 by Lokshtanov, Vatshelle, and Villanger~\cite{LokshtanovVV14}
using the framework of \emph{potential maximal cliques}. 
With a substantially larger technical effort, their approach has been generalized to $P_6$-free graphs by Grzesik et al.~\cite{GrzesikKPP22}.
The polynomial-time solvability of \textsc{MWIS} on $P_7$-free graphs, or $T$-free graphs where $T$ is any subdivision of the claw other than the fork, remains open.
There is a significant body of work concerning the complexity of \textsc{MWIS} in various subclasses of $P_t$-free or $T$-free graphs, see e.g.~\cite{p71,p67,p65,p73,p72}.

In 2017, evidence in favor of the tractability of \textsc{MIS} and \textsc{MWIS} at least in $P_t$-free graphs has been found: 
 there is a subexponential-time algorithm for the problem running in time $2^{\Oh(\sqrt{n t \log n})}$ on an $n$-vertex $P_t$-free graph~\cite{BacsoLMPTL19,Brause17,GORSSS18}.
The main insight is that the
classical Gy\'{a}rf\'{a}s' path argument, originally used to show that $P_t$-free graphs are $\chi$-bounded~\cite{gyarfas},
implies that a $P_t$-free graph $G$ admits a balanced separator
consisting of at most $t-1$ vertex neighborhoods. Here, a balanced separator is a set of vertices whose removal results in a graph where every connected component has at most $|V(G)|/2$ vertices.

\paragraph{Our results.}
We provide a new evidence in favor of the tractability of \textsc{MWIS} in all cases of $H$-free graphs where it is not known to be APX-hard.
\begin{theorem}\label{thm:main}
For every graph $H$ whose every connected component is a path or a subdivided claw, there exists an algorithm that,
given an $H$-free graph $G$ with a weight function $\wei : V(G) \to \mathbb{N}$ and an accuracy parameter $\varepsilon > 0$,
computes a $(1-\varepsilon)$-approximation to \textsc{Maximum Weight Independent Set} on $(G,\wei)$
in time exponential in a polynomial of $\log |V(G)|$ and $\varepsilon^{-1}$.
\end{theorem}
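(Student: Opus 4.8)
I would split the argument into two stages: first reduce to the case that $H$ is connected --- a single path or subdivided claw, so that every $H$-free graph is then $P_t$-free for $t=|V(H)|$ --- and then build a recursive approximation scheme for that case around balanced separators made of few neighborhoods. \textbf{Reduction to connected $H$.} Write $H=H_1\uplus H_2$ with $H_1$ a connected component, and induct on the number of components. Given an $H$-free graph $G$, test in polynomial time whether $G$ has an induced copy of $H_1$. If not, $G$ is $H_1$-free with $H_1$ connected, and the algorithm for the connected case applies directly. If $G$ does have such a copy, on a set $A$ with $|A|=\Oh(1)$, then $G-N[A]$ is $H_2$-free (an induced $H_2$ there plus $A$ would give an induced $H$), so the part of the optimum $I^\star$ disjoint from $N[A]$ is recovered by induction, and $I^\star\cap N[A]$ lies inside the union of the $\Oh(1)$ closed neighborhoods $N[a]$, $a\in A$, each still inducing an $H$-free graph. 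The subtlety --- and one source of the loss in the running-time exponent --- is that the sub-instances $G[N[a]]$ need not be smaller than $G$, so one must reorganize this (guessing $I^\star\cap A$, and dealing with $N(a)$ for $a\notin I^\star$ by the mechanism of the next stage) to make the recursion terminate, while reconciling the chosen pieces across the edges between $N[A]$ and $V(G)\setminus N[A]$.

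\textbf{Recursion for connected $H$.} Now $G$ is $P_t$-free, so by the Gy\'arf\'as path argument it admits a balanced separator $S\subseteq N[v_1]\cup\cdots\cup N[v_{t-1}]$ with every component of $G-S$ of size at most $|V(G)|/2$. If the maximum degree $\Delta$ of $G$ is small --- say a sufficiently large power of $\log|V(G)|$ times a polynomial in $\eps^{-1}$ --- then $|S|\le(t-1)(\Delta+1)$, and I would guess the centers $v_1,\dots,v_{t-1}$ (polynomially many options) together with the trace $I^\star\cap S$ ($2^{\Oh(t\Delta)}$ options), commit by deleting $N[I^\star\cap S]$, recurse independently on the components of $G-S$, and add up the weights; the $\Oh(\log|V(G)|)$ recursion depth keeps this exact and quasi-polynomial. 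If instead some vertex $u$ has $\deg(u)\ge\Delta$, branch on whether $u\in I^\star$: the ``yes'' branch deletes $N[u]$ and shrinks $G$ by at least $\Delta$, while the ``no'' branch requires handling $N(u)$ as in the obstacle below. Both the $(1-\eps)$-loss and the worse exponent of the exact algorithm enter here: in the exact variant one takes $\Delta$ to be a fixed polynomial in $|V(G)|$, balanced against the competing recursion terms, which yields $2^{\Oh(|V(G)|^{8/9}\log|V(G)|)}$.

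\textbf{The main obstacle.} The crux is a separator neighborhood $N[v]$ whose center $v$ is \emph{not} in $I^\star$: we may not delete $N[v]$, yet $G[N(v)]$ need not be smaller than $G$, and $|I^\star\cap N(v)|$ need not be bounded. If $G$ were claw-free this intersection would have size at most two --- exactly the property that fails here --- so the resolution must instead come from a structural decomposition of $G[N(v)]$ for subdivided-claw-free graphs, of ``three-in-a-tree''/extended-strip-decomposition flavor, either splitting it into pieces on which $I^\star$'s intersection is controlled or reorganizing the recursion so that each such neighborhood is paid for by a genuine decrease in the measure. I expect essentially all the technical weight of the paper --- and the gap between the $|V(G)|^{1/2}$ exponent known for plain $P_t$-free graphs and the $|V(G)|^{8/9}$ obtained here --- to live in this decomposition and in the bookkeeping that turns it into an $\Oh(\log|V(G)|)$-depth recursion that loses only a controlled fraction of the optimum at each level.
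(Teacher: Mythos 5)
Your plan has a decisive gap right at the start of Stage 2: when the connected component $H$ is a subdivided claw, the class of $H$-free graphs is \emph{not} $P_t$-free for any $t$ --- an induced path of arbitrary length contains no subdivided claw. So the assertion ``Now $G$ is $P_t$-free, so by the Gy\'arf\'as path argument it admits a balanced separator $S\subseteq N[v_1]\cup\cdots\cup N[v_{t-1}]$'' is simply false in the case that carries all the difficulty, and everything in Stage 2 built on the bounded-length Gy\'arf\'as path (the $|S|\leq (t-1)(\Delta+1)$ bound, the quasi-polynomial number of separator guesses) collapses with it. Your ``main obstacle'' paragraph shows you sense that something in the three-in-a-tree / extended-strip-decomposition world is needed, but you aim it at the wrong target: the paper does not decompose $G[N(v)]$ for a non-solution separator vertex $v$. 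Instead, the problem the paper attacks is precisely the unbounded Gy\'arf\'as path: it takes three well-separated length-$t$ sections of that path, detaches them, and invokes the Three-in-a-Tree theorem (Lemma~\ref{lem:claw-shatter}). Either this produces a tree through the three detached tips --- hence an induced $(\geq t)$-claw, contradicting $Y_{\geq t}$-freeness --- or it returns an extended strip decomposition that ``shatters'' the three tips, and a weight-counting argument shows every atom of that decomposition misses the neighborhood of one of the three sections and is therefore light. Recursing into atoms requires the maximum-weight-matching reassembly of Section~\ref{ss:matching}; it is not a plain divide-and-conquer on connected components.

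Two more points. First, the mechanism you propose for the QPTAS to absorb separator neighborhoods that are not in the optimum --- branching on high-degree vertices against a degree threshold $\Delta\in\mathrm{poly}(\log n,\varepsilon^{-1})$ --- does not run in quasi-polynomial time (the ``no'' branch only removes one vertex, so the recursion tree is $2^{\Omega(n/\Delta)}$, which is not quasi-polynomial for $\Delta$ polylogarithmic). The paper instead uses the coupon-collector heavy-vertex device (Lemma~\ref{lem:heavy} and~\ref{lem:guess-heavy}): guess a set $J\subseteq I^\star$ of size $\Oh(\varepsilon^{-1}\log^2 n)$ covering all vertices heavy with respect to $I^\star$, delete $N[J]$, and afterwards every remaining vertex neighborhood holds a vanishing fraction of $\wei(I^\star)$, so an entire separator (even of a few neighborhoods) can be thrown away per recursion level with $\Oh(\varepsilon/\log n)$ loss; degree branching appears only in the separate subexponential-time Theorem~\ref{thm:main2}, where the threshold is $n^{1/9}$, not polylogarithmic. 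Second, your Stage 1 reduction to connected $H$, while salvageable in spirit, is more complicated than what the paper does: after removing $N[J]$ it takes a single copy $X$ of a maximal union of components of $H$ that still embeds, deletes $N[X]$ (cheap, as $|X|<|V(H)|$ and vertices are light), and the remainder is $Y$-free for a fixed subdivided claw $Y$ containing every component of $H$, so one can apply the connected-case QPTAS directly; there is no need to recursively ``reconcile pieces across the edges between $N[A]$ and its complement.''
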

\begin{theorem}\label{thm:main2}
For every graph $H$ whose every connected component is a path or a subdivided claw, there exists an algorithm that,
given an $H$-free graph $G$ with a weight function $\wei : V(G) \to \mathbb{N}$,
solves \textsc{Maximum Weight Independent Set} on $(G,\wei)$
in time exponential in $\Oh(|V(G)|^{8/9} \log |V(G)|)$.
\end{theorem}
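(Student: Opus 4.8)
The plan is to reduce Theorem~\ref{thm:main2} to a structural statement about $H$-free graphs — essentially the one that also underlies Theorem~\ref{thm:main} — and then run a recursive algorithm alternating between two kinds of branching steps, tuned so that the two contributions to the running time are balanced. The structural ingredient I would aim for is a \emph{balanced separator theorem}: for every graph $H$ all of whose components are paths or subdivided claws there is a constant $c = c(H)$ such that every connected $H$-free graph $G$ admits a set $S \subseteq V(G)$ that is a balanced separator — every connected component of $G - S$ has at most $|V(G)|/2$ vertices — and has controlled shape, namely $S$ is built from at most $\poly_H(\Delta)$ vertex neighborhoods, where $\Delta$ bounds the degrees involved. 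For $H = P_t$ this is exactly the classical Gy\'arf\'as path argument already exploited in the subexponential algorithm for $P_t$-free graphs, where $S$ is a union of at most $t-1$ neighborhoods; the genuinely new case is when $H$ has a subdivided-claw component, and there the construction is lossier.

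Granting such separators, the algorithm works on an induced subgraph $G'$, recursing on connected components separately and assembling a maximum-weight independent set $I$. Fix a threshold $D$. If $G'$ has a vertex $v$ with $\deg_{G'}(v) \ge D$, branch on whether $v \in I$: in the branch $v \in I$, recurse on $G' - N_{G'}[v]$, whose vertex count dropped by more than $D$, and record the weight $\wei(v)$; in the branch $v \notin I$, recurse on $G' - v$. Since deleting vertices never increases degrees, once this phase terminates we never return to it, and its recursion tree has only $2^{\Oh((|V(G)|/D)\,\poly\log|V(G)|)}$ nodes, because ``$v \in I$'' steps contribute a total drop of at most $|V(G)|$ to the vertex count while the ``$v \notin I$'' steps only add a quasi-polynomial overhead. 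If instead every vertex of $G'$ has degree below $D$, invoke the separator theorem: now $S$ has at most $\poly_H(D)$ vertices, so we enumerate all at most $2^{|S|}$ candidates for the independent set $I \cap S$; for each candidate we delete $S$ together with the closed neighborhoods of the chosen vertices and recurse on the connected components of $G' - S$ — pairwise disjoint, each of size at most $|V(G')|/2$ — combining the partial solutions additively. A standard recursion-tree estimate bounds the cost of this phase by $2^{\Oh(|S|\log|V(G)|)} = 2^{\Oh(\poly_H(D)\,\log|V(G)|)}$.

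Putting the two phases together, the running time is roughly $2^{\Oh((|V(G)|/D + \poly_H(D))\,\poly\log|V(G)|)}$, and choosing $D$ to equalize the two terms yields a bound of the form $2^{\Oh(|V(G)|^{1-\delta}\poly\log|V(G)|)}$ for a constant $\delta>0$ depending only on the degree of the polynomial $\poly_H$; tracking that degree through the separator construction for subdivided claws produces the stated exponent $8/9$, which we do not attempt to optimize. Two routine points complete the picture: one uses vertex-balanced rather than weight-balanced separators so that the recursion depth stays $\Oh(\log|V(G)|)$ and the bound is independent of the magnitude of the integer weights, and the polynomial-time costs of the breadth-first searches and of extracting separators are absorbed into the stated running time.

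The main obstacle is the separator theorem itself, and in particular making the Gy\'arf\'as path argument cope with subdivided-claw components of $H$. For a connected $P_t$-free graph one runs a breadth-first search from an arbitrary vertex: the search has fewer than $t$ layers (otherwise a shortest path witnesses an induced $P_t$), and an iterated version of this observation yields a balanced separator that is a union of at most $t-1$ neighborhoods. Once $H$ has a subdivided-claw component, long induced \emph{paths} are no longer forbidden, but a ``subdivided spider'' branching out in three directions is; so instead of following a single Gy\'arf\'as path one has to develop a tree of dominating sets — repeatedly pick a dominating neighborhood, pass to a connected component that it fails to dominate, and argue that this process cannot last long without creating the forbidden induced subgraph — and then control how the accumulated pieces assemble into one balanced separator. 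Bounding the number of neighborhoods needed is the technically heaviest part of the argument, and it is what fixes the quantitative exponent; everything else, including the weighted setting, proceeds along the same lines as the classical path case.
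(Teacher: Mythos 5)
Your high-level algorithmic scaffold (branch on high-degree vertices, then handle the low-degree regime structurally, balance the two phases) does match the scheme in the paper's Theorem~\ref{thm:uniform-disperser}. But the structural ingredient you rely on --- that every connected $H$-free graph has a balanced separator built from at most $\poly_H(\Delta)$ vertex neighborhoods, hence of size $\poly_H(\Delta)$ in the bounded-degree regime --- is false as soon as $H$ has a subdivided-claw component. The line graph $L(G)$ of a $3$-regular expander $G$ on $n$ vertices is $4$-regular, has $3n/2$ vertices, is itself an expander (so it has no balanced separator of size $o(n)$), and contains no induced subdivided claw whatsoever: the degree-$3$ center of a subdivided claw would need three pairwise nonadjacent neighbors in $L(G)$, i.e.\ three edges of $G$ incident to a common edge $uv$ and pairwise sharing no endpoint, which is impossible since each must contain $u$ or $v$. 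So for any subdivided claw $H$ (even the claw itself), your structural claim would force a constant-size balanced separator in a bounded-degree expander, a contradiction. This is precisely why the paper proves a separator result of the form ``treewidth $\leq c_t\Delta$'' only for $P_t$-free and $C_{\geq t}$-free graphs (Theorem~\ref{thm:hole-tw}) and makes no such claim for subdivided-claw-free graphs, and your closing sketch about a ``tree of dominating sets'' cannot repair this, since no construction can produce a separator that doesn't exist.

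What the paper does instead in the subdivided-claw case is qualitatively different: when the Gy\'arf\'as-type argument fails to produce a balanced separator made of few neighborhoods, it invokes the Three-in-a-Tree theorem on a carefully set up triple of vertices to obtain an \emph{extended strip decomposition} of (most of) $G$, whose \emph{atoms} are all substantially smaller than $G$ (Lemma~\ref{lem:find-claw}, via Lemma~\ref{lem:claw-shatter}). Crucially, these atoms are not connected components and are not pairwise disjoint, so you cannot simply recurse on them and add up the answers. The recursion returns an optimal independent set in each atom, and the paper then reassembles them by solving a maximum-weight matching problem in an auxiliary graph derived from the decomposition (Section~\ref{ss:matching}, via Claims~\ref{cl:dis:atoms2matching} and~\ref{cl:dis:matching2atoms}) --- this matching step is precisely what replaces ``additive combination over components'' from your sketch and is what makes the subdivided-claw case tractable. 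The disperser abstraction ($\Ehit$-uniform dispersers, Theorem~\ref{thm:uniform-disperser}) packages all of this, and the $8/9$ exponent then comes from the quantitative bounds in Lemma~\ref{lem:find-claw}, not from any balanced-separator calculation.
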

That is, in all the cases when \textsc{MWIS} is not known to be APX-hard or SUBEXP-hard on $H$-free graphs, we prove that \textsc{MWIS} admits a quasi-polynomial time approximation scheme (QPTAS)
  and a subexponential-time algorithm.

We remark here that Theorems~\ref{thm:main} and~\ref{thm:main2} treat $H$ as a constant-sized graph. That is, the polynomial (of $\log |V(G)|$ and $\varepsilon^{-1}$)
and the constant factor (hidden in the big-$\Oh$ notation)
in the exponents of the time bounds depend on the graph $H$. 
If one follows closely the arguments, the final bound of the running time of the approximation scheme is of the form $\exp(c_H \varepsilon^{-c} \log^c n)$ for some constant $c_H$ depending on $H$ and a universal (independent of $H$) constant $c$. Since the constant $c$ is significantly larger than $1$, we refrain from precisely analysing this running time bound for the sake of simplicity.

For an insight into the techniques standing behind Theorems~\ref{thm:main} and~\ref{thm:main2}, let us first focus on the case $H = P_t$. 
A subexponential-time algorithm for this case has been already provided in~\cite{BacsoLMPTL19}.
For an approximation scheme, let $(G,\wei)$ be an input to \textsc{MWIS} with $G$ being $P_t$-free and let $\varepsilon > 0$ be an accuracy parameter.
Let $X \subseteq V(G)$ be an independent set in $G$ of maximum possible weight.
Fix a threshold $\beta := \varepsilon^{-1} t \log n$ and say that a vertex $v \in V(G)$ is \emph{$X$-heavy} if it contains at least a $\beta^{-1}$ fraction of the weight of $X$
in its closed neighborhood, that is, $\wei(X \cap N[v]) \geq \beta^{-1} \wei(X)$. 
A simple coupon-collecting argument shows that there is a set $Y \subseteq X$ of size $\Oh(\beta \log n)$ such that all $X$-heavy vertices are contained in $N[Y]$. 
We investigate all the $n^{\Oh(\beta \log n)} = 2^{\Oh(\varepsilon^{-1} \log^3 n)}$ subcases corresponding to the possible choices of $Y$.
Having fixed $Y$ in a subcase, we can delete $N(Y)$ from the graph and from now on assume that there are no more $X$-heavy vertices (except for isolated vertices that are easy to deal with). 

Now the Gy\'{a}rf\'{a}s path argument, like e.g. in~\cite{GORSSS18}, asserts that in $G$ there exists a balanced separator $A = N[B]$ for some $|B| \leq t-1$. 
We simply delete $A$ from the graph and restart the whole algorithm on every connected component of $G$. Since there are no $X$-heavy vertices, 
we lose only a fraction of $\beta^{-1} t < \varepsilon / \log n$ of the weight of $X$ in this step. 
Since every connected component of $G-A$ is of size at most $n/2$, the depth of the recursion is at most $\log n$.
Consequently, throughout the recursion the total loss in the weight of the optimum solution $X$ is at most $\varepsilon\cdot \wei(X)$.
Furthermore, it can be easily seen that the whole recursion tree has size bounded by $2^{\Oh(\varepsilon^{-1} \log^4 n)}$, giving a quasi-polynomial running time bound of the whole algorithm.

To generalize this argument (and the argument for subexponential-time algorithm of~\cite{BacsoLMPTL19}) 
to the case of $H$ being a subdivided claw, an additional ingredient is needed: the \emph{Three-in-a-Tree} Theorem by Chudnovsky and Seymour~\cite{ChudnovskyS10}.
Let $G$ be a graph and let $x,y,z \in V(G)$ be three distinguished vertices. 
The Three-in-a-Tree Theorem provides a dichotomy:
either we can find an induced tree in $G$ that contains $x$, $y$, and $z$, or we can find a suitable decomposition of $G$ that somehow ``separates'' $x$, $y$, $z$ and witnesses that no such tree exists; 
this decomposition has a similar flavor to the decomposition for claw-free graphs~\cite{ChudnovskyS05}.
By carefully combining this result with the Gy\'{a}rf\'{a}s path argument, we show that in an $H$-free graph $G$ one can either find a balanced separator containing
a small fraction of the weight of the optimum solution (in the case of a QPTAS) or of small size (in the case of a subexponential-time algorithm),
 e.g., consisting of a constant number of vertex neighborhoods,
or a decomposition coming from the Three-in-a-Tree Theorem where every part is of significantly smaller size. 
Such a decomposition allows us to recurse on every part independently and then assemble the final result from partial results using a reduction to the maximum weight matching problem.

Having obtained the statements of Theorems~\ref{thm:main} and~\ref{thm:main2} 
for $H$ being a path or a subdivided claw, we can generalize it to $H$ being a disjoint union of such graphs in a relatively simple and standard way.

\medskip

In light of Theorems~\ref{thm:main} and~\ref{thm:main2}, we conjecture the following generalization.
\begin{conjecture}\label{conj:main}
For every forest $H$ of maximum degree at most three, 
\textsc{MWIS} admits a QPTAS and a subexponential-time algorithm in the class of graphs that do not contain any subdivision of $H$ as an induced subgraph.
\end{conjecture}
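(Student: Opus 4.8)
\medskip

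The natural plan is to preserve the two-level recursion behind Theorems~\ref{thm:main} and~\ref{thm:main2}: an outer layer that guesses an $\Oh(\beta\log n)$-size subset $Y$ of the optimum capturing all ``heavy'' vertices and then deletes $N(Y)$, and an inner layer that either removes a balanced separator that is cheap (of weight at most $\tfrac{\eps}{\log n}$ times the optimum) or is a union of a bounded number of vertex neighborhoods, or applies a Three-in-a-Tree-style decomposition into parts of size at most $(1-\delta)n$ (for a constant $\delta>0$) and reassembles the partial solutions through a maximum-weight matching. Everything here is generic except the structural engine of the inner layer, so the task reduces to proving the following structural dichotomy: \emph{for every tree $T$ of maximum degree at most three there is a constant $c_T$ such that every graph $G$ with no induced subdivision of $T$ admits either a cheap balanced separator, or a balanced separator that is a union of at most $c_T$ vertex neighborhoods, or a Three-in-a-Tree decomposition into parts of size at most $(1-\delta)n$}. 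The passage from a forest $H$ to its tree components should then mirror the paper's handling of disjoint unions, with the proviso --- the same difficulty that surfaces below --- that excluding a disjoint union $T_1\cup\cdots\cup T_m$ is strictly weaker than excluding any single $T_i$, so one cannot simply apply the single-tree result to a component and must instead iterate.

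For the single-tree dichotomy I would induct on the number $k$ of branch (degree-three) vertices of $T$; the cases $k\le 1$ are precisely those established in the body of the paper, using the Gy\'{a}rf\'{a}s path argument together with Three-in-a-Tree. Fix $k\ge 2$ and write $T$ as obtained from a tree $T'$ with $k-1$ branch vertices by attaching, at the tip of one of $T'$'s pendant paths, a further path ending in a new degree-three vertex carrying two additional pendant paths --- i.e.\ by gluing one more subdivided claw onto a leg of $T'$. Given a graph $G$ with no induced subdivision of $T$, either $G$ also has no induced subdivision of $T'$, in which case the inductive hypothesis already yields one of the three objects, or $G$ contains an induced subdivision $D'$ of $T'$ while, by assumption, no such $D'$ can be completed: at the tip $v$ of the designated leg of $D'$ there is no way to grow an induced subdivided claw that is anticomplete to the remainder of $D'$. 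In this second case I would run the $k=1$ machinery \emph{locally at $v$}, inside the component of $v$ in the graph obtained from $G$ by deleting all of $D'$ except $v$, together with their neighbors other than $v$: since no suitable induced subdivided claw can be grown there, that machinery cannot return a subdivided claw, so it must return a separator of at most $c_{T'}$ neighborhoods, a cheap separator, or a Three-in-a-Tree decomposition of that component; the remaining work is to promote this \emph{local} structure around $v$ to a \emph{global} structure for $G$ of the required type, and to choose $D'$ and $v$ so that this promotion is possible.

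The main obstacle is exactly this promotion step: knowing that an induced subdivided claw cannot be grown at one particular leg tip, anticomplete to one particular bounded set, is a very local statement, and it is unclear that it forces a \emph{balanced} separator of the whole of $G$ --- a single ``frozen leg'' may merely reflect a structure that happily hosts copies of $T'$ but avoids $T$ by a hair, without any global bottleneck. Circumventing this seems to require either a genuine strengthening of the Chudnovsky--Seymour theorem --- a ``tree with $k$ prescribed branch vertices'' decomposition that assembles the whole skeleton of $T$ at once rather than one subdivided claw at a time --- or an amortized argument proving that whenever the ``grow one more claw'' step fails, $G$ must already contain a cheap or small balanced separator whose cost can be charged across the recursion. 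A secondary, more bookkeeping-flavored difficulty is that each extra branch vertex multiplies $c_T$ and adds a layer to the combined recursion, so the exponents would need to be re-tuned to keep the running times at $2^{\poly(\log n,\eps^{-1})}$ for the approximation scheme and $2^{\Oh(n^{1-\Omega(1)}\log n)}$ for the exact algorithm; already the exponent $8/9$ in Theorem~\ref{thm:main2} comes from a tight trade-off that does not obviously survive the $k$-fold iteration.
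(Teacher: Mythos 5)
The statement you are addressing is a conjecture, and the paper does not prove it in full; it establishes only the special case where $H$ has at most three vertices of degree three (Section~\ref{sec:horse}). So there is no complete proof to compare against, and your proposal, like the paper itself, can at best lay out a plan and identify where it breaks down. The obstacle you locate --- promoting a local failure to grow one more subdivided claw into a global balanced separator, absent a Three-in-a-Tree analogue with more than three prescribed terminals --- is exactly the bottleneck that stops the paper from pushing past three branch vertices, so you have your finger on the right difficulty. Your remark that excluding a forest is strictly weaker than excluding each of its components, requiring an extra reduction, also matches how the paper treats unions (cf.\ the proofs of Theorems~\ref{thm:main} and~\ref{thm:main2}).

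However, your intermediate claim that ``the cases $k\le 1$ are precisely those established in the body of the paper'' is inaccurate, and the misreading conceals the one genuinely new idea the paper has beyond a single claw. The paper establishes $k\le 3$: Lemma~\ref{lem:find-lobster} and Theorem~\ref{thm:lobster-free-dispersible} handle the lobster $L$ of Figure~\ref{fig:lobster}, which has three branch vertices, and the introduction explicitly claims Conjecture~\ref{conj:main} for all subcubic forests with at most three degree-three vertices. Moreover, the route to $k\le 3$ is \emph{not} the one-claw-at-a-time induction you propose. Lemma~\ref{lem:find-lobster} assembles the entire lobster in three passes: two applications of Lemma~\ref{lem:find-claw}, each preceded by a Gy\'arf\'as path, place the two outer $t$-claws $T$ and $S$ so that a tip $w$ of $T$ and a tip $v$ of $S$ both sit on a common Gy\'arf\'as path $P$ into a heavy component; a single final application of Lemma~\ref{lem:claw-shatter} (Three-in-a-Tree) with terminal set $Z=\{v,w,y_p\}$ then produces the \emph{middle} claw joining $v$, $w$, and an interior path vertex $y_p$, and the induced tree returned is automatically the needed tripod because $P_2$, $S$, and $T$ have been detached so that each terminal has a forced private approach. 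This exploits the numerical coincidence that Three-in-a-Tree furnishes exactly three anchoring terminals while the lobster has exactly three branch vertices, the middle one interior to the Steiner path spanning the two outer ones. For a fourth branch vertex that coincidence fails: one would need a fourth terminal slot, and --- as you correctly observe --- no ``$k$-prescribed-branch-vertices'' extended strip decomposition is known. So your diagnosis of the open difficulty is right, but your proposed inductive scheme does not match (and would need to be reorganized around) the multi-terminal, assemble-at-once construction the paper actually uses for the cases it does handle, and the base of any such induction in the paper is $k\le 3$, not $k\le 1$.
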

Our techniques stop short of proving Conjecture~\ref{conj:main}: we are able to prove it for $H$ containing at most three vertices of degree three.
Note that this strictly generalizes the conclusions of Theorems~\ref{thm:main} and~\ref{thm:main2} for $H$ being a subdivided claw
(with a $\Oh(|V(G)|^{40/41} \log |V(G)|)$ term in the exponent of the running time bound of the subexponential-time algorithm).

Furthermore, as a side result we obtain a QPTAS and a subexponential-time algorithm for graphs excluding a long hole.
\begin{theorem}\label{thm:hole}
For every $t \geq 4$ there exists an algorithm that, given 
a graph $G$ that does not contain any cycle of length at least $t$ as an induced subgraph,
  a weight function $\wei : V(G) \to \mathbb{N}$, and an accuracy parameter $\varepsilon > 0$,
computes a $(1-\varepsilon)$-approximation to \textsc{Maximum Weight Independent Set} on $(G,\wei)$
in time exponential in a polynomial of $\log |V(G)|$ and $\varepsilon^{-1}$.
Furthermore, in the same graph class \textsc{Maximum Weight Independent Set} can be solved exactly
in time exponential in $\Oh(|V(G)|^{1/2} \log |V(G)|)$. 
\end{theorem}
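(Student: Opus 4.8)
The plan is to mimic, for graphs with no long induced cycle, the scheme described in the introduction for $P_t$-free graphs, with the Gy\'{a}rf\'{a}s path argument replaced by a variant that produces a \emph{long hole} rather than a long induced path when it fails.

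\textbf{Structural lemma (the main obstacle).} The first and crucial step is to show that there is a constant $c_t$, depending only on $t$, such that every connected graph $G$ with no induced cycle of length at least $t$ admits a balanced separator of the form $N[B]$ with $B\subseteq V(G)$ and $|B|\le c_t$. I would prove this by a greedy process of Gy\'{a}rf\'{a}s type: maintain a ``current'' connected vertex set $C$ (initially $V(G)$) with $|C|>|V(G)|/2$; repeatedly pick a vertex $v\in C$, and if $C\setminus N[v]$ has a connected component of size more than $|V(G)|/2$, replace $C$ by that (necessarily unique) component, recording $v$ together with a short path witnessing that the new $C$ lies at distance two from $v$ in the old $C$. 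If at some point no large component remains, the recorded vertices form the desired $B$. The hard part is to argue that the process stops after at most $c_t$ rounds: if it ran longer, the traced sequence of recorded vertices together with a connecting subpath routed through one of the large components --- which is connected and disjoint from the neighborhoods of all previously recorded vertices --- could be assembled, after the usual rerouting to remove chords, into an induced cycle of length at least $t$, a contradiction. In contrast with the subdivided-claw case treated in the main body of the paper, the Three-in-a-Tree theorem is not needed here; the entire difficulty is the bookkeeping required to keep the traced structure induced and to certify the long hole.

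\textbf{QPTAS.} Granting the structural lemma, the approximation scheme is exactly as sketched for $P_t$-free graphs, with $c_t$ playing the role of $t-1$. Let $n=|V(G)|$, fix an optimum independent set $X$, and set $\beta:=\varepsilon^{-1}c_t\log n$. Call a vertex $v$ \emph{$X$-heavy} if $\wei(X\cap N[v])\ge \beta^{-1}\wei(X)$. A standard coupon-collector argument (sample elements of $X$ with probability proportional to their weights) yields a set $Y\subseteq X$ with $|Y|=\Oh(\beta\log n)$ such that every $X$-heavy vertex lies in $N[Y]$. The algorithm branches over all $n^{\Oh(\beta\log n)}$ candidates for $Y$; since $Y$ is independent, deleting $N[Y]$ and putting $Y$ into the output is valid, and afterwards no vertex is $X$-heavy. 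Then I apply the structural lemma to each connected component, delete the resulting balanced separators $N[B]$ with $|B|\le c_t$, and recurse on each remaining component. Because no vertex is $X$-heavy, this discards at most $c_t\beta^{-1}\wei(X)=(\varepsilon/\log n)\wei(X)$ of the optimum; as each component has at most half as many vertices, the recursion has depth $\Oh(\log n)$, so the total loss is at most $\varepsilon\,\wei(X)$ and the recursion tree has size $2^{\Oh(\varepsilon^{-1}c_t\log^4 n)}$, giving a running time exponential in $\poly(\log n,\varepsilon^{-1})$.

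\textbf{Subexponential exact algorithm.} For the exact algorithm I drop the heavy-vertex step and branch instead according to a dichotomy, stated for the current $m$-vertex graph, with $n$ the size of the original input. If $|N[v]|<\sqrt n$ for every vertex $v$, then by the structural lemma (applied per component) there is a balanced separator $S$ with $|S|<c_t\sqrt n$; enumerate all at most $2^{c_t\sqrt n}$ independent subsets $I$ of $G[S]$, and for each recurse on the components of $G-S-N[I]$, every one of size at most $m/2$, returning the best of $I$ together with the partial solutions. Otherwise some vertex $v$ has $|N[v]|\ge\sqrt n$, and we branch on whether $v\in X$: in one branch delete $N[v]$ (hence at least $\sqrt n$ vertices) and add $v$ to the solution, in the other delete only $v$. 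Correctness is immediate, so it remains to bound the recursion tree. At a ``separator'' node the number of leaves $L$ satisfies $L(m)\le 2^{c_t\sqrt n}\cdot m\cdot L(\lfloor m/2\rfloor)$, and at a ``large-neighborhood'' node $L(m)\le L(m-1)+L(m-\lceil\sqrt n\rceil)$; this second, delayed-Fibonacci-type recurrence already solves to $2^{\Oh(\sqrt n\log n)}$, and combining it with the at most $\log n$ separator levels keeps the total number of leaves, and hence the running time, within $2^{\Oh(\sqrt n\log n)}$. The one point requiring care is that the branch ``$v\notin X$'' removes a single vertex, so that a naive per-root-to-leaf-path count would be exponential; what saves the bound is that the sibling branch removes $\ge\sqrt n$ vertices, which is exactly what makes the delayed-Fibonacci estimate applicable.
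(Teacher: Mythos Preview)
Your overall architecture is right and matches the paper: a Gy\'arf\'as-type balanced separator of the form $N[B]$ with $|B|$ bounded by a constant depending on $t$, then the heavy-vertex guessing for the QPTAS and the small-neighborhood/large-neighborhood dichotomy for the subexponential algorithm. Once the structural lemma is in place, the rest of your argument is essentially identical to what the paper does (wrapped there in the general ``dispersibility'' framework, but with the same content).

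The gap is in your proof of the structural lemma itself. Your proposed closing step --- ``route through one of the large components back to an early recorded vertex, then reroute to kill chords'' --- does not work as stated: the large component $C$ you hold at step $k$ is, by construction, disjoint from $N[v_0,\ldots,v_{k-1}]$, so it is \emph{not} adjacent to any of the earlier recorded vertices. There is no path inside $C$ that reaches back to $v_0$ (or any $v_i$ with $i<k$), and ``the usual rerouting to remove chords'' does not manufacture such an adjacency. This is exactly the obstacle that separates the long-hole case from the $P_t$-free case.

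The paper's fix (its Lemma~\ref{lem:hole-free-separator}) is the missing idea: run the Gy\'arf\'as path $v_0,\ldots,v_k$; if $k<t-1$ take $B=\{v_0,\ldots,v_k\}$; otherwise look at the two overlapping windows $R'=(v_{k-t+1},\ldots,v_{k-1})$ and $R''=(v_{k-t+2},\ldots,v_k)$, each of size $t-1$. If neither $N[R']$ nor $N[R'']$ is a balanced separator, let $D'$ and $D''$ be their heavy components. One shows (i) $D'$ is adjacent to $v_k$ (because the heavy component of $G-N[v_0,\ldots,v_{k-1}]$ sits inside $D'$ and is adjacent to $v_k$), (ii) $D''$ is adjacent to $v_{k-t+1}$ (because $G-N[v_0,\ldots,v_k]$ has no heavy component, so $D''$ must meet $N[v_0,\ldots,v_{k-t+1}]\setminus N[R'']$), and (iii) $D'\cap D''\neq\emptyset$ since both carry more than half the weight. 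An induced $v_{k-t+1}$--$v_k$ path through $D'\cup D''$ avoids $N[\{v_{k-t+2},\ldots,v_{k-1}\}]$ and, together with the segment $v_{k-t+1},\ldots,v_k$, yields an induced cycle of length at least $t$. This is the argument you need in place of the hand-wave; once you have it, the constant is simply $c_t=t-1$ and the rest of your write-up goes through.
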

The techniques of Theorem~\ref{thm:hole} allow us also to state the following graph-theoretical
corollary that generalizes an analogous result for $P_t$-free graphs~\cite{BacsoLMPTL19,GORSSS18}
and for graphs excluding any induced cycle of length at least $5$~\cite{long-hole-free-subexp}.
\begin{theorem}\label{thm:hole-tw}
For every $t \geq 4$ there exists a constant $c_t$ such that 
every graph $G$ that does not contain any cycle of length at least $t$ as an induced subgraph
has treewidth bounded by $c_t \Delta$, where $\Delta$ is the maximum degree of $G$.
\end{theorem}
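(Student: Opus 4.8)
The plan is to derive Theorem~\ref{thm:hole-tw} from the balanced-separator machinery underlying Theorem~\ref{thm:hole} together with the classical correspondence between treewidth and balanced separators. The input extracted from the proof of Theorem~\ref{thm:hole} — which must produce its separators against an arbitrary vertex weight function, as this is exactly what the QPTAS requires — is the following: there is a function $\gamma(t)$ such that every graph $G$ with no induced cycle of length at least $t$, together with any weight function $\wei\colon V(G)\to\N$, has a $\wei$-balanced separator $S$ with $S=N_G[Z]$ for some $Z\subseteq V(G)$ of size at most $\gamma(t)$; here ``$\wei$-balanced'' means every connected component of $G-S$ has total weight at most $\tfrac12\wei(V(G))$. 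The first step is to isolate this statement inside the proof of Theorem~\ref{thm:hole}; the only subtle point is that the separator must be produced against an arbitrary weight function, not merely against the weight of one fixed maximum-weight independent set.

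Granting it, fix $G$ with no induced cycle of length at least $t$ and let $\Delta$ be its maximum degree; we may assume $\Delta\ge 1$, since an edgeless graph has treewidth $0$. For any $W\subseteq V(G)$, apply the statement with $\wei=\mathbf{1}_{W}$: this yields a set $S=N_G[Z]$ with $|Z|\le\gamma(t)$ such that every component of $G-S$ contains at most $|W|/2$ vertices of $W$, and since $|N_G[v]|\le\Delta+1$ for every $v$ we get $|S|\le\gamma(t)(\Delta+1)\le 2\gamma(t)\,\Delta$. Crucially, the class of graphs with no induced cycle of length at least $t$ is closed under vertex deletion, so this holds not only for $G$ but for every induced subgraph of $G$. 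It is a classical fact that if for every induced subgraph $G'$ of $G$ and every $W\subseteq V(G')$ there is a set of size at most $k$ whose removal leaves every component of $G'$ with at most $|W|/2$ vertices of $W$, then $\mathrm{tw}(G)=\Oh(k)$: one builds the tree decomposition recursively, at each step splitting off such a balanced separator, emitting the bag consisting of the separator together with the current (bounded-size) interface, and recursing on each component with the induced interface; routine bookkeeping keeps the interfaces of size $\Oh(k)$ and bounds the width by $\Oh(k)$. Applying this with $k=2\gamma(t)\,\Delta$ gives $\mathrm{tw}(G)\le c_t\Delta$, where $c_t$ is the product of $2\gamma(t)$ and the universal constant in the classical fact.

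The reduction just described is routine; the substance of the theorem lies entirely in the separator statement borrowed from Theorem~\ref{thm:hole}. Hence the main obstacle is to verify that statement carefully — in particular that each separator is a union of a bounded (in $t$) number of \emph{single-vertex} closed neighborhoods, i.e. radius-one balls, rather than balls of larger radius, which would only yield treewidth polynomial, not linear, in $\Delta$ — and that it is obtained for an arbitrary weight function. Once this is settled, Theorem~\ref{thm:hole-tw} follows, and it subsumes the analogous bounds previously known for $P_t$-free graphs and for graphs with no induced cycle of length at least $5$, since $P_t$-free graphs have no induced cycle of length at least $t+1$ and the latter class is the case $t=5$.
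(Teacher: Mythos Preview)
Your proposal is correct and follows exactly the route the paper takes: the paper deduces Theorem~\ref{thm:hole-tw} directly from Lemma~\ref{lem:hole-free-separator} (which provides, for any weight function on a connected $C_{\geq t}$-free graph, an induced path $Q$ on fewer than $t$ vertices with $N[Q]$ a balanced separator) via the standard balanced-separator-to-treewidth argument, citing Corollary~1 of~\cite{BacsoLMPTL19}. One small inaccuracy: the balance factor guaranteed by Lemma~\ref{lem:hole-free-separator} is $\tfrac{3}{4}$, not $\tfrac{1}{2}$, but as you implicitly acknowledge this is immaterial since the recursive tree-decomposition construction works for any fixed constant below~$1$; you may want to state your extracted separator lemma with the $\tfrac{3}{4}$ bound to match what is actually proved.
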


\paragraph{Recent progress.}
This work, announced in 2019~\cite{arxiv,ChudnovskyPPT20},
initiated a sequence of rapid developments.

In the case of $P_t$-free graphs, Gartland and Lokshtanov showed in 2020 a quasipolynomial-time
exact algorithm~\cite{GL20} for \textsc{MWIS}; the result was later simplified
by Pilipczuk, Pilipczuk and Rz\k{a}\.{z}ewski~\cite{PPR20SOSA}.
The authors of the aforementioned two papers joined forces and generalized this result
to the class of graphs that do not contain any cycle of length at least $t$ as an induced subgraph
and to a wide family of problems that include \textsc{MWIS}
and \textsc{Feedback Vertex Set}~\cite{GLPPRz20}.

Meanwhile, in the regime of polynomial-time algorithms,
Abrishami et al~\cite{ACPRS-SODA} and Chudnovsky et al~\cite{CMPPR22}
improved the framework of~\cite{LokshtanovVV14,GrzesikKPP22}, generalizing
the algorithm for \textsc{MWIS} in $P_6$-free graphs to the same set of problems
as in~\cite{GLPPRz20} and some other related graph classes.

For the general case of $H$-free graphs where every component of $H$ is
a path or a subdivided claw,
in 2021 Abrishami et al~\cite{AbrishamiCDR22} showed a polynomial-time algorithm
with the additional assumption of bounded maximum degree.
Then, in 2022 Majewski et al~\cite{MajewskiM0OPRS22} showed a structural result
being the analog of the Gy\'{a}rf\'{a}s' path argument, providing
an arguably simpler proof of Theorems~\ref{thm:main} and~\ref{thm:main2}
with better running time bounds. 
Building on their work and the algorithms for $P_t$-free graphs~\cite{GL20,PPR20SOSA},
Gartland et al~\cite{sttt-free-qp} 
very recently announced a quasipolynomial-time exact algorithm for \textsc{MWIS}. 
Furthermore, Abrishami et al~\cite{abrishami2023max} showed
how to use the structural result of~\cite{MajewskiM0OPRS22} 
to simplify the arguments of~\cite{AbrishamiCDR22} and generalize 
the assumption of bounded maximum degree 
to the assumption of being $K_s$-free and $K_{s,s}$-free for a constant $s$.

In all aforementioned works for $P_t$-free graphs, 
the Gy\'{a}rf\'{a}s path argument is essential, while 
in all aforementioned works for graphs excluding subdivided claws,
the usage of extended strip decompositions and the three-in-a-tree theorem 
outlined in this work plays pivotal role. 
It remains at the moment highly unclear how to combine these tools with the framework
of \emph{potential maximal cliques}, being the base of the known polynomial-time algorithms
in $P_5$- and $P_6$-free graphs~\cite{LokshtanovVV14,GrzesikKPP22,ACPRS-SODA,CMPPR22}.

\paragraph{Organization.}
After brief preliminaries in Section~\ref{sec:prelims}, we present our framework in Section~\ref{sec:disperser}.
In Section~\ref{sec:heavy} we treat heavy vertices.
As a warm-up, the argument for $P_t$-free graphs is described in Section~\ref{sec:pt-free};
this section also contains proofs of 
Theorems~\ref{thm:hole} and~\ref{thm:hole-tw}.
Section~\ref{sec:spider}, the main technical part of the paper, considers the case
of $H$-free graphs where $H$ is a subdivided claw, with Theorems~\ref{thm:main} and~\ref{thm:main2} inferred in Section~\ref{ssec:thm:main}.
Finally, in Section~\ref{sec:horse} we prove Conjecture~\ref{conj:main} for $H$ being a forest with at most three
vertices of degree three.

\section{Preliminaries}\label{sec:prelims}
For an (undirected, simple) graph $G$ and a vertex $v \in V(G)$, $N(v)$ denotes the (open) neighborhood
of $v$, and $N[v] = \{v\} \cup N(v)$ is the closed neighborhood of $v$.
We extend it to sets of vertices $X \subseteq V(G)$ by $N[X] = \bigcup_{v \in X} N[v]$ and $N(X) = N[X] \setminus X$.
Whenever the graph $G$ is not clear from the context, we clarify it by putting it in the subscript.
For brevity, we sometimes identify subgraphs with their vertex set when this does not create any confusion: if $D$ is a subgraph of $G$,
then $G-D$, $N[D]$, and $N(D)$ are shorthands for $G-V(D)$, $N[V(D)]$, and $N(V(D))$, respectively.
Two disjoint sets $A,B \subseteq V(G)$ are \emph{fully adjacent} (\emph{fully anti-adjacent}) if there are all possible edges (no edges, respectively) between $A$ and $B$.
By $P_t$ we denote a path on $t$ vertices.
For a graph $G$, $\cc(G)$ is the family of connected components of $G$.

\subsection{Maximum Weight Independent Set}
Let $G$ be a graph and let $\wei : V(G) \to \mathbb{N}$ be a weight function. 
For a set $X \subseteq V(G)$ we denote $\wei(X) = \sum_{x \in X} \wei(x)$.
The \textsc{Maximum Weight Independent Set} (\textsc{MWIS}) problem asks for an independent set $I \subseteq V(G)$ maximizing $\wei(I)$.
We say that an independent set $I$ is an \emph{$\alpha$-approximation} for \textsc{MWIS} in 
$(G,\wei)$ if for every independent set $I'$ in $G$ we have $\wei(I) \geq \alpha \cdot \wei(I')$.
In this work, given $G$, $\wei$, and an accuracy parameter $\eps > 0$, we ask for an independent set $I$ that is a $(1-\eps)$-approximation.
For simplicity, we will develop an algorithm that gives only a $(1-c \cdot \eps)$-approximation for some universal constant $c$, as we can then use it with rescaled value of $\eps$.
We denote $n = |V(G)|$.

\subsection{Extended strip decomposition and the three-in-a-tree theorem}
Let $G$ be a graph. An \emph{extended strip decomposition} of $G$ consists of the following:
\begin{enumerate}
\item a simple non-empty graph $H$, 
\item a vertex set $\esd(e) \subseteq V(G)$ for every $uv = e \in E(H)$ and subsets $\esd(e,u), \esd(e,v) \subseteq \esd(e)$,
\item a vertex set $\esd(v) \subseteq V(G)$ for every $v \in V(H)$, and
\item a vertex set $\esd(T) \subseteq V(G)$ for every triangle $T$ in $H$,
\end{enumerate}
with the following properties:
\begin{enumerate}
\item the vertex sets of $\esd(e)$, $\esd(v)$, and $\esd(T)$ form a partition of $V(G)$;
\item for every $v \in V(H)$ and every two distinct edges $vu, vw \in E(H)$ incident with $v$, the set $\esd(vu,v)$ is fully adjacent to $\esd(vw,v)$ in $G$;
\item every edge $xy \in E(G)$ is either contained in one of the graphs $G[\esd(e)]$, $G[\esd(v)]$, $G[\esd(T)]$, or is one of the following types:
\begin{itemize}
\item $x \in \esd(e,v)$, $y \in \esd(e',v)$ for two distinct edges $e$, $e'$ of $H$ incident with a common vertex $v \in V(H)$;
\item $x \in \esd(v)$ and $y \in \esd(e,v)$ for some edge $e \in E(H)$ incident with a vertex $v \in V(H)$;
\item $x \in \esd(T)$ and $y \in \esd(e,v) \cap \esd(e,u)$ for some triangle $T$ in $H$ and an edge $e = uv$ of this triangle.
\end{itemize}
\end{enumerate}

The main result of~\cite{ChudnovskyS10} is the following.
\begin{theorem}[\hspace{1sp}\cite{ChudnovskyS10}]\label{thm:3-in-a-tree}
Let $G$ be a connected graph and let $Z \subseteq V(G)$ be a set of size at least two such that for every induced tree $T$
of $G$, $|V(T) \cap Z| \leq 2$.
Then there exists an extended strip decomposition $(H,\esd)$ of $G$ such that for every $z \in Z$
there exists a distinct vertex $w_z \in V(H)$ of degree one in $H$ with
$\esd(e_z, w_z) = \{z\}$ where $e_z$ is the unique edge of $H$ incident with $w_z$.
Furthermore, given $G$ and $Z$, such a decomposition can be computed in polynomial time.
\end{theorem}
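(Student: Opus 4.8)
The plan is to prove the dichotomy --- \emph{either} an induced tree of $G$ meets $Z$ in at least three vertices, \emph{or} the claimed extended strip decomposition exists --- by induction on $|V(G)|$, following Chudnovsky and Seymour~\cite{ChudnovskyS10}; the induction will be constructive, so that it also yields the polynomial-time algorithm. The starting observation is that two-in-a-path is easy: a shortest induced path between any two terminals is chordless, so a structured obstruction can only arise when one tries to attach a \emph{third} terminal. The crux is therefore to show that if no third terminal of $Z$ can be attached to \emph{any} induced path joining two others without creating a chord, then $G$ must split along a structure $H$ that separates the terminals into distinct degree-one pieces.

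First I would normalize the instance. Isolated vertices and the cases $|Z|\le 1$ are trivial; if two vertices are twins, or a non-terminal vertex is ``useless'' (its deletion cannot destroy any induced tree through three terminals), delete it, checking that the answer is preserved and that a monovariant such as $|V(G)|+|E(G)|$ strictly drops. I would also arrange that every $z\in Z$ has degree exactly one, by attaching a pendant vertex $z'$ to each $z$ and transferring the terminal role to $z'$; one must verify that this does not create new induced trees through three old terminals and that the pendant $z'$ is precisely what becomes the degree-one vertex $w_z$ of $H$ with $\esd(e_z,w_z)=\{z'\}$, so that the clause in the statement is met verbatim.

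The heart is an augmenting argument: maintain a partial skeleton of the hoped-for induced tree together with the ``attachment classes'' into which the remaining vertices fall, and repeatedly try to grow it toward a not-yet-reached terminal. Either the process succeeds, producing the induced tree, or it gets stuck in a configuration that exhibits a cut: a clique cutset, a homogeneous set (module), or a ``strip boundary''. In each stuck case one peels off a strictly smaller piece $G'$, promoting the boundary vertices to terminals of $G'$, applies induction, and then reassembles: the vertex or edge of $H'$ created for a boundary terminal is glued into the global graph $H$, and conditions (2)--(3) of the definition of an extended strip decomposition are used to certify that every edge of $G$ running across the split is of one of the permitted types (an incidence at a common vertex of $H$, a vertex-to-edge incidence, or a triangle incidence). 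Enumerating the stuck configurations and checking that each yields a legal split is the long part of the proof.

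I expect the main obstacle to be exactly this case analysis in the ``rigid'' regime, where neither a clique cutset nor a module nor an obvious strip boundary is present --- the analogue of the ``basic'' classes in the claw-free decomposition theorems alluded to in the introduction. There one has to argue directly, by a careful study of the neighborhoods of the three terminals and of the shortest paths among them, that an induced tree is unavoidable, i.e.\ that the absence of such a tree already forces one of the structured cases. A secondary but genuine difficulty is the bookkeeping for the polynomial running time: every reduction and every recursive split must strictly decrease $|V(G)|$ (or the chosen monovariant), the total number of recursive calls must stay polynomial, and the final extended strip decomposition must be assembled from the recursively computed ones in polynomial time.
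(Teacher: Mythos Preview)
The paper does not prove this statement at all: Theorem~\ref{thm:3-in-a-tree} is quoted verbatim from Chudnovsky and Seymour~\cite{ChudnovskyS10} and used as a black box throughout (see Lemma~\ref{lem:claw-shatter} and the proof of Lemma~\ref{lem:find-claw}). There is therefore no ``paper's own proof'' to compare your proposal against.

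As for your outline itself, it is a plausible high-level sketch of the strategy in~\cite{ChudnovskyS10}, but it is not a proof and you explicitly acknowledge as much. The actual argument in~\cite{ChudnovskyS10} is long and delicate; your description of the ``stuck configurations'' (clique cutset, module, strip boundary) and the recursive gluing is too coarse to count as more than a plan, and the step you flag as the main obstacle --- the analysis of the rigid case with no obvious reduction available --- is indeed where essentially all the work lies. If the intent was to supply a self-contained proof here, that would require reproducing a substantial portion of~\cite{ChudnovskyS10}; if the intent was merely to cite the result, then no proof is needed and the paper's treatment is already correct.
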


Given a graph $G$ and an extended strip decomposition $(H,\esd)$ of $G$, a vertex $z$
satisfying the property expressed in Theorem~\ref{thm:3-in-a-tree} will be called {\em{peripheral}} in $(H,\esd)$.
Concretely, $z$ is peripheral in $(H,\esd)$ if there exists a vertex $w_z$ of $H$, said to be {\em{occupied}} by $z$, such that $w_z$ has degree $1$ in $H$ and satisfies $\esd(e_z,w_z)=\{z\}$, 
where $e_z$ is the unique edge incident to $w_z$ in $H$.

We will also need the notion of a \emph{trivial} extended strip decomposition. Given a graph
$G$, a \emph{trivial extended strip decomposition} $(H,\esd)$ consists of an edgeless graph $H$
that has a vertex $x_C$ for every connected component $C$ of $G$ and $\esd(x_C) = C$.

\section{Disperser yields a QPTAS}\label{sec:disperser}
\newcommand{\triangles}{\mathcal{T}}
\newcommand{\Csafe}{\gamma}
\newcommand{\Cshrink}{\delta}
\newcommand{\algo}{\mathcal{A}}
\newcommand{\DisSize}{\mathbf{S}}
\newcommand{\DisTime}{\mathbf{T}}
\newcommand{\DisFamily}{\mathcal{D}}
\newcommand{\Ibound}{\mathbf{m}}

\newcommand{\Ehit}{\xi}

Let $G$ be a graph and let $(H,\esd)$ be an extended strip decomposition of $G$.
For an edge $e \in E(H)$, let $\triangles(e)$ be the set of all triangles of $H$ that contain $e$.
We define a number of \emph{atoms} as follows. 
For every edge $e = uv \in E(H)$, we define the following atoms:
\begin{align*}
A_e^\bot &= \esd(e) \setminus \left(\esd(e,u) \cup \esd(e, v)\right), &
A_e^u &= \left(\esd(u) \cup \esd(e)\right) \setminus \esd(e, v),\\
A_e^v &= \left(\esd(v) \cup \esd(e)\right) \setminus \esd(e, u),  &
A_e^{uv} &= \esd(u) \cup \esd(v) \cup \esd(e) \cup \bigcup_{T \in \triangles(e)} \esd(T).
\end{align*}
Furthermore, we define an atom $A_v = \esd(v)$ for every $v \in V(H)$ and an atom $A_T = \esd(T)$
for every triangle $T$ in $H$.
A \emph{trivial atom} is an atom $A_v = \esd(v)$ for an isolated vertex $v$ of $H$ with
$A_v$ being a singleton containing an isolated vertex of $G$.

Let $\wei \colon V(G) \to \mathbb{N}$ be a weight function and let $\Csafe, \Cshrink > 0$ be reals.
Let $X \subseteq V(G)$ and let $(H,\esd)$ be an extended strip decomposition of $G-X$.
We say that $(X, (H,\esd))$ is
\begin{itemize}
\item \emph{$\Cshrink$-shrinking} if for every nontrivial atom $A$ of $(H,\esd)$ we have $\wei(A) \leq (1-\Cshrink) \wei(V(G))$;
\item \emph{$\Csafe$-safe} if $\wei(X) \leq \Csafe \cdot \wei(V(G))$ and, furthermore,
  for every nontrivial atom $A$ of $(H,\esd)$ it holds that $\wei(X) \leq \Csafe \cdot \wei(V(G) \setminus A)$;
\item \emph{$(\Csafe,\Cshrink)$-good} if it is both  $\Cshrink$-shrinking and $\Csafe$-safe.
\end{itemize}
For a set $I \subseteq V(G)$, a weight function $\wei_I$ is defined as $\wei_I(v) = \wei(v)$ for every $v \in I$ and $\wei_I(v) = 0$ for every $v \in V(G) \setminus I$.

For approximation schemes, we need the following notion.

\begin{definition}
For a graph $G$ and a weight function $\wei \colon V(G) \to \mathbb{N}$,
a \emph{$(\Csafe,\Cshrink)$-disperser} is a family $\Dd$ such that:
\begin{itemize}
\item every member of $\Dd$ is a pair of the form $(X,(H,\esd))$, where $(H,\esd)$ is an extended strip decomposition of $G-X$; and 
\item for every independent set $I$ in $G$ with $\wei(I) > 0$ there exists $(X,(H,\esd)) \in \Dd$ that is $(\Csafe,\Cshrink)$-good for $G$ and $\wei_I$.
\end{itemize}
\end{definition}

If one is interested in subexponential-time algorithms, it suffices to 
consider the following simpler notion that considers only uniform weights.
\begin{definition}
For a constant $\Ehit \in (0,1)$ and a graph $G$, a \emph{$\Ehit$-uniform disperser} is a pair $(X,(H,\esd))$, where $X \subseteq V(G)$ and $(H,\esd)$ is an extended strip
decomposition of $G-X$ such that 
$$|X| \leq |V(G)|^{-\Ehit}\cdot |V(G)\setminus A|\qquad\textrm{and}\qquad |A| \leq |V(G)|-|V(G)|^\Ehit\qquad \textrm{for every atom }A\textrm{ of }(H,\esd).$$
\end{definition}

\subsection{Intuition}
The main result of this section is that an algorithm producing dispersers with good parameters
yields a QPTAS and, similarly, an algorithm producing uniform dispersers with good parameters
yields an exact subexponential-time algorithm. Let us now give some intuition.

Let $G$ be a graph and let $(H,\esd)$ be an extended strip decomposition of $G$. 
Let $A_1$ and $A_2$ be two atoms of $(H,\esd)$. 
We say that \emph{$A_1$ and $A_2$ are conflicting} if they are potentially not disjoint; that is, for every $e = uv \in E(H)$
\begin{enumerate}[(i)]
\item $A_e^\bot$, $A_e^u$, $A_e^v$, and $A_e^{uv}$ are pairwise in conflict;\label{p:cf:1}
\item both $A_e^u$ and $A_e^{uv}$ conflict with $A_u$ and both $A_e^v$ and $A_e^{uv}$ conflict with $A_v$;\label{p:cf:2}
\item $A_e^{uv}$ and $A_e^u$ conflicts with $A_{e'}^{uv'}$ and $A_{e'}^u$ for every edge $e' = uv' \in E(H)$ incident with $u$, and similarly for the $v$ endpoint; and\label{p:cf:3}
\item $A_e^{uv}$ and $A_T$ are in conflict for every $T \in \triangles(e)$.\label{p:cf:4}
\end{enumerate}
Observe that if $A_1$ and $A_2$ are not conflicting then not only $A_1 \cap A_2 = \emptyset$ but also $E(A_1,A_2) = \emptyset$.
Informally, two atoms $A_1$ and $A_2$ are not conflicting if and only if the definition of the extended strip decomposition
ensures that they are disjoint and there is no edge of $G$ between $A_1$ and $A_2$.
A family $\mathcal{A}$ of atoms of $(H,\esd)$ is \emph{independent} if every two distinct elements of $\mathcal{A}$ are not conflicting. 

For an independent set $I$ in $G$, we define the following family $\mathcal{A}_I$ of atoms of $(H,\esd)$:
\begin{itemize}
\item $A_e^{uv}$ for every $e = uv \in E(H)$ with $I \cap \esd(e,u) \neq \emptyset$ and $I \cap \esd(e,v) \neq \emptyset$,
\item $A_e^u$ for every $e = uv \in E(H)$ with $I \cap \esd(e,u) \neq \emptyset$ but
$I \cap \esd(e,v) = \emptyset$,
\item $A_e^v$ for every $e = uv \in E(H)$ with $I \cap \esd(e,v) \neq \emptyset$ but
$I \cap \esd(e,u) = \emptyset$,
\item $A_e^\bot$ for every $e = uv \in E(H)$ with $I \cap (\esd(e,u) \cup \esd(e, v)) = \emptyset$,
\item $A_v$ for every $v \in V(H)$ such that for every $e$ incident with $v$ we have
$I \cap \esd(e, v) = \emptyset$,
\item $A_T$ for every triangle $T$ in $H$ such that for all edges $e = uv$ of $T$ we have
$I \cap \esd(e, u) = \emptyset$ or $I \cap \esd(e, v) = \emptyset$.
\end{itemize}
Observe that for every $v \in V(H)$, $I$ may intersect at most one set $\esd(e,v)$ for $e$ incident with $v$. 
From this, a direct check verifies the following crucial observation:
\begin{claim}\label{cl:dis:partI}
For every independent set $I$ in $G$, the family $\mathcal{A}_I$ is independent and $I \subseteq \bigcup \mathcal{A}_I$.
\end{claim}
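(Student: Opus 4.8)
The plan is to verify the two assertions separately, in each case by elementary casework driven by the partition of $V(G)$ into the sets $\esd(e)$, $\esd(v)$, $\esd(T)$ and by which ``ends'' $\esd(e,\cdot)$ the set $I$ meets. The workhorse throughout is the fact noted just before the claim: since by property~2 the set $\esd(e_1,v)$ is fully adjacent to $\esd(e_2,v)$ whenever $e_1\neq e_2$ are edges of $H$ incident with a common vertex $v$, and these ends lie in the disjoint parts $\esd(e_1)$, $\esd(e_2)$, an independent set $I$ can intersect at most one set $\esd(e,v)$ among the edges $e$ incident with $v$.

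For $I\subseteq\bigcup\mathcal{A}_I$, I would pick $x\in I$ and distinguish three cases according to which part of the partition contains $x$. If $x\in\esd(e)$ for $e=uv\in E(H)$, I inspect whether $I$ meets $\esd(e,u)$ and/or $\esd(e,v)$; in each of the four combinations exactly one of $A_e^{uv},A_e^u,A_e^v,A_e^\bot$ is placed in $\mathcal{A}_I$, and the definition of that atom together with the fact that $x\in I$ avoids any end not met by $I$ shows $x$ lies in it. If $x\in\esd(v)$ for $v\in V(H)$, then either $I$ meets no set $\esd(e,v)$ with $e$ incident to $v$, so $A_v\in\mathcal{A}_I$ and $x\in A_v=\esd(v)$; or $I$ meets exactly one such set $\esd(e,v)$, in which case one of $A_e^v$ or $A_e^{uv}$ belongs to $\mathcal{A}_I$, and both contain $\esd(v)$. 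If $x\in\esd(T)$ for a triangle $T$ of $H$, then either $A_T\in\mathcal{A}_I$, which happens precisely when $I$ does not meet both ends $\esd(e,u),\esd(e,v)$ of any edge $e=uv$ of $T$, and then $x\in A_T$; or some edge $e=uv$ of $T$ has $I\cap\esd(e,u)\neq\emptyset$ and $I\cap\esd(e,v)\neq\emptyset$, so $A_e^{uv}\in\mathcal{A}_I$, and since $T\in\triangles(e)$ we have $\esd(T)\subseteq A_e^{uv}$, hence $x\in A_e^{uv}$.

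For independence of $\mathcal{A}_I$, I would take two distinct atoms $A_1,A_2\in\mathcal{A}_I$ and rule out each conflict pattern~\ref{p:cf:1}--\ref{p:cf:4}. Pattern~\ref{p:cf:1} cannot occur because the construction of $\mathcal{A}_I$ puts at most one atom from $\{A_e^\bot,A_e^u,A_e^v,A_e^{uv}\}$ into $\mathcal{A}_I$ for each edge $e$. For patterns~\ref{p:cf:2} and~\ref{p:cf:3}, observe that membership of $A_e^u$ or $A_e^{uv}$ in $\mathcal{A}_I$ forces $I\cap\esd(e,u)\neq\emptyset$: this contradicts $A_u\in\mathcal{A}_I$ (which requires $I$ to miss every $\esd(e',u)$), settling~\ref{p:cf:2}, and it contradicts simultaneous membership of $A_{e'}^{uv'}$ or $A_{e'}^u$ for a different edge $e'$ incident to $u$ via the ``at most one end'' fact, settling~\ref{p:cf:3}; the $v$-endpoint variants are symmetric. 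For pattern~\ref{p:cf:4}, if $A_e^{uv}\in\mathcal{A}_I$ then $I$ meets both $\esd(e,u)$ and $\esd(e,v)$, so for every $T\in\triangles(e)$ the atom $A_T$ violates its defining membership condition at the edge $e$ of $T$, hence $A_T\notin\mathcal{A}_I$.

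There is no real obstacle; the only point one must not overlook is that an atom $A_e^{uv}$ absorbs $\esd(u)$, $\esd(v)$, and every $\esd(T)$ with $T\in\triangles(e)$ — this is exactly what makes the $\esd(v)$ and $\esd(T)$ cases of the covering statement close — together with keeping careful track of which edge-ends $I$ intersects when eliminating the conflict patterns.
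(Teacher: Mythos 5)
Your proof is correct and follows essentially the same approach as the paper: the independence argument rules out the four conflict patterns in exactly the same way, using mutual exclusivity for pattern~\ref{p:cf:1}, the defining condition of $A_v$ for pattern~\ref{p:cf:2}, the ``at most one end per vertex'' observation for pattern~\ref{p:cf:3}, and the negated defining condition of $A_T$ for pattern~\ref{p:cf:4}. The paper's written proof actually only spells out the independence half and leaves the covering $I\subseteq\bigcup\mathcal{A}_I$ as an implicit ``direct check,'' whereas you carry out that check explicitly by casing on which part of the partition contains a given $x\in I$; both halves of your argument are sound.
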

\begin{proof}
We consider the four cases of how the atoms can be conflicting one-by-one.
For Case~\ref{p:cf:1}, observe that for every $e = uv \in E(H)$, the 
conditions for $A_e^\bot$, $A_e^u$, $A_e^v$, $A_e^{uv}$ are mutually exclusive and exactly one of these
atoms is in $\mathcal{A}_I$.
For Case~\ref{p:cf:2}, by definition $A_v \in \mathcal{A}_I$ only if 
$A_e^v, A_e^{uv} \notin \mathcal{A}_I$ for every edge $e = uv$ incident with $v$.

Case~\ref{p:cf:3} is the most interesting: the definition of the extended strip decomposition 
ensures that $\esd(e,v)$ and $\esd(e',v)$ are fully adjacent for two different edges $e,e'$ incident with $v$,
        and thus for every $v \in V(H)$ the independent set 
 $I$ can contain a vertex of at most one set $\esd(e,v)$ over all edges $e$ incident with $v$.
 Consequently, $\mathcal{A}_I$ contains at most one set $A_e^{uv}$ or $A_e^v$ over all edges $e = uv$ incident with $v$.

Finally, for Case~\ref{p:cf:4}, $A_T$ is conflicting only with atoms $A_e^{uv}$ for edges $e = uv$ of $T$,
 but the condition for including $A_T$ into $\mathcal{A}_I$ is a negation of the condition for excluding
any $A_e^{uv}$ for edges $e=uv$ of $T$.
\cqed\end{proof}

In the other direction, if we are given an independent set $I(A) \subseteq A$ for every atom $A \in \mathcal{A}$
of an independent family $\mathcal{A}$ of atoms, then $\bigcup_{A \in \mathcal{A}} I(A)$ is an independent set in $G$.

Thus, one can reduce finding a (good approximation of) maximum-weight independent set in $G$ to finding 
such a (good approximation of) independent set in subgraphs $G[A]$ for atoms $A \in \mathcal{A}_I$, where $I$ is the sought maximum-weight independent set. 
In the definition of a disperser, if one recurses in the above sense on $G-X$ and $(H,\esd)$ for every $(X,(H,\esd))$ in the disperser,
the notion of $\Cshrink$-shrinking ensures that such recursion is of small depth, while the notion
of $\Csafe$-safety ensures that by sacrificing the set $X$ we lose only a small fraction of the optimum at every recursion step. 
In uniform dispersers, the bound on the size of $X$ allows us to branch exhaustively on $X$ in the recursion step; this cost is amortized by the decrease in the size
of graphs considered in the branches.

However, there is one major obstacle to the above outline: we do not know the family $\mathcal{A}_I$. Instead, we can recurse on every atom of $(H,\esd)$.

Then, we need an observation that assembling results from the recursion in the best possible way reduces to a maximum-weight matching problem in an auxiliary graph, in a similar fashion that finding maximum-weight independent set in line graphs corresponds
to finding maximum-weight matching in the preimage graph.

\subsection{Formal statements}
The following definition encompasses the idea that a graph class admits efficiently computable dispersers.

\begin{definition}\label{def:disperser}
Let $\Csafe \in (0,1/2)$ be a real, $\Cshrink \colon \mathbb{N} \to (0,1/2)$ be a nonincreasing function,
and $\DisSize, \DisTime \colon \mathbb{N} \to \mathbb{N}$ be nondecreasing functions.
A hereditary graph class $\Cc$ is called {\em{$(\Csafe,\Cshrink,\DisSize,\DisTime)$-dispersible}} if there exists an algorithm
that, given an $n$-vertex graph $G\in \Cc$ and a weight function $\wei \colon V(G) \to \mathbb{N}$, 
runs in time $\DisTime(n)$ and computes a $(\Csafe, \Cshrink(n))$-disperser for $G$ and $\wei$ of size at most~$\DisSize(n)$.
\end{definition}

The main theorem concerning approximation schemes %
  is the following.

\begin{theorem}\label{thm:disperser}
Let $\Cc$ be a hereditary graph class with the following property: For every $\Csafe\in (0,1/2)$ there exist functions $\Cshrink,\DisSize,\DisTime$
where
$$(\Cshrink(n))^{-1}\in \poly(\log n,\Csafe^{-1})\qquad\textrm{and}\qquad \DisSize(n),\DisTime(n)\in 2^{\poly(\log n,\Csafe^{-1})}$$
and $\Cshrink(n)$ is computable in polynomial time given $\Csafe$ and $n$,
such that $\Cc$ is $(\Csafe,\Cshrink,\DisSize,\DisTime)$-dispersible.
Then \textsc{MWIS} restricted to graphs from $\Cc$ admits a QPTAS.
\end{theorem}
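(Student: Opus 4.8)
The plan is to turn the informal outline from the previous subsection into a recursive approximation algorithm whose recursion tree has quasi-polynomial size. Fix $\Csafe \in (0,1/2)$ small enough in terms of the desired accuracy $\eps$ (say $\Csafe := c \eps$ for a suitable universal constant $c$), and let $\Cshrink, \DisSize, \DisTime$ be the functions promised by the dispersibility hypothesis. The algorithm $\algo(G,\wei)$ proceeds as follows: if $G$ has constantly many vertices, solve the problem by brute force; otherwise, invoke the disperser algorithm to obtain a $(\Csafe,\Cshrink(n))$-disperser $\Dd$ of size $\DisSize(n)$ computable in time $\DisTime(n)$. For each pair $(X,(H,\esd)) \in \Dd$, recurse on every nontrivial atom $A$ of $(H,\esd)$ (which, since $\Cc$ is hereditary, again lies in $\Cc$), obtaining an independent set $I(A) \subseteq A$; trivial atoms are handled directly. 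Then assemble the partial solutions optimally over independent families of atoms, and finally output the best independent set found over all choices of $(X,(H,\esd)) \in \Dd$.

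The key subroutine is the assembly step. Given, for a fixed extended strip decomposition $(H,\esd)$ of $G-X$, a candidate independent set $I(A) \subseteq A$ for every atom $A$, I claim one can in polynomial time compute an independent family $\mathcal{A}$ of atoms maximizing $\sum_{A \in \mathcal{A}} \wei(I(A))$; then $\bigcup_{A \in \mathcal{A}} I(A)$ is an independent set in $G-X$ and hence in $G$. This reduces to maximum-weight matching in an auxiliary graph built on $V(H)$ together with vertices representing triangles of $H$: each edge $e = uv$ of $H$ contributes parallel options $A_e^\bot, A_e^u, A_e^v, A_e^{uv}$, where choosing $A_e^u$ ``uses up'' endpoint $u$, choosing $A_e^{uv}$ uses up both $u$ and $v$ (and blocks the incident triangles), etc. The conflict rules~\eqref{p:cf:1}--\eqref{p:cf:4} are exactly encoded by a matching constraint, analogously to the line-graph/matching correspondence; I would spell this out but it is routine once the conflict structure is in place. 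By Claim~\ref{cl:dis:partI}, for the true optimum $I^\star$ the family $\mathcal{A}_{I^\star}$ is one feasible choice, so the assembled solution is at least as good as $\sum_{A} \wei(I^\star \cap A) = \wei(I^\star)$ restricted to the chosen decomposition.

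For correctness of the approximation ratio, I track a fixed optimum independent set $I^\star$. By the disperser property there is $(X,(H,\esd)) \in \Dd$ that is $(\Csafe,\Cshrink(n))$-good for $\wei_{I^\star}$. Safety gives $\wei(I^\star \cap X) \le \Csafe \wei(I^\star)$ and, for every nontrivial atom $A$, $\wei(I^\star \setminus A) \ge (1-\Csafe)\wei(I^\star)$; shrinking guarantees $\wei_{I^\star}(A) \le (1-\Cshrink(n))\wei_{I^\star}(V(G))$, i.e. each nontrivial atom in the recursion carries a $(1-\Cshrink(n))$-fraction of the (restricted) weight, so — since $\Cshrink$ is nonincreasing — the recursion depth is $\Oh(\Cshrink(n)^{-1}\log(\wei_{I^\star}(V(G))))$, which after a standard preprocessing to make weights polynomially bounded is $\poly(\log n, \Csafe^{-1})$. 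At each of these depth levels we lose at most a $\Csafe$-fraction of the surviving optimum weight (by discarding $X$), so a union bound / product bound over the levels shows the returned solution has weight at least $(1 - \Csafe)^{\mathrm{depth}} \wei(I^\star) \ge (1 - \poly(\log n,\Csafe^{-1}) \cdot \Csafe)\wei(I^\star)$; rescaling $\Csafe$ down by the appropriate $\poly(\log n, \eps^{-1})$ factor at the start (the hypothesis lets us pick $\Csafe$ freely) yields a genuine $(1-\eps)$-approximation. For the running time, the recursion tree has out-degree at most $\DisSize(n) \cdot (\text{number of atoms}) \le 2^{\poly(\log n, \Csafe^{-1})} \cdot \poly(n)$ and depth $\poly(\log n, \Csafe^{-1})$, so its total size is $2^{\poly(\log n, \Csafe^{-1})}$; multiplying by the per-node cost $\DisTime(n) + \poly(n)$ (disperser computation plus matching) keeps everything in $2^{\poly(\log n, \eps^{-1})}$, a QPTAS.

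The main obstacle I anticipate is not the accounting but the assembly step: verifying that maximizing $\sum_{A\in\mathcal{A}}\wei(I(A))$ over \emph{all} independent families $\mathcal{A}$ of atoms — not just families of the form $\mathcal{A}_I$ — is still solvable by the matching reduction, and that the optimum of this relaxed problem never exceeds what a genuine independent set of $G$ achieves (so that we do not overcount). This requires showing that any independent family of atoms, together with independent sets inside them, glues to an independent set of $G$, which follows from the observation after Claim~\ref{cl:dis:partI} that non-conflicting atoms have no edges between them; and conversely, the matching polytope's integrality must be used to argue the reduction is exact. A secondary technical point is bounding the recursion depth in terms of $\poly(\log n, \Csafe^{-1})$ rather than $\log(\wei(V(G)))$, which is handled by a preliminary weight-bucketing step discarding negligible-weight vertices.
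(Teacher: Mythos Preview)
Your high-level architecture (recursion on atoms, assembly via maximum-weight matching) matches the paper, and your discussion of the assembly step is fine. The genuine gap is in the approximation analysis.

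You bound the total loss by $\Csafe \cdot \mathrm{depth}$ and then hope to ``rescale $\Csafe$ down by the appropriate $\poly(\log n,\eps^{-1})$ factor''. This is circular: the recursion depth is $\Oh(\Cshrink(n)^{-1}\log n)$, and the hypothesis only tells you $\Cshrink(n)^{-1}\in\poly(\log n,\Csafe^{-1})$. Concretely, in every application in the paper one has $\Cshrink(n)=p(\Csafe)$ for a fixed polynomial $p$ (e.g.\ $p(\Csafe)=\Csafe/(4t)$ in the $P_t$-free case), so depth is $\Theta(\Csafe^{-1}\log n)$ and your bound becomes $\Csafe\cdot\Theta(\Csafe^{-1}\log n)=\Theta(\log n)$, independent of $\Csafe$. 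No choice of $\Csafe$ makes this at most $\eps$. More generally, if $\Cshrink^{-1}$ has degree $\geq 1$ in $\Csafe^{-1}$ (which the hypothesis allows and the applications require), your inequality $\Csafe\cdot\poly(\log n,\Csafe^{-1})\leq\eps$ has no solution.

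This is exactly why the definition of $\Csafe$-safe includes the second clause $\wei(X)\leq\Csafe\cdot\wei(V(G)\setminus A)$ for every nontrivial atom $A$ --- you quote it but never use it. The paper exploits it as follows. Follow the ``correct'' branch of the recursion (the $(X_0,(H_0,\esd_0))$ good for $\wei_{I^\star}$) to obtain a marked subtree, and at each marked node let $f(z)$ be the child atom carrying the largest portion of $I^\star$. The second safety clause gives
\[
\wei(I^\star\cap X_0)\ \leq\ \Csafe\cdot\bigl(\wei(I^\star\cap V(G'_z))-\wei(I^\star\cap V(G'_{f(z)}))\bigr),
\]
which \emph{telescopes} along each maximal $f$-path. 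Summing yields total loss at most $\Csafe\cdot\sum_{z\in Z}\wei(I^\star\cap V(G'_z))$, where $Z$ is the set of tops of these paths. Since any non-$f$-child carries at most half the weight of its parent (being not the maximum among disjoint pieces), each vertex of $I^\star$ lies in at most $1+\log\wei(I^\star)$ of the graphs $G'_z$ for $z\in Z$, whence the total loss is at most $\Csafe\cdot(1+\log\wei(I^\star))\cdot\wei(I^\star)$. This bound involves only $\log n$ (after weight preprocessing) and \emph{not} $\Cshrink^{-1}$, so setting $\Csafe=\eps/(1+\log(n^2/\eps))$ works. Your per-level multiplicative bound cannot recover this; the telescoping against the heaviest child is the missing idea.
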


From now on, hereditary classes $\Cc$ satisfying the assumptions of Theorem~\ref{thm:disperser} will be called {\em{QP-dispersible}}.
Thus, Theorem~\ref{thm:disperser} states that \textsc{MWIS} admits a QPTAS on every QP-dispersible class,
while in the next sections we will prove that several classes are indeed QP-dispersible.

The above definitions are suited for all our results, but in some simpler cases we will construct dispersers that have a simpler form.
More precisely, a disperser $\Dd$ is {\em{strong}} if for each $(X,(H,\esd))\in \Dd$, $(H,\esd)$ is the trivial extended strip decomposition of $G-X$. 
Recall that this means that $(H,\esd)$ simply decomposes $G-X$ into connected components:
$H$ is an edgeless graph with vertices mapped bijectively to connected components of $G-X$; then the atoms of $(H,\esd)$ are exactly the connected components of $G-X$.
As for strong dispersers the decomposition $(H,\esd)$ is uniquely determined by $X$, we will somewhat abuse notation and regard strong dispersers as simply families of sets $X$, instead of pairs
of the form $(X,(H,\esd))$. 
Intuitively, a strong disperser for $G$ is simply a family of subsets of vertices such that for every possible weight function $\wei$, some member of the family is a balanced separator for $\wei$ 
that has a small weight by itself.
The notions of QP-dispersibility lifts to {\em{strong QP-dispersibility}} by considering strong dispersers instead of regular ones.

Similarly, uniform dispersers imply subexponential-time algorithms.
\begin{theorem}\label{thm:uniform-disperser}
Let $\Cc$ be a hereditary graph class with the following property:
there exist constants $n_0 > 0$, $\tau > 0$, and $\Ehit \in (0,1)$ and an algorithm that,
given a connected graph $G \in \Cc$ with $n \geq n_0$ vertices
and such that $|N_G[v]| \leq \tau n^{\Ehit}$ for every $v \in V(G)$, outputs in polynomial time a 
$\Ehit$-uniform disperser for $G$.
Then, \textsc{MWIS} restricted to graphs from $\Cc$
admits an algorithm with time complexity $2^{\Oh(n^{1-\Ehit} \log n)}$.
\end{theorem}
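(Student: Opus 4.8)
The plan is a recursion on the input $(G,\wei)$: if $G$ is disconnected we recurse on each connected component and return the sum of the answers (correct since \textsc{MWIS} is additive over components), and if $|V(G)|<n_0$ we solve by brute force; so assume $G$ is connected with $n:=|V(G)|\ge n_0$. Call a vertex $v$ of an $m$-vertex graph \emph{heavy} if $|N[v]|>\tau m^{\Ehit}$. If $G$ has a heavy vertex we perform a degree-reduction step (Phase~1); otherwise $|N_G[v]|\le\tau n^{\Ehit}$ for every $v$ and we invoke the assumed algorithm (Phase~2). I will argue that the whole recursion tree has $2^{\Oh(n^{1-\Ehit}\log n)}$ nodes, each doing polynomial work.

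In Phase~1 the naive branching ``pick a heavy $v$, guess $v\in I$ (delete $N[v]$) or $v\notin I$ (delete $v$)'' is too slow, since the second branch removes a single vertex and can be iterated $\Theta(n)$ times, giving running time $2^{\Theta(n)}$. Instead I branch over the $n{+}1$ possibilities for the vertex $v^\ast$ of the optimum solution $I$ whose closed neighbourhood is largest (the extra possibility ``$I=\emptyset$'' is handled directly). Having fixed $v^\ast$, put $v^\ast$ into the solution, delete $N_G[v^\ast]$ together with every vertex whose closed neighbourhood is strictly larger than that of $v^\ast$ (all such vertices are forbidden from $I$ under the guess), and recurse on the resulting graph $G'$, adding $\wei(v^\ast)$. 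Then $|V(G')|\le n-|N_G[v^\ast]|$, every vertex of $G'$ has closed neighbourhood of size at most $|N_G[v^\ast]|$ in $G'$, and $I\setminus\{v^\ast\}\subseteq V(G')$, so nothing is lost. The crucial consequence is that $D(\cdot):=\max_v|N[v]|$ is non-increasing along every branch of the entire recursion (Phase~2 recurses on induced subgraphs, so it does not increase $D$ either).

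In Phase~2 I run the assumed algorithm on $G$ to obtain a $\Ehit$-uniform disperser $(X,(H,\esd))$ and branch over the at most $2^{|X|}$ subsets $I_X\subseteq X$ (the intended value of $I\cap X$). For independent $I_X$, restrict $(H,\esd)$ to the induced subgraph $G'':=G-X-N_G[I_X]$ by intersecting every set $\esd(\cdot)$ with $V(G'')$; this routinely remains an extended strip decomposition, and its atoms are exactly the atoms of $(H,\esd)$ intersected with $V(G'')$, so each has at most $n-n^{\Ehit}$ vertices (and at most $n-n^{\Ehit}|X|$). Recursively compute a maximum-weight independent set in $G''[A]$ for every atom $A$; since a pairwise non-conflicting family of atoms is pairwise disjoint and mutually anticomplete, Claim~\ref{cl:dis:partI} shows that the maximum weight of an independent set of $G''$ is attained by the union of the per-atom optima over the best non-conflicting family of atoms, and finding that family reduces, as indicated in the intuition of this section, to a maximum-weight matching problem in an auxiliary graph derived from $H$. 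Returning the maximum over $I_X$ of $\wei(I_X)$ plus this value is exact; overall correctness follows by induction on $|V(G)|$, since every recursive call is on a strictly smaller graph and, for the optimum $I$, the branch taking $v^\ast$ to be a maximum-closed-neighbourhood vertex of $I$ (respectively $I_X=I\cap X$) recovers $I$.

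For the running time I analyse an arbitrary root-to-leaf path $P$ of the recursion tree and let $m_1>m_2>\cdots$ be the sizes at the Phase~1 nodes of $P$, in order. If the $(i{+}1)$-st Phase~1 node has size $m_{i+1}$, then the vertex $v^\ast$ guessed at the $i$-th satisfies $|N[v^\ast]|\ge D>\tau m_{i+1}^{\Ehit}$ (where $D$ is the non-increasing bound at the later node, which is heavy) and $|N[v^\ast]|\le m_i-m_{i+1}$; a case split on whether $m_{i+1}\ge m_i/2$ then yields $m_i-m_{i+1}=\Omega(m_i^{\Ehit})$, so $P$ contains $O(n^{1-\Ehit})$ Phase~1 nodes. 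Each Phase~2 node drops the size by at least $(\text{its size})^{\Ehit}$, so $P$ contains $O(n^{1-\Ehit})$ Phase~2 nodes, and telescoping the disperser inequality $|X|\le(\text{size})^{-\Ehit}(\text{size}-|A|)$ against the size drops (a Riemann-sum bound against $\int x^{-\Ehit}\,dx$) gives $\sum_{j\text{ a Phase 2 node}}|X_j|=O(n^{1-\Ehit})$. Since a Phase~1 node has $n{+}1$ children and a Phase~2 node has $2^{|X_j|}\poly(n)$ children (and splitting off connected components contributes only a $\poly(n)$ factor at $O(n^{1-\Ehit})$ additional nodes of $P$), the product of branching factors along $P$ is $2^{O(n^{1-\Ehit}\log n)}$; hence the recursion tree has $2^{O(n^{1-\Ehit}\log n)}$ leaves and, as its depth is $O(n^{1-\Ehit})$, that many nodes in total, giving the claimed $2^{\Oh(n^{1-\Ehit}\log n)}$ bound after summing over components of the original input. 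The hard part will be Phase~1: both choosing the branching so as to avoid a $2^{\Theta(n)}$-time degree reduction, and the accounting above, which must show that each pair of consecutive Phase~1 steps shrinks the graph by a $\Theta(\text{size}^{\Ehit})$ amount even though a single Phase~1 step may delete only two vertices.
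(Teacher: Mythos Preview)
Your proof is correct, but Phase~1 is an unnecessary detour built on a false premise. You assert that the naive two-way branching on a heavy vertex $v$ (``delete $v$'' versus ``delete $N[v]$'') is too slow because the single-vertex branch can be iterated $\Theta(n)$ times, yielding $2^{\Theta(n)}$ time. This is wrong: the branching recurrence $T(n)\le T(n-1)+T(n-\tau n^{\Ehit})$ is in fact solved by $T(n)\le e^{Cn^{1-\Ehit}(1+\ln n)}$ for $C$ large enough depending on $\tau,\Ehit$. Writing $g(n)=Cn^{1-\Ehit}(1+\ln n)$, one has $g(n)-g(n-1)\approx C(1-\Ehit)n^{-\Ehit}\ln n$ and $g(n)-g(n-\tau n^{\Ehit})\gtrsim C(1-\Ehit)\tau\ln n$, so the two terms are at most $e^{g(n)}(1-C(1-\Ehit)n^{-\Ehit}\ln n)$ and $e^{g(n)}\cdot n^{-C(1-\Ehit)\tau}$ respectively, which sum to at most $e^{g(n)}$ once $C(1-\Ehit)\tau\ge\Ehit$. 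The paper uses exactly this naive branching and dispatches it as ``standard calculations''.

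Your alternative---guessing the vertex $v^\ast\in I$ of largest closed neighbourhood and simultaneously discarding all higher-degree vertices---is correct, and the monotonicity of $D(\cdot)$ together with your telescoping argument does yield $O(n^{1-\Ehit})$ Phase~1 nodes per root-to-leaf path. Your path-based accounting (bounding leaves by the maximum product of branching factors along a path, the Riemann-sum bound $\sum_j|X_j|\le\int_0^n x^{-\Ehit}\,dx=O(n^{1-\Ehit})$) is also valid. But it is considerably more intricate than needed: the paper handles both the heavy-vertex branching and the disperser step by a single direct induction on $n$ against the bound $e^{Cn^{1-\Ehit}(1+\ln n)}$, which is shorter. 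Your Phase~2 is essentially identical to the paper's.
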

In Theorem~\ref{thm:uniform-disperser}, the constant hidden in the big-$\Oh$ notation may depend on $n_0$, $\tau$, and $\Ehit$. 
A hereditary graph class $\Cc$ satisfying the assumptions of Theorem~\ref{thm:uniform-disperser} for $n_0$, $\tau$, and $\Ehit$
is called \emph{$\Ehit$-uniformly dispersible}. 

The rest of this section is devoted to the proofs of Theorems~\ref{thm:disperser}
and~\ref{thm:uniform-disperser}.

\subsection{Using maximum-weight matching}\label{ss:matching}
Assume that a graph $G$ is equipped with a weight function $\wei$ and
an extended strip decomposition $(H,\esd)$. 
Furthermore, for every atom $A$ of $(H,\esd)$ we are given an independent set $I(A) \subseteq A$. 

Construct a graph $H'$ as follows: start with the graph $H$ and then, for every edge $e = uv$
of $H$, add a new vertex $x_e$ and edges $x_eu$ and $x_ev$. 
Furthermore, define weight function $\wei'$ on $E(H')$ as follows:
\begin{align*}
\wei'(x_e u) &= \wei(I(A_e^u)) - \wei(I(A_u)) - \wei(I(A_e^\bot)),\\
\wei'(x_e v) &= \wei(I(A_e^v)) - \wei(I(A_v)) - \wei(I(A_e^\bot)),\\
\wei'(e) &= \wei(I(A_e^{uv})) - \wei(I(A_u)) - \wei(I(A_v)) - \wei(I(A_e^\bot)) - \sum_{T \in \triangles(e)} \wei(I(A_T)).
\end{align*}
We claim that the problem of finding maximum-weight matching in $(H',\wei')$ is closely related
to the problem of finding \textsc{MWIS} in $(G,\wei)$.
Let
$$a = \sum_{v \in V(H)} \wei(I(A_v)) + \sum_{e \in E(H)} \wei(I(A_e^\bot)) + \sum_{T \in \triangles(H)} \wei(I(A_T)).$$
For a family $\mathcal{A}$ of atoms of $(H,\esd)$, we define $M(\mathcal{A}) \subseteq E(H')$ as follows.
For every $e = uv \in E(H)$, we insert into $M(\mathcal{A})$:
\begin{itemize}
\item the edge $e$ if $A_e^{uv} \in \mathcal{A}$,
\item the edge $x_e u$ if $A_e^u \in \mathcal{A}$, and
\item the edge $x_e v$ if $A_e^v \in \mathcal{A}$. 
\end{itemize}
A direct check shows the following.
\begin{claim}\label{cl:dis:atoms2matching}
If $\mathcal{A}$ is an independent family of atoms of $(H,\esd)$, then $M(\mathcal{A})$ is a matching in $H'$.
Furthermore,
 \begin{equation}\label{eq:dis:atoms2matching}
 \wei'(M(\mathcal{A})) \geq -a + \sum_{A \in \mathcal{A}} \wei(I(A)).
 \end{equation}
\end{claim}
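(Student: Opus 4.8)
The plan is to verify the two assertions of the claim separately, the first (that $M(\mathcal{A})$ is a matching) by a direct degree count in $H'$, and the second (the weight bound) by splitting $\mathcal{A}$ into two parts and running a careful counting argument on the correction terms that appear in $\wei'$.

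First I would handle the matching assertion. The edges of $M(\mathcal{A})$ are, for each $e = uv \in E(H)$, one of $e$, $x_e u$, $x_e v$, inserted according to whether $A_e^{uv}$, $A_e^u$, or $A_e^v$ lies in $\mathcal{A}$. The auxiliary vertex $x_e$ is incident in $H'$ only to $e$, $x_e u$, $x_e v$, and since $A_e^\bot, A_e^u, A_e^v, A_e^{uv}$ are pairwise conflicting by item~\ref{p:cf:1}, the independent family $\mathcal{A}$ contains at most one of them; hence at most one of the three edges at $x_e$ is taken, so each $x_e$ is covered at most once. For an original vertex $v \in V(H)$, the edges of $M(\mathcal{A})$ incident to $v$ are precisely those produced by chosen atoms of the form $A_e^{uv}$ or $A_e^v$ over edges $e = uv$ incident to $v$; by item~\ref{p:cf:3} all such atoms are pairwise conflicting, so $\mathcal{A}$ contains at most one of them and $v$ is covered at most once. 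Thus $M(\mathcal{A})$ is a matching.

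Next, for the weight bound, I would partition $\mathcal{A} = \mathcal{A}_1 \cup \mathcal{A}_2$, where $\mathcal{A}_1$ consists of the atoms of the forms $A_e^u$, $A_e^v$, $A_e^{uv}$ that actually contribute edges to $M(\mathcal{A})$, and $\mathcal{A}_2$ is the remainder, which by construction consists only of atoms of the forms $A_e^\bot$, $A_v$, $A_T$. Summing the three displayed definitions of $\wei'$ over the edges of $M(\mathcal{A})$ yields
$$\wei'(M(\mathcal{A})) = \sum_{A \in \mathcal{A}_1} \wei(I(A)) - B,$$
where $B$ is the sum of all subtracted correction terms, each of the form $\wei(I(A_v))$, $\wei(I(A_e^\bot))$, or $\wei(I(A_T))$. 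Since $\sum_{A \in \mathcal{A}} \wei(I(A)) = \sum_{A \in \mathcal{A}_1} \wei(I(A)) + \sum_{A \in \mathcal{A}_2} \wei(I(A))$, the target inequality~\eqref{eq:dis:atoms2matching} is equivalent to $B + \sum_{A \in \mathcal{A}_2} \wei(I(A)) \leq a$. As all weights are nonnegative and $a$ counts each term $\wei(I(A_v))$, $\wei(I(A_e^\bot))$, $\wei(I(A_T))$ exactly once, it suffices to show that each such term is counted at most once in $B + \sum_{A \in \mathcal{A}_2}\wei(I(A))$.

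This last step is where the conflict conditions do the work, and I expect it to be the main obstacle, since it requires pairing each family of correction terms with exactly the right item of the conflict definition and not overlooking the triangle case. For a fixed $v \in V(H)$, occurrences of $\wei(I(A_v))$ inside $B$ arise only from chosen atoms $A_e^{uv}$ or $A_e^v$ with $v \in e$, and item~\ref{p:cf:3} forces at most one such atom into $\mathcal{A}$; moreover item~\ref{p:cf:2} guarantees that if $A_v \in \mathcal{A}_2$ then no such atom is chosen, so $\wei(I(A_v))$ is counted at most once overall. Similarly, $\wei(I(A_e^\bot))$ occurs in $B$ once per chosen atom among $A_e^u, A_e^v, A_e^{uv}$, which are mutually exclusive with each other and with $A_e^\bot$ by item~\ref{p:cf:1}, so again at most once overall. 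Finally, $\wei(I(A_T))$ occurs in $B$ once for each edge $e$ of the triangle $T$ with $A_e^{uv}$ chosen; but any two edges of $T$ share a vertex, so item~\ref{p:cf:3} permits at most one of them to have its $A_e^{uv}$ chosen, and item~\ref{p:cf:4} ensures compatibility with the possibility $A_T \in \mathcal{A}_2$, giving at most one occurrence overall. Combining these three bounds yields $B + \sum_{A \in \mathcal{A}_2}\wei(I(A)) \leq a$, and hence~\eqref{eq:dis:atoms2matching}.
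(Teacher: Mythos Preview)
Your proof is correct and follows essentially the same approach as the paper: a degree check at each vertex of $H'$ for the matching assertion, and a term-by-term coefficient comparison for the weight bound (you phrase the latter as ``each correction term is counted at most once in $B + \sum_{\mathcal{A}_2}$'', while the paper phrases it as ``for every atom $A$, the coefficient of $\wei(I(A))$ on the left is at least that on the right'', but these are the same argument). One harmless slip: $x_e$ is incident in $H'$ only to the two edges $x_e u$ and $x_e v$, not to $e$ itself (the edge $e$ joins $u$ and $v$), so there are two edges at $x_e$, not three; your conclusion that at most one is chosen is unaffected.
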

\begin{proof}
First we verify that $M(\mathcal{A})$ is a matching in $H'$. 
From the definition of independent set of atoms we infer that for every $e = uv \in E(H)$ at most 
one of the edges $e$, $x_eu$, or $x_ev$ belongs to $M(\mathcal{A})$. Furthermore, if
$x_eu$ or $e$ belongs to $M(\mathcal{A})$, we have $A_e^{uv}$ or $A_e^u$ belonging to $\mathcal{A}$, 
from which we infer that neither $A_u$ nor $A_{e'}^{uv'}$ nor $A_{e'}^u$ belongs to $\mathcal{A}$ for any other $e' = uv' \in E(H)$
incident with $u$ in $H$.
In particular, neither $e'$ nor $x_{e'}u$ belongs to $M(\mathcal{A})$.
Also, if $A_e^{uv} \in \mathcal{A}$ and $T \in \triangles(e)$, then $A_T \notin \mathcal{A}$ 
and $A_{e'}^{u'v'} \notin \mathcal{A}$ for every other edge $e' = u'v'$ of $T$.

For the weight bound, we consider their contribution to the left and right hand side of~\eqref{eq:dis:atoms2matching} one-by-one.
\begin{itemize}
\item for every atom $A$ of the form $A_e^{uv}$, $A_e^u$, or $A_e^v$,
 \begin{itemize} 
\item if $A \in \mathcal{A}$, then  the term $\wei(I(A))$ 
appears once on the left hand side and once on the right hand side,
\item if $A \notin \mathcal{A}$, then the term $\wei(I(A))$ does 
not appear at all in~\eqref{eq:dis:atoms2matching};
\end{itemize}
\item for every $e =uv \in E(H)$, 
  \begin{itemize}
  \item if $A_e^\bot \in \mathcal{A}$, then the term $\wei(I(A_e^\bot))$ does not appear
on the left hand side (as then $A_e^u,A_e^v,A_e^{uv} \notin \mathcal{A}$)
  and its appearances on right hand side in $a$ and $\sum_{A \in \mathcal{A}} \wei(I(A))$ cancel out,
  \item if $A_e^\bot \notin \mathcal{A}$, then the term $\wei(I(A_e^\bot))$ appears with $-1$ coefficient
  on the right hand side (in the $-a$ term), while on the left hand side it appears with $-1$ coefficient
  if $A_e^u$, $A_e^v$, or $A_e^{uv}$ belongs to $\mathcal{A}$, and does not appear at all otherwise. 
  \end{itemize}
\item for every $v \in V(H)$, 
  \begin{itemize}
  \item if $A_v \in \mathcal{A}$, 
  then the appearances of $\wei(I(A_v))$ on the right hand side cancel out,
  while this term does not appear on the left hand side (the definition of independence ensures that
      no atom $A_e^v$ nor $A_e^{uv}$ is in $\mathcal{A}$ for any edge $e = uv$ incident with $v$),
  \item if $A_v \notin \mathcal{A}$, then $\wei(I(A_v))$ appears with $-1$ coefficient on the right hand side,
  while the independence of $\mathcal{A}$ implies that for at most one edge $e = uv$ incident with $v$
  the atom $A_e^v$ or $A_e^{uv}$ belongs to $\mathcal{A}$ and, consequently, $\wei(I(A_v))$ either does
  not appear on the left hand side or appears once with $-1$ coefficient;
  \end{itemize}
\item for every triangle $T$ in $H$,
  \begin{itemize}
  \item if $A_T \in \mathcal{A}$, 
  then the appearances of $\wei(I(A_T))$ on the right hand side cancel out,
  while this term does not appear on the left hand side (the definition of independence ensures that
      no atom $A_e^{uv}$ is in $\mathcal{A}$ for any edge $e = uv$ of $T$),
  \item if $A_T \notin \mathcal{A}$, then $\wei(I(A_T))$ appears with $-1$ coefficient on the right hand side,
  while the independence of $\mathcal{A}$ implies that for at most one edge $e = uv$ of $T$
  the atom $A_e^{uv}$ belongs to $\mathcal{A}$ and, consequently, $\wei(I(A_T))$ either does
  not appear on the left hand side or appears once with $-1$ coefficient.
  \end{itemize}
\end{itemize}
Thus, we have shown that for every atom $A$, the coefficient in front of $\wei(I(A))$ on the left hand side
of~\eqref{eq:dis:atoms2matching} is not smaller than the coefficient on the right hand side. This finishes the proof of the claim.
\cqed\end{proof}
In the other direction, for $M \subseteq E(H')$ define a family $\mathcal{A}(M)$ of atoms of $G$ as follows.
\begin{itemize}
\item For every edge $e = uv \in E(H) \cap M$, insert $A_e^{uv}$ into $\mathcal{A}(M)$.
\item For every edge $x_e u \in M \setminus E(H)$, insert $A_e^u$ into $\mathcal{A}(M)$.
\item For every edge $e = uv \in E(H)$ such that neither $e$, $x_e u$, nor $x_e v$ is in $H$, insert $A_e^\bot$ into $\mathcal{A}(M)$.
\item For every vertex $v \in V(H)$ such that none of the edges of $M$ is incident with $v$, insert $A_v$ into $\mathcal{A}(M)$.
\item For every triangle $T$ in $H$ such that none of the edges of $H$ is in $M$, insert $A_T$ into $\mathcal{A}(M)$.
\end{itemize}
Again, a direct check shows the following.
\begin{claim}\label{cl:dis:matching2atoms}
If $M$ is a matching in $H'$, then $\mathcal{A}(M)$ is an independent family of atoms of $(H,\esd)$.
Furthermore,
\begin{equation}\label{eq:dis:matching2atoms}
\sum_{A \in \mathcal{A}(M)} \wei(I(A)) = a + \wei'(M).
\end{equation}
\end{claim}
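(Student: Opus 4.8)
The plan is to mirror the accounting behind Claim~\ref{cl:dis:atoms2matching}, exploiting that $\mathcal{A}(M)$ is the \emph{saturated} family associated with the matching $M$: it contains every atom of the form $A_v$, $A_e^\bot$, or $A_T$ that it is permitted to contain.

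First I would check that $\mathcal{A}(M)$ is an independent family of atoms by inspecting the four conflict types. Since $M$ is a matching in $H'$, for every $e = uv$ at most one of $e$, $x_e u$, $x_e v$ lies in $M$ (these three edges pairwise share an endpoint in $H'$), so exactly one of $A_e^{uv}$, $A_e^u$, $A_e^v$, $A_e^\bot$ is placed in $\mathcal{A}(M)$; this handles type~\ref{p:cf:1}. For type~\ref{p:cf:2}, $A_v$ enters $\mathcal{A}(M)$ only when no edge of $M$ meets $v$, hence no $e = uv$ and no $x_e v$ is in $M$, so no $A_e^v$ or $A_e^{uv}$ is in $\mathcal{A}(M)$. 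For type~\ref{p:cf:3}, at most one edge of $M$ is incident with $u$, so $\mathcal{A}(M)$ contains at most one atom among $A_e^{uv}$ and $A_e^u$ over all edges $e$ incident with $u$ (and symmetrically for $v$). For type~\ref{p:cf:4}, $A_T$ enters $\mathcal{A}(M)$ only if no edge of the triangle $T$ lies in $M$, so $A_e^{uv} \notin \mathcal{A}(M)$ for every $e \in E(T)$.

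For the weight identity~\eqref{eq:dis:matching2atoms} I would first note that $M(\mathcal{A}(M)) = M$, which is immediate from unwinding the two definitions. Then I would revisit the term-by-term argument in the proof of Claim~\ref{cl:dis:atoms2matching}, now applied to $\mathcal{A} := \mathcal{A}(M)$. That argument shows, for each atom $A$, that the coefficient of $\wei(I(A))$ in $\wei'(M)$ is at least its coefficient in $-a + \sum_{A} \wei(I(A))$, and a careful reading reveals that a strict gap can occur only for an atom $A_v$, $A_e^\bot$, or $A_T$ that is \emph{absent} from $\mathcal{A}$ while the corresponding vertex, edge, or triangle is left unused by $M(\mathcal{A})$. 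By construction $\mathcal{A}(M)$ contains every such atom whenever it legally can, so all these inequalities collapse to equalities and we obtain $\wei'(M) = -a + \sum_{A \in \mathcal{A}(M)} \wei(I(A))$, that is, exactly~\eqref{eq:dis:matching2atoms}. Alternatively, one can run the case analysis from scratch: dispatch each $e = uv$ on which (if any) of $e, x_e u, x_e v$ is in $M$, each $v \in V(H)$ on whether $v$ is matched, and each triangle $T$ on whether one of its edges is matched, and verify in each case that the contributions to the two sides agree; the mechanism is the pairwise cancellation of $\wei(I(A_u))$, $\wei(I(A_v))$, $\wei(I(A_e^\bot))$, and $\sum_{T \in \triangles(e)} \wei(I(A_T))$ between the global term $a$ and the edge weights $\wei'(\cdot)$.

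I do not anticipate a real obstacle: the statement is the mechanical converse of Claim~\ref{cl:dis:atoms2matching}. The one place that warrants attention is the triangle bookkeeping, where one must use that a matching meets each triangle of $H$ in at most one edge, so each term $\wei(I(A_T))$ subtracted inside some $\wei'(e)$ is subtracted exactly once and is balanced by its single $+1$ appearance inside $a$; an analogous remark applies to $\wei(I(A_v))$ at matched vertices.
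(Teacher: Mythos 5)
Your proposal is correct and uses essentially the same bookkeeping as the paper: the independence check is the same four-case analysis, and the weight identity is verified by the same per-atom contribution accounting (the paper does it directly from scratch; your primary route reuses the per-atom inequality from Claim~\ref{cl:dis:atoms2matching} together with $M(\mathcal{A}(M))=M$ and observes that $\mathcal{A}(M)$ is saturated exactly where the inequality could be strict, which is a legitimate shortcut but the same underlying mechanism).
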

\begin{proof}
To show that $\mathcal{A}(M)$ is independent, we consider the cases how two atoms can be conflicting one-by-one.
For Case~\ref{p:cf:1}, since at most one edge $e$, $x_eu$, $x_ev$ for $e=uv \in E(H)$ belongs to $M$,
we have that exactly one of the atoms $A_e^\bot$, $A_e^u$, $A_e^v$, $A_e^{uv}$ belongs to $\mathcal{A}(M)$.
For Case~\ref{p:cf:2}, we insert $A_v$ into $\mathcal{A}(M)$ only if none of the edges of $M$ is incident with $v$,
    which in particular implies that neither $A_e^v$ nor $A_e^{uv}$ is in $\mathcal{A}(M)$ for any edge $e = uv \in E(H)$
    incident with $v$.
For Case~\ref{p:cf:3}, since $M$ is a matching, for every $u \in V(H)$ and two distinct edges $e=uv$ and $e'=uv'$
incident with $u$ in $H$, at most one of the edges $e$, $e'$, $x_eu$, and $x_{e'}u$ belong to $M$, and thus
at most one of the atoms $A_e^u$, $A_e^{uv}$, $A_{e'}^u$, and $A_{e'}^{uv'}$ belong to $\mathcal{A}(M)$.
Finally, for Case~\ref{p:cf:4}, if $A_T \in \mathcal{A}(M)$, then none of the edges of $T$ are in $M$
and thus no atom $A_e^{uv}$ for $e=uv$ of $T$ is in $\mathcal{A}(M)$.

For the weight bound, we consider atoms and their contribution to~\eqref{eq:dis:matching2atoms} one-by-one.
\begin{itemize}
\item for every atom $A_e^{uv}$ for $e = uv \in E(H)$,
 \begin{itemize} 
 \item if $e \in M$, then the term $\wei(I(A_e^{uv}))$ appears once 
 on the left hand side of~\eqref{eq:dis:matching2atoms} (as $A_e^{uv} \in \mathcal{A}(M)$)
 and once on the right hand side (as a part of $\wei'(e)$),
 \item if $e \notin M$, then the term $\wei(I(A_e^{uv}))$ does not appear at all in~\eqref{eq:dis:matching2atoms};
\end{itemize}
\item for every atom $A_e^{u}$ for $e = uv \in E(H)$,
 \begin{itemize} 
 \item if $x_eu \in M$, then the term $\wei(I(A_e^{u}))$ appears once 
 on the left hand side of~\eqref{eq:dis:matching2atoms} (as $A_e^{u} \in \mathcal{A}(M)$)
 and once on the right hand side (as a part of $\wei'(x_e u)$),
 \item if $e \notin M$, then the term $\wei(I(A_e^{u}))$ does not appear at all in~\eqref{eq:dis:matching2atoms};
\end{itemize}
\item for every atom $A_e^\bot$ for $e =uv \in E(H)$, 
  \begin{itemize}
  \item if neither of the edges $x_eu$, $x_ev$, or $e$ belongs to $M$, then
   $A_e^\bot \in \mathcal{A}(M)$ and the term $\wei(I(A_e^\bot))$ appears once on the left hand side of~\eqref{eq:dis:matching2atoms},
   while appearing once on the right hand side (once in $a$ and not appearing in $\wei'(M)$),
  \item if one of the edges $x_eu$, $x_ev$, or $e$ belongs to $M$, then the corresponding
  atom $A$ being $A_e^u$, $A_e^v$, or $A_e^{uv}$, respectively, belongs to $\mathcal{A}(M)$,
  and the term $\wei(I(A_e^\bot))$ does not appear on the left hand side
  while its appearances on the right hand side cancel out with the coefficient $+1$ in the term $a$
  and coefficient $-1$ in the term $\wei'(x_eu)$, $\wei'(x_ev)$, or $\wei(e)$, respectively;
  \end{itemize}
\item for every atom $A_v$ for $v \in V(H)$, 
  \begin{itemize}
  \item if there is an edge of $M$ incident with $v$, say $x_ev$ or $e$ for some $e = uv \in E(H)$,
  then $\wei(I(A_v))$ does not appear on the left hand side of~\eqref{eq:dis:matching2atoms},
  while the appearances of $\wei(I(A_v))$ on the right hand side cancel out
  with the coefficient $+1$ in the term $a$ and coefficient $-1$ in the term $\wei'(x_ev)$ or $\wei'(e)$, respectively,
  \item if there is no edge of $M$ incident with $v$, then
  $A_v \in \mathcal{A}(M)$ and term $\wei(I(A_v))$ appears once on the left hand side,
  while it appears once in $a$ on the right hand side and does not appear in $\wei'(M)$;
  \end{itemize}
\item for every atom $A_T$ for a triangle $T$ in $H$,
  \begin{itemize}
  \item if there is an edge $e$ of $T$ in $M$, then 
  $\wei(I(A_T))$ does not appear on the left hand side of~\eqref{eq:dis:matching2atoms},
  while the appearances of $\wei(I(A_T))$ on the right hand side cancel out
  with the coefficient $+1$ in the term $a$ and coefficient $-1$ in the term $\wei'(e)$,
  \item if no edges of $T$ belong to $M$, 
  then $A_T \in \mathcal{A}(M)$ and the term $\wei(I(A_T))$ appears once on the left hand side, 
  while on the right hand side it appears once in $a$ and does not appear in $\wei'(M)$.
  \end{itemize}
\end{itemize}
Thus, we have shown that for every atom $A$, the coefficient in front of $\wei(I(A))$ on the left hand side
of~\eqref{eq:dis:matching2atoms} is equal to the one on the right hand side. This finishes the proof of the claim.
\cqed\end{proof}

\subsection{Proof of Theorem~\ref{thm:disperser}}

The algorithm of Theorem~\ref{thm:disperser} is a standard recursive divide-and-conquer procedure.
Let $G \in \Cc$ be an input graph and $\wei$ be a weight function.
Fix an accuracy constant $\eps > 0$; w.l.o.g. assume that $1/\eps$ is an integer.

Since we are aiming at an approximation algorithm, we can limit the stretch of the weights value. 
The problem is trivial if $\wei(v) = 0$ for every $v \in V(G)$, so assume otherwise.
First, rescale the weight function $\wei$ such that $\max_{v \in V(G)} \wei(v) = n/\eps$ (allowing rational values of weights).
Second, round each weight down to the nearest integer value; since there exists an independent set in $G$ of weight at least $n/\eps$ (take the vertex with maximum weight), this decreases
the weight of the maximum-weight independent set by a factor of at least $(1-n \cdot \eps/n) = (1-\eps)$. 
Third, discard all vertices of $G$ of weight $0$. Consequently, we can assume that on input
the values of $\wei$ are integers within range~$[1,n/\eps]$.

Initially, we set up an upper bound $\Ibound := n^2/\eps$ on the weight of any independent set
in $G$ and $\wei$ and fix $\Csafe := \eps / (1+\log(n^2 / \eps))$.
In a recursive call, we are given an induced subgraph $G'$ of $G$ with the goal to output an independent set $I'$ in $G'$ (that, as we will prove, will be a good approximation).
We also pass to a recursive call an upper bound $\Ibound'$ on the weight of the sought independent set.

In the base of the recursion, if $G'$ is edgeless, then we return $I' = V(G')$.
Also, if $\Ibound' < 1$, then we return $I' = \emptyset$.
In the recursive step, we use the fact that $\Cc$ is QP-dispersible: for the parameter $\Csafe$ fixed above,
   there are functions $\Cshrink,\DisSize,\DisTime$ with
$$(\Cshrink(x))^{-1}\in \poly(\log x,\varepsilon^{-1})\qquad\textrm{and}\qquad \DisSize(x),\DisTime(x)\in 2^{\poly(\log x,\varepsilon^{-1})}$$
such that $\Cc$ is $(\Csafe,\Cshrink,\DisSize,\DisTime)$-dispersible.
We compute a $(\Csafe, \Cshrink(|V(G')|))$-disperser $\DisFamily$ for $(G',\wei|_{V(G')})$.

For every $(X,(H,\esd)) \in \DisFamily$, we recurse on every atom $A$ of $(H,\esd)$, passing an upper bound of $\Ibound' \cdot (1-\Cshrink(|V(G')|))$, obtaining an independent set $I(A)$.
As explained in Section~\ref{ss:matching}, we construct the graph $H'$ from $H$ and weight function $\wei'$ on $E(H')$ using
independent sets $I(A)$. 
We find a matching $M$ in $H'$ with maximum weight with respect to $\wei'$. 
We define $I_{(X,(H,\esd))} = \bigcup_{A \in \mathcal{A}(M)} I(A)$.
Finally, we return the produced independent set $I_{(X,(H,\esd))}$ of maximum weight among all elements $(X,(H,\esd)) \in \DisFamily$.

\paragraph{Running time bound.}
Since $\Cshrink$ is a nonincreasing function, $\Ibound'$ drops below $1$ at recursion depth $\Oh((\Cshrink(n))^{-1} \log (n^2/\eps))$. 
Since the sets $\esd(e)$, $\esd(v)$, and $\esd(T)$ are pairwise disjoint, there are at most $5n$ nonempty atoms in every $(H,\esd)$ for $(X,(H,\esd)) \in \DisFamily$. 
Consequently, the recursion tree has size bounded by
$$\left( \DisSize(n) \cdot 5n \right)^{\Oh((\Cshrink(n))^{-1} \log(n^2/\eps))}.$$
At every step, we spend $\DisTime(n)$ time to compute $\DisFamily$, polynomial in $n$ time to compute $\Cshrink(n)$, and $\DisSize(n) \cdot n^{\Oh(1)}$ time to handle simple manipulations
of $\DisFamily$ and the find maximum-weight matchings in $H'$. 
Hence, the algorithm runs in time bounded by an exponential function
of a polynomial in $\log n$ and $\eps^{-1}$.

\paragraph{Approximation guarantee.}
Let $I_0$ be an independent set in $G$ of maximum weight. 
We mark some recursion calls. Initially we mark the initial root call for $G$.
Consider a marked step of the recursion with subgraph $G'$.
Let $(X_0,(H_0,\esd_0))$ be an element of the computed disperser $\DisFamily$ that is $(\Csafe,\Cshrink(|V(G')|))$-good
for $G'$ and $\wei_{I_0 \cap V(G')}$; we henceforth call $(X_0,(H_0,\esd_0))$ the \emph{correct element} of the considered recursive call. 
Consider the family of atoms $\mathcal{A}_{I_0 \cap V(G'-X_0)}$ for the extended
strip decomposition $(H_0,\esd_0)$ of $G'-X_0$ and the independent set $I_0 \cap V(G'-X_0)$.
Claim~\ref{cl:dis:partI} ensures that $\mathcal{A}_{I_0 \cap V(G'-X_0)}$ is independent and its union contains
$I_0 \cap V(G'-X_0)$. 
We mark all recursive calls (being children of the recursive call for $G'$)
for atoms $A \in \mathcal{A}_{I_0 \cap V(G'-X_0)}$.

Due to our weight rescaling and rounding, 
initially $\wei(I_0) \leq n^2/\eps$.
By a straightforward top-to-bottom induction on the recursion tree,
using the definition of being $\Cshrink$-shrinking, we show that
at every marked recursive call, if $G'$ is the graph considered in the call 
and $\Ibound'$ is the passed upper bound,
then $\wei(I_0 \cap V(G')) \leq \Ibound'$.

In particular, whenever $\Ibound' < 1$, then $I_0 \cap V(G') = \emptyset$
as $\wei$ has range contained in $[1,n/\eps]$. 
Also, if $G'$ is edgeless, then the algorithm returns a maximum-weight independent set 
in $G'$. Consequently, at every marked leaf of the recursion with graph $G'$
the returned independent set in $G'$ is of weight at least $\wei(I_0 \cap V(G'))$.

Consider a nonleaf marked recursive call and let $G'$ be the graph considered
in this call. 
Let $(X_0,(H_0,\esd_0))$ be the correct element for this recursive call.
Furthermore, let $I(A)$ be the independent set output by every recursive call invoked by the 
considered call for atom $A$ of $(H_0,\esd_0)$. 
Claims~\ref{cl:dis:atoms2matching} and~\ref{cl:dis:matching2atoms} ensure that
the computed independent set for $(X_0,(H_0,\esd_0))$ satisfy
$$\wei(I_{(X_0,(H_0,\esd_0))}) \geq \sum_{A \in \mathcal{A}_{I_0 \cap V(G'-X_0)}} \wei(I(A)).$$
In particular, the independent set output by the considered recursive call for $G'$
is of weight at least the right hand side of the above inequality.

Let $\mathcal{X}$ be the family of all correct elements over all nonleaf marked recursive calls.
We infer that the weight of the independent set output by the root of the recursion
is at least 
$$\wei(I_0) - \sum_{(X_0,(H_0,\esd_0)) \in \mathcal{X}} \wei(I_0 \cap X_0).$$
Thus, it remains to estimate the sum of $\wei(I_0 \cap X_0)$ over all $(X_0,(H_0,\esd_0)) \in \mathcal{X}$.

Let $T$ be the subtree of the recursion tree induced by all marked calls. 
We call a nonleaf marked call $z$ \emph{strange} if every marked child of $z$ corresponds to 
a trivial atom of the correct element $(X_0,(H_0,\esd_0))$ at $z$, and \emph{normal} otherwise.

For every normal marked call $z$, denote by $f(z)$ 
the marked child call for a nontrivial atom $A$ with maximum $\wei(I_0 \cap A)$ (breaking ties arbitrarily)
   and mark the edge $z f(z)$ of $T$.
Let $F \subseteq E(T)$ be the set of marked edges. Clearly, $(V(T), F)$ is a set of upward paths
in $T$. Let $Z$ be the set of top endpoints of these paths, that is, $Z$ consists of the root
of $T$ and all recursive calls such that the edge of $T$ between the call and its parent
is not marked. For every $z \in V(T)$, let $G'_z$ be the subgraph of $G$ considered
in the call $z$.
Note that all marked leaves of $T$ that correspond to trivial atoms are in $Z$.
Let $S$ be the family of strange marked nodes.

As at every marked recursive call, the marked children of the call consider disjoint
atoms, we infer that every $v \in I_0$ is contained in at most 
$1+\log(\wei(I_0))$ graphs $G'_z$ for $z \in Z$
(in at most one leaf corresponding to a trivial atom and, for every other $z \in Z$ with 
 $v \in V(G'_z)$, the weight of the vertices of $I_0$ in $G'_z$ is at most half of the
 weight of the vertices of $I_0$ in the graph $G'$ at the parent of~$z$).

Furthermore, for every normal marked call $z$, from $\Csafe$-safeness
of the correct element $(X_0,(H_0,\esd_0))$ for $\wei_{I_0 \cap V(G'_z)}$ we
infer that 
$$\wei(X_0 \cap I_0) \leq \Csafe \cdot \left(\wei(I_0 \cap V(G'_z)) - \wei(I_0 \cap V(G'_{f(z)}))\right).$$
Summing over all nonleaf marked calls $z$ we infer that
\begin{align*}
\sum_{(X_0,(H_0,\esd_0)) \in \mathcal{X}} \wei(X_0 \cap I_0) &\leq \Csafe \cdot \sum_{z \in Z} \wei(I_0 \cap V(G'_z))
 + \Csafe \cdot \sum_{s \in S} \wei(I_0 \cap V(G_s')) \\
&\leq \frac{\eps}{1+\log(n^2/\eps)} \cdot \left( \log(\wei(I_0)) \cdot \wei(I_0) + \wei(I_0)\right) \\
&\leq \eps \cdot \wei(I_0).
\end{align*}
Consequently, the returned independent set at the root recursive call is of weight
at least $(1-\eps)\wei(I_0)$. This finishes the proof of Theorem~\ref{thm:disperser}.

\subsection{Proof of Theorem~\ref{thm:uniform-disperser}}

The algorithm of Theorem~\ref{thm:uniform-disperser} is again
a standard divide-and-conquer procedure,
simpler than in the case of Theorem~\ref{thm:disperser}.
By choosing $n_0$ appropriately we may assume that $n_0 > e^{1/\Ehit}$, i.e., $n_0^{\Ehit} > e$.

Let $G$ be the input graph with a weight function $\wei$.
If $n := |V(G)| \leq n_0$, we solve the problem by brute-force in constant time. 
If $G$ is disconnected, we recurse on every connected component. 
Otherwise, if there exists $v \in V(G)$ with $|N[v]| > \tau n^{\Ehit}$, branch
exhaustively on $v$: in one branch, delete $v$ from $G$ and recurse (consider $v$
    not included in the sought solution)
and in the second branch, delete $N[v]$ from $G$, recurse, and add $v$ to the independent set obtained from the recursive call (consider $v$ included in the sought solution).
Finally, output the one of the two obtained solutions that has larger weight.

In the remaining case we have $n = |V(G)| > n_0$ and $|N[v]| \leq \tau n^{\Ehit}$ for every $v \in V(G)$.
Invoke the assumed algorithm that outputs a pair $(X,(H,\esd))$ where $X \subseteq V(G)$
and $(H,\esd)$ is an extended strip decomposition of $G-X$ such that:
$$|X| \leq n^{-\Ehit} \left(n-|A|\right)\qquad\textrm{and}\qquad |A| \leq n-n^\Ehit\qquad \textrm{for every atom }A\textrm{ of }(H,\esd).$$

For every independent set $Y \subseteq X$, we proceed as follows.
Let $(H,\esd_Y)$ be $(H,\esd)$ {\em{restricted}} to $G-(X \cup N[Y])$.
That is, $(H,\esd_Y)$ is obtained from $(H,\esd)$ by removing all vertices of $N[Y]$ from all the sets in the image of $\esd$; 
it is straightforward to see that then $(H,\esd_Y)$ is an extended strip decomposition of $G-(X\cup N[Y])$.
We recurse on every atom $A$ of $(H,\esd_Y)$, obtaining an independent set $I_Y(A)$.
As explained in Section~\ref{ss:matching}, we construct the graph $H'$ from $H$ and weight function $\wei'$ on $E(H')$ using
independent sets $I_Y(A)$. 
We find a matching $M$ in $H'$ with maximum weight with respect to $\wei'$. 
We define $I_Y = Y \cup \bigcup_{A \in \mathcal{A}(M)} I_Y(A)$.
Finally, we return the independent set $I_Y$ that has
the maximum weight among all produced for independent sets $Y \subseteq X$.

\paragraph{Correctness.}
It is straightforward to verify that every recursive call is invoked on some induced
subgraph of $G$ and the set returned by any recursive call is an independent set.
By induction on $|V(G)|$,
we prove that an application of the algorithm to a graph $G$ returns a maximum-weight independent set in $G$.

This is obvious for the cases when we apply a brute-force search 
and when $G$ is disconnected and we recurse on the connected components of $G$.
If we branch on a vertex $v$ with $|N[v]| > \tau n^{\Ehit}$,
then the correctness is again straightforward as we consider exhaustively cases of $v$ being and not being
included in the sought solution.
Otherwise, we are in the case where we obtained a $\Ehit$-uniform disperser $(X,(H,\esd))$.
Let $I_0$ be a maximum-weight independent set in $G'$
and consider the case $Y = I_0 \cap X$. 
Then, $\mathcal{A}_{I_0 \setminus Y}$ is an independent
family of atoms of $(H,\esd_Y)$ and by Claim~\ref{cl:dis:atoms2matching},
$M(\mathcal{A}_{I_0 \setminus Y})$ is a matching in $H'$.
Furthermore, by the inductive assumption, 
for every atom $A$ of $(H,\esd_Y)$, $I_Y(A)$ is an independent set of maximum weight in $G'[A]$.
Note that we may apply the inductive assumption here due to $|A|\leq n-n^\Ehit<n$.
In particular, for every $A \in \mathcal{A}_{I_0 \setminus Y}$, 
$\wei(I_Y(A)) \geq \wei(I_0 \cap A)$. 
Therefore by Claims~\ref{cl:dis:atoms2matching} and~\ref{cl:dis:matching2atoms},
the weight of $I_Y$ is at least 
$$\wei(Y) + \sum_{A \in \mathcal{A}_{I_0 \setminus Y}} \wei(I_Y(A)) \geq \wei(Y) + \sum_{A \in \mathcal{A}_{I_0 \setminus Y}} \wei(I_0 \cap A) = \wei(I_0).$$
This concludes the inductive
  proof that the algorithm returns a maximum-weight independent set in~$G$.

\paragraph{Running time bound.}
We prove by induction on $n$ that when the algorithm is applied on an $n$-vertex graph $G$, the number of leaves of the recursion tree
is bounded by $e^{C n^{1-\Ehit} (1+\ln n)}$ for some constant $C$ depending on $\Ehit$ and $n_0$.
Since the time spent at internal computation in each recursive call is polynomial in $n$, the claimed running time bound will follow.

The claim is straightforward for the leaves of the recursion and for
non-leaf recursive calls when $G$ is disconnected.
In a non-leaf recursive call, if the algorithm branches on a vertex
$v \in V(G)$ with $|N[v]| > \tau n^{\Ehit}$, in one child recursive call
the number of vertices drops by $1$, in the second drops by at least $\tau n^{\Ehit}$.
Then the inductive step follows by standard calculations, which we omit here.%
\footnote{The crucial calculation is $n^{1-\Ehit} - (n-\tau n^\Ehit)^{1-\Ehit} \sim (1-\Ehit)\tau$.
  Alternatively, one can observe that in the recursion tree on every root-to-leaf path there are at most $\Oh(n^{1-\Ehit} \ln n/\tau)$ edges
  that correspond to branching on a vertex $v$ with $|N[v]| > \tau n^{\Ehit}$ and deleting the whole $N[v]$.}

In the remaining case, we have obtained a $\Ehit$-uniform disperser $(X,(H,\esd))$. Let $k$ be the number of vertices in the largest
atom of $(H,\esd)$. 
By the properties of $(X,(H,\esd))$, we have 
\begin{equation}\label{eq:subexp1}
|X| \leq n^{-\Ehit} \cdot (n-k) \quad\mathrm{and}\quad n-k \geq  n^\Ehit.
\end{equation}
For a fixed independent set $Y \subseteq X$, the algorithm recurses on at most $5n$ atoms, each of size at most $k$, which is strictly smaller than $n$.
Hence, by the inductive hypothesis, for a sufficiently large constant $C$ we have that the total number of leaf nodes of the recursion in descendants of the considered node is bounded by
\begin{equation}\label{eq:subexp2}
2^{|X|} \cdot 5n \cdot 2^{C \cdot k^{1-\Ehit} (1+\log k)}
\end{equation}
If $k \leq e^{1/\Ehit} \leq n_0$, then all the recursive calls are leaves in the recursion tree, so by~\eqref{eq:subexp1} their number is bounded by
$$2^{|X|} \cdot 5n \leq \exp\left(\ln 2 \cdot n^{1-\Ehit} + \ln n + \ln 5\right).$$
This value can be bounded as desired by taking $C \geq \ln 2 + 1 + \ln 5$. 
Hence, we assume 
\begin{equation}\label{eq:subexp3}
k > e^{1/\Ehit}.
\end{equation}
We need the following inequality:
\begin{align}
n^{1-\Ehit}(1+\ln n)-k^{1-\Ehit}(1+ \ln k) &\geq (n-k)\cdot \min_{k \leq x_0 \leq n} \left(\frac{d}{dx} \left(x^{1-\Ehit}(1+\ln x)\right) \Big|_{x = x_0}\right) \nonumber\\
    & = (n-k) \cdot \min_{k \leq x_0 \leq n} \left((1-\Ehit)x_0^{-\Ehit}(1+\ln x_0) + x_0^{-\Ehit}\right) \nonumber\\
    & = (n-k) n^{-\Ehit} \left((1-\Ehit)(1+\ln n) + 1\right).\label{eq:subexp4}
\end{align}
Here, in the last equality we have used~\eqref{eq:subexp3}, as $x \mapsto x^{-\Ehit}$ is decreasing for $x > 0$ and $x \mapsto x^{-\Ehit} \ln x$ is decreasing for $x \geq e^{1/\Ehit}$. 

By applying $n-k \geq n^\Ehit$,  from~\eqref{eq:subexp4} we obtain that:
\begin{equation}\label{eq:subexp5}
n^{1-\Ehit}(1+\ln n)-k^{1-\Ehit}(1+ \ln k) \geq (1-\Ehit)(n-k) n^{-\Ehit} + (1-\Ehit)(1+\ln n).
\end{equation}

With~\eqref{eq:subexp5} in hand, we are now ready to given an upper bound on~\eqref{eq:subexp2}:
\begin{align*}
&2^{|X|} \cdot 5n \cdot \exp\left(C \cdot k^{1-\Ehit} (1+\ln k)\right) \\
&\quad\leq \exp\left(\ln 2 \cdot n^{-\Ehit} \cdot (n-k) + \ln n + \ln 5 + C k^{1-\Ehit} (1+\ln k)\right) \\
&\quad\leq \exp\Big(\ln 2 \cdot n^{-\Ehit} \cdot (n-k) + \ln n + \ln 5 + C n^{1-\Ehit} (1+\ln n)\\
&\quad\quad\quad\quad\quad- C(1-\Ehit)(n-k)n^{-\Ehit} - C(1-\Ehit)(1+\ln n)\Big)\\
&\quad\leq \exp\left(C n^{1-\Ehit} (1+\ln n)\right).
\end{align*}
In the last inequality we have used~\eqref{eq:subexp5} and $C \geq \frac{\ln 5}{1-\Ehit}$.

This finishes the proof of the time complexity and of Theorem~\ref{thm:uniform-disperser}.

\section{Heavy vertices and strong dispersers}\label{sec:heavy}
Let $G$ be a graph, $\wei \colon V(G) \to \mathbb{N}$ be a weight function, and $I \subseteq V(G)$
be an independent set. For a real $\beta \in [0,1]$, a vertex $w \in V(G)$ is
\emph{$\beta$-heavy} (with respect to $I$) if $\wei(N[w] \cap I) \geq \beta \cdot \wei(I)$.
A simple coupon-collector argument shows the following.

\begin{lemma}\label{lem:heavy}
Let $G$ be an $n$-vertex graph for $n \geq 2$, $\wei \colon V(G) \to \mathbb{N}$ be a weight function, $I \subseteq V(G)$
be an independent set, and $\beta \in [0,1/2]$ be a real.
Then there exists a set $J \subseteq I$ of size at most $\lceil \beta^{-1} \log n \rceil$ such that
$N[J]$ contains all $\beta$-heavy vertices with respect to $I$.
\end{lemma}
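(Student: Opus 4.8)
The plan is to build $J$ greedily, picking at each step a vertex of $I$ that "covers" a large weighted fraction of the currently-uncovered $\beta$-heavy vertices, and then bound the number of steps by a weighted coupon-collector argument. Concretely, let $\mathcal{H}$ denote the set of $\beta$-heavy vertices with respect to $I$. For each $w \in \mathcal{H}$ we have $\wei(N[w] \cap I) \geq \beta \wei(I)$; equivalently, if we think of $I$ as a "weighted universe" of total weight $\wei(I)$ and of each $w \in \mathcal{H}$ as the "set" $N[w] \cap I \subseteq I$, then every such set has weight at least $\beta \wei(I)$. We want a small subcollection $J$ of $I$ that is a "hitting set from the other side": $J$ meets $N[w]$ for every $w \in \mathcal{H}$, i.e.\ $\mathcal{H} \subseteq N[J]$.

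First I would set up a potential function. Starting from $I_0 = I$, at step $i$ I would pick $v_i \in I_{i-1}$ maximizing the number of not-yet-covered heavy vertices $w$ (those with $N[w] \cap \{v_1,\dots,v_{i-1}\} = \emptyset$) such that $v_i \in N[w]$; equivalently I pick $v_i$ to hit as many still-uncovered sets $N[w]\cap I$ as possible. The key averaging observation is that if $R_{i-1}$ is the set of still-uncovered heavy vertices, then $\sum_{v \in I} |\{w \in R_{i-1} : v \in N[w]\}| \geq \beta |R_{i-1}| \cdot \frac{\wei(I)}{\max_x \wei(x)}$ — but to avoid dragging in weights of individual vertices of $I$, it is cleaner to run the covering argument on the weighted universe $I$ itself. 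So instead I would track $\wei$-mass: think of each uncovered $w$ as contributing its set $N[w]\cap I$ of weight $\ge \beta\wei(I)$, and pick $v_i$ to be a vertex of $I$ lying in the largest $\wei$-weighted number of these sets. A double-counting over the weight of $I$ shows some $v \in I$ lies in sets $N[w]\cap I$ for a $\beta$-fraction (by count among uncovered $w$) of uncovered heavy vertices — more precisely, $\sum_{v\in I}\wei(v)\,|\{w\in R_{i-1}: v\in N[w]\}| = \sum_{w\in R_{i-1}}\wei(N[w]\cap I)\ge \beta\wei(I)|R_{i-1}|$, and since $\sum_{v\in I}\wei(v)=\wei(I)$, there is $v_i\in I$ with $|\{w\in R_{i-1}: v_i\in N[w]\}|\ge \beta|R_{i-1}|$. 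Hence $|R_i| \le (1-\beta)|R_{i-1}|$, so after $k$ steps $|R_k| \le (1-\beta)^k |\mathcal{H}| \le (1-\beta)^k n$.

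Then I would choose $k = \lceil \beta^{-1}\log n\rceil$ so that $(1-\beta)^k n < 1$: using $1-\beta \le e^{-\beta}$ (valid since $\beta\in[0,1/2]$, indeed for all $\beta\ge 0$), we get $(1-\beta)^k n \le e^{-\beta k} n \le e^{-\log n}n = 1$ when $k \ge \beta^{-1}\log n$; since $|R_k|$ is an integer, $|R_k| = 0$, i.e.\ every $\beta$-heavy vertex has a neighbor (or is) in $J := \{v_1,\dots,v_k\}$, and $|J| \le k = \lceil\beta^{-1}\log n\rceil$. (If $\log$ denotes the natural logarithm this is immediate; if it is base $2$ one replaces $e^{-\beta k}$ by $2^{-\beta k/\ln 2}$ and the constant is still absorbed, or one simply notes the paper's $\log$ is natural — either way the bound as stated holds with the stated ceiling, possibly using $\log = \ln$.) Edge cases: if $\mathcal{H} = \emptyset$ take $J = \emptyset$; if $\wei(I) = 0$ every vertex is vacuously $\beta$-heavy but also $J$ can be taken empty since then $\wei(N[w]\cap I)\ge \beta\wei(I)=0$ holds trivially and we actually need $\mathcal H\subseteq N[J]$ — in the degenerate all-zero case one should instead observe the statement is only meaningful when $\wei(I)>0$, or handle it by noting heavy vertices are irrelevant; I would state the argument for $\wei(I) > 0$ and dispose of $\wei(I)=0$ in one line.

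The main obstacle — really the only subtlety — is making the averaging step clean in the \emph{weighted} setting: one must double-count $\sum_{w \in R_{i-1}} \wei(N[w]\cap I)$ both as $\ge \beta\wei(I)|R_{i-1}|$ (heaviness) and as $\sum_{v \in I}\wei(v)\cdot|\{w\in R_{i-1}:v\in N[w]\}|$ (swapping the order of summation, using that $I$ is where the mass lives), and then conclude by an averaging-over-$\wei(v)$ argument rather than averaging over cardinality. Everything else is the standard greedy set-cover / coupon-collector estimate, and the constant in the ceiling is exactly what $(1-\beta)^k \le e^{-\beta k}$ delivers.
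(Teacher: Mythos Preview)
Your argument is correct but takes a different route from the paper's. The paper gives a two-line probabilistic proof: sample $\lceil \beta^{-1}\log n\rceil$ vertices from $I$ independently, each with probability proportional to its weight; then each $\beta$-heavy vertex $z$ is missed with probability at most $(1-\beta)^{\lceil \beta^{-1}\log n\rceil}<1/n$, and a union bound over at most $n$ heavy vertices finishes. Your greedy covering is precisely the derandomization of this: your double-counting identity $\sum_{v\in I}\wei(v)\,|\{w\in R:v\in N[w]\}|=\sum_{w\in R}\wei(N[w]\cap I)\ge\beta\,\wei(I)\,|R|$ says exactly that the \emph{expected} number of newly covered heavy vertices under the paper's sampling distribution is at least $\beta|R|$, so choosing the maximizer does at least as well, yielding the same geometric decay $|R_i|\le(1-\beta)|R_{i-1}|$. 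Your version is constructive and gives an explicit polynomial-time procedure for finding $J$; the paper's is shorter but only existential. Both rest on the same heaviness observation and give the identical bound. Your hedging about the logarithm base is unnecessary: since $(1-\beta)<e^{-\beta}$ strictly for $\beta\in(0,1/2]$, the inequality $(1-\beta)^k n<1$ holds for $k=\lceil\beta^{-1}\log n\rceil$ whether $\log$ is natural or base $2$ (and the paper is equally casual about this).
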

\begin{proof}
Let $Z$ be the set of $\beta$-heavy vertices.
We consider a probability distribution on $I$ where a vertex $v \in I$ is chosen with probability
$\wei(v) / \wei(I)$.
For every $z \in Z$, a vertex $v \in I$ chosen at random according to this distribution
satisfies $z \in N[v]$ with probability at least $\beta$. 
Consequently, if $J$ is a set of $\lceil \beta \log n \rceil$ vertices of $I$ each chosen independently at random according to this distribution, then 
for every $z \in Z$ the probability that $z \notin N[J]$ is less than $(1-\beta)^{\beta \log n} < 1/n$ (here we used that $\beta \leq 1/2$ and $n \geq 2$).
By the union bound, the probability that $Z \subseteq N[J]$ is positive.
\end{proof}

Next we prove a general-usage lemma that reduces the task of finding small dispersers to connected graphs where
the neighborhood of every vertex is not $\beta$-heavy with regards to some fixed maximum-weight independent set we are looking for.
This is done essentially as follows: we first guess the set $J$ of $\beta$-heavy vertices of size $\poly(\Csafe^{-1},\log n)$ using Lemma~\ref{lem:heavy}, focus on the heaviest connected component of $G-N[J]$, 
and construct a suitable disperser for this component. This idea can be used to prove the following statement. %

\newcommand{\leps}{\sigma}

\begin{lemma}\label{lem:guess-heavy}
Let $\Cc$ be a hereditary graph class.
Suppose there is a polynomial $p(\cdot)$ such that given any $\leps>0$ and $n$-vertex connected graph $G\in \Cc$ one can in polynomial time compute 
a family $\Nn$ with $|\Nn|\leq \poly(n)$ consisting of pairs of the form $(X,(H,\esd))$, where $X\subseteq V(G)$ and $(H,\esd)$ is an extended strip decomposition of $G-X$, such that the following holds:
For every weight function $\wei\colon V(G)\to \N$ satisfying $\wei(N[v])\leq p(\leps) \wei(V(G))$ for each $v\in V(G)$ there exists $(X,(H,\esd))\in \Nn$ such that
$$\wei(A)\leq (1-p(\leps))\cdot \wei(G)\quad\textrm{and}\quad\wei(X)\leq \leps\cdot \wei(G-A)\quad\textrm{for every atom }A\textrm{ of }(H,\esd).$$
Then the class $\Cc$ is QP-dispersible. Moreover, if it is always the case that all the extended strip decompositions appearing in the family $\Nn$ are trivial (i.e. corresponding to the partition into connected components), then $\Cc$ is strongly QP-dispersible.
\end{lemma}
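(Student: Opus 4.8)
The plan is to check that $\Cc$ meets the hypotheses of Theorem~\ref{thm:disperser} (and, in the ``moreover'' case, via strong dispersers). Fix $\Csafe\in(0,1/2)$; note that for the assumed guarantee to be non-vacuous we must have $p(\leps)\in(0,1)$, since otherwise the atom bound $\wei(A)\le(1-p(\leps))\wei(G)$ is unsatisfiable for, say, the all-ones weight function. Put $\leps:=\Csafe/2$, $\beta:=p(\leps)/2$, $\theta:=\Csafe\, p(\leps)/4$, and $\Cshrink(n):=\theta$ (a constant in $(0,1/2)$ with $(\Cshrink(n))^{-1}\in\poly(\Csafe^{-1})$, computable from $\Csafe$). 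Given $G\in\Cc$ and $\wei$ (we may assume $G$ connected and $n\ge2$, handling $n\le 1$ directly), the disperser-producing algorithm enumerates all $J\subseteq V(G)$ with $|J|\le\lceil\beta^{-1}\log n\rceil$ --- there are $2^{\poly(\log n,\Csafe^{-1})}$ of them --- and for each $J$ forms $G_J:=G-N(J)$, in which every vertex of $J$ is isolated. Into the disperser it puts: \textbf{(a)} the pair $(N(J),(H_J,\esd_J))$, where $(H_J,\esd_J)$ is the trivial extended strip decomposition of $G_J$; and \textbf{(b)} for every connected component $C$ of $G_J$ with $|V(C)|\ge2$ and every $(X',(H',\esd'))$ in the family $\Nn_C$ returned by the assumed algorithm applied to $C$ with parameter $\leps$, the pair $(X,(H,\esd))$ with $X:=N(J)\cup(V(G_J)\setminus(V(C)\cup J))\cup X'$ and $(H,\esd)$ obtained from $(H',\esd')$ by adding, for every $v\in J$, an isolated vertex $x_v$ with $\esd(x_v)=\{v\}$. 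Since $G-X$ is the disjoint union of $C-X'$ and the edgeless graph on $J$, $(H,\esd)$ is a valid extended strip decomposition of $G-X$, its nontrivial atoms are exactly those of $(H',\esd')$, and it is trivial whenever $(H',\esd')$ is. As $|\Nn_C|\le\poly(n)$, the produced family has size $2^{\poly(\log n,\Csafe^{-1})}$ and is computed in time $2^{\poly(\log n,\Csafe^{-1})}$.

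Next I would show this family is a $(\Csafe,\Cshrink(n))$-disperser. Fix an independent set $I$ with $\wei(I)>0$. Lemma~\ref{lem:heavy} (with this $\beta$ and $I$) gives $J\subseteq I$, $|J|\le\lceil\beta^{-1}\log n\rceil$, whose closed neighbourhood contains all $\beta$-heavy vertices; since $J\subseteq I$ is independent, $N(J)\cap I=\emptyset$ and $J\cap N(J)=\emptyset$, so $I\subseteq V(G_J)$ and $\wei(N_{G_J}[v]\cap I)<\beta\wei(I)$ for every $v\in V(G_J)\setminus J$. Let $C^\star$ be the component of $G_J$ maximizing $\wei(I\cap C^\star)$; throughout, $\wei_I$ and $\wei_{I\cap C^\star}$ agree on subsets of $V(C^\star)$. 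If $|V(C^\star)|=1$ or $\wei(I\cap C^\star)\le(1-\theta)\wei(I)$, then member~(a) for this $J$ is good for $\wei_I$: it is $\Csafe$-safe since $\wei_I(N(J))=0$, and it is $\Cshrink(n)$-shrinking because every nontrivial atom is a component of $G_J$ with $\wei_I$-weight $\le\wei(I\cap C^\star)$, which is $\le(1-\Cshrink(n))\wei(I)$ except possibly when $C^\star$ is a single vertex (a trivial atom, imposing no constraint), in which case every other component carries weight $<\Cshrink(n)\wei(I)<(1-\Cshrink(n))\wei(I)$. Otherwise $|V(C^\star)|\ge2$ and $\wei(I\cap C^\star)>(1-\theta)\wei(I)\ge\tfrac12\wei(I)$, so for every $v\in V(C^\star)$ we have $\wei_{I\cap C^\star}(N_{C^\star}[v])\le\wei(N_G[v]\cap I)<\beta\wei(I)\le p(\leps)\wei(I\cap C^\star)$; thus $\wei_{I\cap C^\star}$ satisfies the local-weight precondition on $C^\star$, and the assumed guarantee yields $(X',(H',\esd'))\in\Nn_{C^\star}$ with $\wei_I(A')\le(1-p(\leps))\wei(I\cap C^\star)$ and $\wei_I(X')\le\leps\bigl(\wei(I\cap C^\star)-\wei_I(A')\bigr)$ for every atom $A'$ of $(H',\esd')$. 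Member~(b) built from $C^\star$ and this pair is then good: each nontrivial atom $A'$ has $\wei_I(A')\le(1-p(\leps))\wei(I)\le(1-\Cshrink(n))\wei(I)$ since $\Cshrink(n)=\theta<p(\leps)$; and since $\wei_I(X)=\wei(I)-\wei(I\cap C^\star)-\wei(J)+\wei_I(X')$ and $\wei(I)-\wei(I\cap C^\star)<\theta\wei(I)$, we get $\wei_I(X)<\theta\wei(I)+\leps\bigl(\wei(I)-\wei_I(A')\bigr)$, which is $\le\Csafe\wei(I)$ (as $\theta+\leps<\Csafe$) and also $\le\Csafe\bigl(\wei(I)-\wei_I(A')\bigr)$, because $\wei(I)-\wei_I(A')\ge p(\leps)\wei(I)$ and $\theta\wei(I)\le(\Csafe-\leps)p(\leps)\wei(I)$ by the choice $\theta=\Csafe p(\leps)/4$. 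Hence some member is always good, so $\Cc$ is QP-dispersible; and when all the $(H',\esd')$ are trivial, every member above uses a trivial extended strip decomposition of $G-X$, so the disperser is strong.

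The step I expect to be the crux is exactly the ``heavy component'' case: after deleting $N(J)$ one must guarantee the assumed algorithm's precondition for the \emph{relative} weight $\wei_{I\cap C^\star}$ on $C^\star$ rather than for the global total --- this is why $\beta$ is taken a constant factor below $p(\leps)$ and the heavy/light split is placed at $1-\theta$ rather than at some unrelated level --- while \emph{simultaneously} ensuring that discarding all of $G$ outside $V(C^\star)\cup J$ costs only a $\theta$-fraction of $\wei(I)$, so that the $\leps$-safety slack gained inside $C^\star$ survives the lift back to $G$. Balancing these two demands is what pins down $\leps=\Csafe/2$ and $\theta=\Csafe p(\leps)/4$; everything else --- the size and running-time bounds, the verification that the combined object is an extended strip decomposition, and the degenerate cases $J=\emptyset$ and $C^\star$ being a guessed heavy vertex --- is routine bookkeeping.
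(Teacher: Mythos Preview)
Your proof is correct and follows essentially the same strategy as the paper: guess the small set $J\subseteq I$ covering all $\beta$-heavy vertices (Lemma~\ref{lem:heavy}), delete $N(J)$, locate the heaviest component, and invoke the assumed algorithm on it. The one substantive difference is in how the light part of $G-N(J)$ is handled. The paper sets $X=X'\cup N(J)$ and keeps every connected component of $G-X$ other than those of $G'-X'$ as a new isolated node of the extended strip decomposition; these components then become additional (light) atoms, and the paper checks the $\Csafe$-safety and $\Cshrink$-shrinking conditions for them separately. You instead absorb all of $V(G_J)\setminus(V(C^\star)\cup J)$ into $X$. This forces you to tighten the heavy/light threshold from the paper's $\tfrac12$ to $1-\theta$ with $\theta=\Csafe\,p(\leps)/4$, and to split the safety budget by taking $\leps=\Csafe/2$ rather than the paper's $\leps=\Csafe$, so that the discarded $I$-weight outside $C^\star$ can be charged against the per-atom slack $\wei(I)-\wei_I(A')\ge p(\leps)\wei(I)$. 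Both routes work; the paper's is marginally cleaner because nothing of positive $\wei_I$-weight is ever thrown into $X$, which lets the parameters line up without the extra balancing.

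Two minor remarks. First, your parenthetical ``we may assume $G$ connected'' is neither needed nor justified, but your construction never uses it (the assumed algorithm is applied only to connected components $C$ of $G_J$), so this is harmless. Second, your appeal to $p(\leps)\in(0,1)$ is the same implicit assumption the paper makes (it writes ``without loss of generality $p(x)\ge x$''); in all applications $p$ is a monomial, so this is not an issue.
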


\begin{proof} %
Suppose without loss of generality that $p(x)\geq x$ for all positive $x$.
Fix $\Csafe\in (0,1/2)$. 
Fix $G\in \Cc$ on $n$ vertices supplied with a weight function $\wei \colon V(G) \to \N$.

We present the construction of a disperser for $G$ as a nondeterministic procedure that, for a given independent set $I$ with $\wei(I) > 0$, 
produces a pair $(X,(H,\esd))$, where $X\subseteq V(G)$ and $(H,\esd)$ is an extended strip decomposition of $G-X$, that is $(\Csafe,p(\Csafe))$-good for $\wei_I$, 
i.e. we shall have $\Cshrink(n)=p(\Csafe)$.
We argue that this nondeterministic procedure has $\DisSize(n)$ possible runs that can be enumerated in time $\DisSize(n)\cdot \poly(n)$ without the knowledge of $I$, 
where the function $\DisSize(n)$ will be chosen later.
Then the constructed disperser $\Dd$ comprises of all sets $X$ constructed by all possible runs, and thus has size at most $\DisSize(n)$.
As each run has polynomial length, the running time of the construction of $\Dd$ is $\DisTime(n)\leq \DisSize(n)\cdot \poly(n)$.

Therefore, fix an independent set $I$ in $G$ with $\wei(I)>0$. Recall that $\wei_I$ is a weight function on $G$ obtained from $\wei$ by changing the weight of vertices outside of $I$ to $0$.

First, apply Lemma~\ref{lem:heavy} to $G$, $\wei_I$, $I$, and constant $\beta=p(\Csafe)/2$. 
This yields a set $J\subseteq I$ of size at most $2p(\Csafe)^{-1}\log n+1=\poly(\Csafe^{-1},\log n)$ such that $N[J]$ contains all vertices that are $p(\Csafe)/2$-heavy w.r.t. $\wei_I$.
The procedure nondeterministically guesses the set $J$; note that there are $2^{\poly(\Csafe^{-1},\log n)}$ choices for $J$.
Then 
\begin{equation}\label{eq:no-heavy}
\wei_I(N[v])\leq p(\Csafe)/2\cdot \wei_I(G)\quad\textrm{for every vertex }v\in V(G)\setminus N[J].
\end{equation}

Let $G'$ be the heaviest (w.r.t. $\wei_I$) connected component of $G-N[J]$.
Our nondeterministic procedure guesses $G'$ ($n$ options) and whether $\wei_I(G')\leq \wei_I(G)/2$ or not ($2$ options).

Suppose first that $\wei_I(G')\leq \wei_I(G)/2$.
Then observe that putting $X=N(J)$ and $(H,\esd)$ as the trivial extended strip decomposition of $G-X$, we find that $(X,(H,\esd))$ is $(0,1/2)$-good for $G$.
Indeed, in $G-X$ every vertex of $J$ is isolated, so it corresponds to a trivial
atom of $(H,\esd)$, while every other atom of $(H,\esd)$ corresponds to a connected component of $G-N[J]$ and hence it has
weight at most $\wei_I(G)/2$. On the other hand, $\wei_I(X)=0$, because $X=N(J)$ is disjoint with $I$.

Therefore, from now on we focus on the second case when
\begin{equation}\label{eq:heavy-comp}
\wei_I(G')>\wei_I(G)/2.
\end{equation}
Since $\Cc$ is hereditary, we have $G'\in \Cc$.
Hence, we may apply the assumed algorithm to $G'$ for $\leps=\Csafe$, yielding in polynomial time a family $\Nn$ of size $\poly(n)$ consisting of pairs of the form $(X',(H',\esd'))$, 
where $(H',\esd')$ is an extended strip decomposition of $G'-X'$. 
As by~\eqref{eq:no-heavy} and~\eqref{eq:heavy-comp} we have
$$\wei_I(N_{G'}[v])\leq \wei_I(N_G[v])\leq p(\Csafe)/2\cdot \wei_I(G)\leq p(\Csafe)\cdot \wei_I(G')\quad\textrm{for every }v\in V(G'),$$ 
by assumption there exists $(X',(H',\esd'))\in \Nn$ satisfying the following:
$$\wei_I(A)\leq (1-p(\Csafe))\cdot \wei_I(G')\quad\textrm{and}\quad\wei_I(X')\leq \Csafe\cdot \wei(V(G')\setminus A)\quad\textrm{for every atom }A\textrm{ of }(H',\esd').$$
By choosing among $|\Nn|=\poly(n)$ options, our nondeterministic procedure guesses $(X',(H',\esd'))$ satisfying the above.

Consider $X=X'\cup N(J)$.
Observe that since $G'$ is a connected component of $G-N(J)$, we have $\cc(G'-X')\subseteq \cc(G-X)$.
Let now $(H,\esd)$ be the extended strip decomposition of $G-X$ obtained from $(H',\esd')$ by adding every connected component $C\in \cc(G-X)\setminus \cc(G'-X')$ as a separate piece of the decomposition:
we add a new node $w_C$ that is isolated in $H$ and set $\esd(w_C)=V(C)$.

\begin{claim}
The pair $(X,(H,\esd))$ is $(\Csafe,p(\Csafe))$-good for $G$ and $\wei_I$.
\end{claim}
\begin{proof}
First, observe that since $N(J)\cap I=\emptyset$, we have
$$\wei_I(X)=\wei_I(X')\leq \Csafe\cdot \wei_I(G'-B)\leq \Csafe\cdot \wei_I(G'),$$
where $B$ is any nontrivial atom of $(H',\esd')$.

Consider any nontrivial atom $A$ of $(H,\esd)$.
Since vertices of $J$ form trivial atoms in $(H,\esd)$, we have that either $A$ is a connected component of $G-N[J]$ that is different from $G'$, or $A$ is an atom of $(H',\esd')$.

In the first case, by~\eqref{eq:heavy-comp} we infer that $\wei_I(A)<\wei_I(G)/2$.
Moreover, since $G'$ and $A$ are disjoint, we have $\wei_I(G-A)\geq \wei_I(G')$. 
The latter assertion together with $\wei_I(X)\leq \Csafe\cdot \wei_I(G')$ implies that $\wei_I(X)\leq \Csafe\cdot \wei_I(G-A)$, as required.

Consider now the second case. First, by assumption we have $\wei_I(A)<(1-p(\Csafe))\wei_I(G')\leq (1-p(\Csafe))\wei_I(G)$.
Second, again by assumption we have $\wei_I(X)=\wei_I(X')\leq \Csafe\cdot \wei_I(G'-A)\leq \Csafe\cdot \wei_I(G-A)$.

Thus, in both cases we conclude that $(X,(H,\esd))$ is $(\Csafe,p(\Csafe))$-good for $G$ and $\wei_I$.
\cqed\end{proof}

Therefore, in all cases the nondeterministic procedure produced a pair $(X,(H,\esd))$ that is $(\Csafe,p(\Csafe))$-good for $G$ and $I$.

We conclude by observing that the nondeterminism used by the procedure comes from:
\begin{itemize}
\item choosing $J$, for which there are $2^{\poly(\Csafe^{-1},\log n)}$ choices;
\item choosing $G'$ and whether $\wei_I(G')\leq \wei_I(G)/2$, for which there are at most $2n$ choices; and
\item choosing $(X',(H',\esd'))\in \Nn$, for which there are $\poly(n)$ choices.
\end{itemize}
Hence, we can set $\DisSize(n)\in 2^{\poly(\Csafe^{-1},\log n)}$ for the size of the computed strong disperser and, consequently, also the construction running time is 
$\DisTime(n)=\DisSize(n)\cdot \poly(n)=2^{\poly(\Csafe^{-1},\log n)}$.
We conclude that $\Cc$ is $(\Csafe,p(\Csafe),2^{\mathrm{poly}(\Csafe^{-1},\log n)},2^{\mathrm{poly}(\Csafe^{-1}, \log n)})$-dispersible for every $\Csafe\in (0,1/2)$, hence it is QP-dispersible.
Moreover, it can be easily seen that if the assumed algorithm only returns trivial extended strip decompositions, then also all the constructed extended strip decompositions are trivial and, consequently,
$\Cc$ is strongly QP-dispersible.
\end{proof}

\section{Dispersers in $P_t$-free graphs and graphs without a long hole}\label{sec:pt-free}
In this section we focus on the class of $P_t$-free graphs 
and graphs excluding a long hole. 

As a warm-up, to show how our framework works, we prove the following statement. 
\begin{theorem}\label{thm:Pt-free-dispersible}
For every $t\in \N$, the class of $P_t$-free graphs is strongly QP-dispersible and $\frac{1}{2}$-uniformly dispersible.
\end{theorem}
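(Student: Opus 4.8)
The plan is to establish both conclusions of Theorem~\ref{thm:Pt-free-dispersible} through the reduction machinery of Lemma~\ref{lem:guess-heavy} (for strong QP-dispersibility) and by directly constructing a $\frac{1}{2}$-uniform disperser (for uniform dispersibility). In both cases the engine is the Gy\'arf\'as path argument, which states that a connected $P_t$-free graph $G$ admits a set $B$ with $|B|\leq t-1$ such that $N[B]$ is a balanced separator, i.e., every connected component of $G-N[B]$ contains at most $|V(G)|/2$ vertices. First I would recall (or reprove in a sentence) this fact: run a BFS from an arbitrary vertex, obtain layers $L_0,L_1,\ldots$; since $G$ is $P_t$-free, no induced path can cross more than $t-1$ layers, which lets one pick at most $t-1$ vertices whose neighborhoods together dominate any would-be large component, yielding the balanced separator $N[B]$. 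The resulting extended strip decompositions will all be \emph{trivial} (just the partition of $G-N[B]$ into connected components), which is exactly the hypothesis of the ``moreover'' clause of Lemma~\ref{lem:guess-heavy} that upgrades QP-dispersibility to strong QP-dispersibility.

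\medskip
\noindent\textbf{Strong QP-dispersibility.} Here I would invoke Lemma~\ref{lem:guess-heavy} with a polynomial $p$ to be fixed; take $p(x)$ of the form $c\cdot x/t$ (so $p(\leps)=\Theta(\leps/t)$) — one only needs $p(x)\leq x$ and $p$ polynomial. Given $\leps>0$ and a connected $P_t$-free $G$ on $n$ vertices, the algorithm computes the balanced separator $N[B]$ with $|B|\leq t-1$ via the Gy\'arf\'as argument, and outputs the singleton family $\Nn=\{(N[B], (H,\esd))\}$ where $(H,\esd)$ is the trivial extended strip decomposition of $G-N[B]$; this clearly runs in polynomial time and $|\Nn|=1\leq\poly(n)$. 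I then need to verify, for any weight function $\wei$ with $\wei(N[v])\leq p(\leps)\wei(V(G))$ for all $v$, that every atom $A$ (= connected component of $G-N[B]$) satisfies $\wei(A)\leq (1-p(\leps))\wei(G)$ and $\wei(N[B])\leq \leps\cdot\wei(G-A)$. The first bound: since $N[B]$ is balanced, $A$ has at most $n/2$ vertices; this alone does not bound the \emph{weight}, so I instead argue $\wei(A)\leq \wei(G)-\wei(N[b])$ for any $b\in B$ disjoint from $A$ — wait, $N[b]$ may intersect $A$, so more carefully: $A$ is a component of $G-N[B]$, hence disjoint from $N[B]\supseteq N[b]$ for every $b\in B$, giving $\wei(G-A)\geq \wei(N[b])$; but I actually want an \emph{upper} bound on $\wei(A)$ of the form $(1-p(\leps))\wei(G)$, which follows once I know $\wei(G-A)\geq p(\leps)\wei(G)$. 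If $B=\emptyset$ then $G-N[B]=G$ has no component of weight $>\wei(G)/2$ only if... — actually the clean route: either some $b\in B$ exists, and then $\wei(G-A)\geq\wei(\{b\})$ which need not be large; so the right move is to note $\wei(N[B])\leq (t-1)\,p(\leps)\wei(G)$ by the union bound over $|B|\leq t-1$ vertices and the heaviness hypothesis, hence $\wei(A)\leq \wei(G)-\wei(G-N[B]\setminus A)$... this is getting circular, so let me state the honest plan: choose $p$ so that $(t-1)p(\leps)<\min(\leps, p(\leps)\cdot(\textrm{something}))$; concretely set the ``$p$'' fed to Lemma~\ref{lem:guess-heavy} to be $p(\leps):=\leps/(2t)$, so that $\wei(N[B])\leq (t-1)\cdot\frac{\leps}{2t}\wei(G)\leq \frac{\leps}{2}\wei(G)$; then $\wei(G-A)\geq \wei(G)-\wei(A)\geq \wei(G)/2$ (crucially here I do need that \emph{weight}, not just cardinality, is balanced — so I should apply the Gy\'arf\'as argument to the weighted graph, or rather observe that a component $A$ of $G-N[B]$ together with the heaviness bound gives $\wei(A)\le \wei(G)-\wei(N[B])\le(1-\frac{\leps}{2t})\wei(G)=(1-p(\leps))\wei(G)$, and $\wei(X)=\wei(N[B])\le\frac{\leps}{2}\wei(G)\le\leps\cdot\frac{\wei(G)}{2}\le\leps\,\wei(G-A)$ using $\wei(G-A)\geq\wei(G)/2$ for a genuinely balanced-by-weight separator). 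To get the by-weight balance I apply the Gy\'arf\'as path argument in the standard ``weighted balanced separator'' form, which it supports verbatim. With these bounds the hypotheses of Lemma~\ref{lem:guess-heavy} hold, all decompositions are trivial, and strong QP-dispersibility follows.

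\medskip
\noindent\textbf{$\frac{1}{2}$-uniform dispersibility.} Here I would check the hypothesis of Theorem~\ref{thm:uniform-disperser} with $\esd$-parameter $\Ehit=\frac12$: given a connected $P_t$-free $G$ on $n\geq n_0$ vertices with $|N_G[v]|\leq \tau n^{1/2}$ for all $v$, apply the Gy\'arf\'as argument to get $B$ with $|B|\leq t-1$ and $N[B]$ a (cardinality-)balanced separator; output $X=N[B]$ together with the trivial extended strip decomposition of $G-X$. Then $|X|=|N[B]|\leq (t-1)\tau n^{1/2}$, and I need $|X|\leq n^{-1/2}(n-|A|)$ for every atom $A$, i.e. $n-|A|\geq (t-1)\tau\, n$ — which is false for large $n$! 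So uniform balance is \emph{not} enough; I must instead use that $|A|\leq n/2$, giving $n-|A|\geq n/2$, hence I need $(t-1)\tau n^{1/2}\leq n^{-1/2}\cdot n/2 = n^{1/2}/2$, i.e. $\tau\leq \frac{1}{2(t-1)}$ — but $\tau$ is given, not chosen. The resolution, and the point I expect to be the **main obstacle**, is that when the max-degree bound $\tau n^{1/2}$ is present one does \emph{not} need to beat $n^{1/2}$: choosing $n_0$ large forces $|X|\leq (t-1)\tau n^{1/2}$ and $n-|A|\geq n/2$, and since $n/2 \geq (t-1)\tau n^{1/2}\cdot n^{1/2}$ fails... — so the genuine fix is that Theorem~\ref{thm:uniform-disperser} allows $\tau$ to be a \emph{fixed constant of the class}, so I pick the class-constant $\tau$ myself to be any value (say $\tau=1$, absorbing $t$): then I require $|N_G[v]|\le n^{1/2}$, and $|X|\le (t-1)n^{1/2}$ still exceeds $n^{1/2}$. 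The correct reading is that I should take a separator that is \emph{small}, not balanced-by-domination: pick $B$ so that $N[B]$ is balanced and $|N[B]|\le (t-1)\max_v|N[v]|\le(t-1)\tau n^{\Ehit}$, then the second condition $|A|\le n-n^{\Ehit}$ holds because $|A|\le n/2\le n-n^{1/2}$ for $n\ge 4$, and the first condition $|X|\le n^{-\Ehit}(n-|A|)$ holds because $n-|A|\ge n/2$ and $n^{-\Ehit}\cdot n/2 = n^{1/2}/2\ge (t-1)\tau n^{1/2}$ precisely when $\tau\le \frac{1}{2(t-1)}$ — so I will \emph{choose} the class-constant $\tau:=\frac{1}{2(t-1)}$ in the statement of $\frac12$-uniform dispersibility (Theorem~\ref{thm:uniform-disperser} quantifies $\tau$ existentially over the class), and likewise choose $n_0$ large enough; then all conditions are met, the disperser is computable in polynomial time, and $\frac12$-uniform dispersibility of $P_t$-free graphs follows. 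The only real content beyond bookkeeping is the Gy\'arf\'as path lemma itself and getting the constants to line up, which is routine.
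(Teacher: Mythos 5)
Your overall strategy mirrors the paper's exactly: apply Lemma~\ref{lem:gyarfas-path} to obtain a balanced separator of the form $N[Q]$ for a short induced path $Q$ (of fewer than $t$ vertices by $P_t$-freeness), feed the resulting trivial extended strip decompositions into Lemma~\ref{lem:guess-heavy} to get strong QP-dispersibility, and verify the conditions of Theorem~\ref{thm:uniform-disperser} directly for $\frac{1}{2}$-uniform dispersibility. Your uniform-dispersibility half is correct once you pin down $\tau$ and $n_0$ as existentially-quantified class constants, which you do at the end; the paper takes $\alpha=\frac14$ and $\tau=\frac{1/4}{t-1}$ rather than your (implicit) $\alpha=\frac12$, and note $\alpha=\frac12$ is not strictly admissible in Lemma~\ref{lem:gyarfas-path}, but this is only a matter of constants.

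There is, however, a genuine gap in your strong QP-dispersibility argument. You propose outputting the \emph{singleton} family $\Nn=\{(N[B],(H,\esd))\}$, but Lemma~\ref{lem:guess-heavy} requires $\Nn$ to be computed in polynomial time \emph{without knowledge of $\wei$}, and then to contain a member that is weight-balanced for \emph{every} admissible $\wei$. You correctly observe that cardinality balance is not enough (``crucially here I do need that weight, not just cardinality, is balanced''), but this means the set $B$ your algorithm produces must depend on $\wei$, so a singleton $\Nn$ cannot work. The paper resolves this by invoking the enumeration (``Moreover'') clause of Lemma~\ref{lem:gyarfas-path}: given only $G$ and a vertex $u$, one can enumerate in polynomial time a family $\Qq$ of $\Oh(n^2)$ induced paths such that for \emph{every} $\alpha\in(0,1/2)$ and \emph{every} weight function $\wei$, some $Q\in\Qq$ satisfies properties~\ref{p:fin} and~\ref{p:mid}. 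Then $\Nn$ consists of one pair $(X,\textrm{trivial e.s.d.\ of }G-X)$ per $Q\in\Qq$, with $X=N[Q]$ (or $X=\{u\}$ when $Q$ is empty); this family is polynomial-size, independent of $\wei$, and the weight inequalities go through with $\alpha=\frac14$ and $p(\leps)=\frac{\leps}{4t}$ much as you compute. Since all decompositions in $\Nn$ are trivial, the ``moreover'' clause of Lemma~\ref{lem:guess-heavy} then yields \emph{strong} QP-dispersibility. (A minor side remark: your one-sentence ``BFS layers'' sketch of Gy\'arf\'as' construction does not match how Lemma~\ref{lem:gyarfas-path} is actually proved, which is by iteratively extending an induced path towards the heaviest remaining component; since you treat it as a black box this does not affect the rest of your argument.)
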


The proof of Theorem~\ref{thm:Pt-free-dispersible} relies on a classical construction used by Gy\'arf\'as~\cite{gyarfas} to prove that $P_t$-free graphs are $\chi$-bounded, 
which is usually called the {\em{Gy\'arf\'as path}}. 
In Section~\ref{ss:pt-free} we encapsulate this concept in a versatile claim,
as we will reuse it later on.

For $t\in \N$, a graph $G$ is {\em{$C_{\geq t}$-free}} if $G$ excludes every cycle $C_\ell$ for $\ell\geq t$ as an induced subgraph.
For instance, the {\em{long-hole-free}} graphs considered in~\cite{long-hole-free-subexp} are exactly $C_{\geq 5}$-free graphs.
In Section~\ref{ss:long-hole}, we prove the following strengthening of Theorem~\ref{thm:Pt-free-dispersible} that implies Theorem~\ref{thm:hole}.  

\begin{theorem}\label{thm:hole-free-dispersible}
For every $t\in \N$, the class of $C_{\geq t}$-free graphs is strongly QP-dispersible and $\frac{1}{2}$-uniformly dispersible.
\end{theorem}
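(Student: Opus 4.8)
The plan is to reduce, via Lemma~\ref{lem:guess-heavy}, the task to the following: given $\sigma>0$ and a connected $C_{\geq t}$-free graph $G$ on $n$ vertices, compute in polynomial time a family $\Nn$ of size $\poly(n)$ of pairs $(X,(H,\esd))$ — in fact, with $(H,\esd)$ always the trivial extended strip decomposition, so really a family of sets $X$ — such that for every weight function $\wei$ with $\wei(N[v])\leq p(\sigma)\wei(V(G))$ for all $v$, some $X\in\Nn$ satisfies $\wei(C)\leq (1-p(\sigma))\wei(G)$ for every connected component $C$ of $G-X$ and $\wei(X)\leq \sigma\cdot\wei(G-C)$. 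Since we are producing only trivial decompositions, this gives strong QP-dispersibility, and the uniform-disperser part is analogous with $\xi=\tfrac12$: here one takes the single set $X$ obtained by running the construction with the (implicit) uniform weight, checking $|X|\leq n^{-1/2}\cdot|V(G)\setminus C|$ and $|C|\leq n-\sqrt{n}$ for each component $C$ of $G-X$, which follows once each ``vertex neighborhood'' in $X$ has size $\leq \tau n^{1/2}$ (guaranteed by the hypothesis of Theorem~\ref{thm:uniform-disperser}) and $X$ is a union of a bounded number of neighborhoods.

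The core combinatorial step is a Gy\'arf\'as-path type argument adapted to $C_{\geq t}$-free graphs, which I expect to be isolated as a claim in Section~\ref{ss:pt-free}: for any connected graph $G$ and any weight function $\wei$, one can find a set $B$ of at most some constant $c(t)$ vertices such that $N[B]$ is a $\wei$-balanced separator, meaning every component of $G-N[B]$ has $\wei$-weight at most $\tfrac12\wei(G)$ — \emph{provided} no single closed neighborhood already carries more than half the weight (which is exactly the ``no heavy vertex'' assumption we are handed). The standard Gy\'arf\'as construction grows a path $P$ and, at each step, passes to the heaviest component of $G-N[P]$; in a $P_t$-free graph the path cannot reach length $t$, so after at most $t-1$ steps $N[P]$ is a balanced separator. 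For $C_{\geq t}$-free graphs the path may be long, but the classical refinement (as in~\cite{long-hole-free-subexp} for $C_{\geq 5}$, and here for general $t$) is that one only needs to keep a bounded-size \emph{suffix} of the Gy\'arf\'as path together with a bounded number of ``anchor'' vertices: a long induced path all of whose vertices see into the same component would, together with a detour through that component, close an induced cycle of length $\geq t$. So again a set $B$ of $\le c(t)$ vertices suffices, with $N[B]$ balanced.

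Given this, the family $\Nn$ is built by brute force: enumerate all sets $B\subseteq V(G)$ with $|B|\leq c(t)$ — there are $n^{\Oh(1)}$ of them — and put $X=N[B]$ into $\Nn$ (as a set, i.e.\ with the trivial decomposition of $G-X$). For the correct $\wei$, if some vertex is heavy that contradicts the hypothesis of Lemma~\ref{lem:guess-heavy}'s inner algorithm; otherwise the claim yields a good $B$, hence a good $X=N[B]\in\Nn$, giving $\wei(C)\le\tfrac12\wei(G)\le(1-p(\sigma))\wei(G)$ for every component $C$, and $\wei(X)=0\le\sigma\wei(G-C)$ if we take $\wei=\wei_I$ as in Lemma~\ref{lem:guess-heavy} (since $B\subseteq I$ is independent so $N(B)\cap I=\emptyset$, and we may place the vertices of $B$ themselves as trivial atoms). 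Then Lemma~\ref{lem:guess-heavy} upgrades this to strong QP-dispersibility, and the uniform-disperser variant plugs directly into Theorem~\ref{thm:uniform-disperser}.

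The main obstacle is proving the $C_{\geq t}$-free Gy\'arf\'as-path claim with a \emph{bounded} separator size $c(t)$: one must argue carefully that after trimming the Gy\'arf\'as path to a bounded suffix, the retained anchor vertices still ``see'' all the parts of the path that matter for connectivity, and that a long run of consecutive path-vertices with a common neighbor in some component forces an induced long cycle. Getting the bookkeeping right — which suffix length and how many anchors, and verifying the induced-cycle-closing argument does not accidentally use chords — is the delicate part; everything after that (the brute-force enumeration, the weight inequalities, and feeding into Lemmas~\ref{lem:guess-heavy} and Theorem~\ref{thm:uniform-disperser}) is routine.
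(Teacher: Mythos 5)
Your plan is essentially the paper's: reduce to the ``no heavy vertex'' setting via Lemma~\ref{lem:guess-heavy}, grow a Gy\'arf\'as path, trim it to a window of $t-1$ vertices, and use a long-hole-closing argument to show that some such window $Q$ has $N[Q]$ as a balanced separator, then enumerate candidates and feed them to Lemma~\ref{lem:guess-heavy} and Theorem~\ref{thm:uniform-disperser} exactly as you describe. The paper's Lemma~\ref{lem:hole-free-separator} carries out the combinatorial step a little differently from your ``bounded suffix plus anchors'' picture: it takes the \emph{two} overlapping length-$(t-1)$ windows $R'=(v_{k-t+1},\ldots,v_{k-1})$ and $R''=(v_{k-t+2},\ldots,v_k)$ at the end of the Gy\'arf\'as path and shows that if neither $N[R']$ nor $N[R'']$ is balanced, then the heavy components $D'$ of $G-N[R']$ and $D''$ of $G-N[R'']$ intersect, with $D'$ adjacent to $v_k$ and $D''$ adjacent to $v_{k-t+1}$, so a detour from $v_{k-t+1}$ to $v_k$ through $D'\cup D''$ plus the subpath $(v_{k-t+1},\ldots,v_k)$ is an induced cycle of length $\geq t$ — a more careful incarnation of the ``detour closes a long hole'' idea you identified as the crux.
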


The structural results obtained in Section~\ref{ss:long-hole} also directly imply Theorem~\ref{thm:hole-tw}.

\subsection{Gy\'arf\'as' path}\label{ss:pt-free}

The following lemma encapsulates a classical construction of Gy\'arf\'as~\cite{gyarfas}.
\begin{lemma}\label{lem:gyarfas-path}
Let $\alpha\in (0,1/2)$ be a real. Let $G$ be a connected graph endowed with a weight function $\wei \colon V(G) \to \N$, and let $u$ be any vertex of $G$.
Then there is an induced path $Q=(v_0,v_1,\ldots,v_k)$ in $G$ (possibly with $k=-1$ and $Q$ being empty) such that, denoting $G_0=G-v_0$ and $G_i=G-N[v_0,\ldots,v_{i-1}]$ for $i\in \{1,\ldots,k+1\}$, 
the following holds:
\begin{enumerate}[label=(P\arabic*),ref=(P\arabic*)]
\item\label{p:init} $u=v_0$ unless $k=-1$ (where we put $G_0 = G$);
\item\label{p:fin}  for every $C\in \cc(G_{k+1})$, we have $\wei(C)\leq (1-\alpha) \wei(G)$; and
\item\label{p:mid}  for every $i\in \{0,1,\ldots,k\}$, there is a connected component $D$ of $G_i$ such that $\wei(D)>(1-\alpha)\wei(G)$ and $D$ contains a neighbor of $v_i$.
\end{enumerate}
Moreover, given $G$ and $u$ one can compute in polynomial time a family $\Qq$ consisting of $\Oh(|V(G)|^2)$ induced paths in $G$, each starting at $u$, so that for every $\alpha\in (0,1/2)$ and
weight function $\wei\colon V(G) \to \N$ there exists $Q\in \Qq$ satisfying the above properties for $\alpha$ and $\wei$.
\end{lemma}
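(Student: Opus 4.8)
The plan is to construct the path $Q=(v_0,v_1,\ldots,v_k)$ greedily, one vertex at a time, maintaining the invariant that there is a "big" connected component carrying more than $(1-\alpha)\wei(G)$ of the weight. First I would handle the degenerate case: if already every component of $G-u$ has weight at most $(1-\alpha)\wei(G)$, then... actually we still want to start at $u$ when possible, so set $v_0=u$, $G_0=G-v_0$, and check~\ref{p:fin}; if it holds with $k=0$ we stop. Otherwise there is a unique component $D$ of $G_0$ with $\wei(D)>(1-\alpha)\wei(G)>\tfrac12\wei(G)$ (uniqueness because two such components would together exceed $\wei(G)$), and since $G$ is connected $v_0$ has a neighbor in $D$; this gives~\ref{p:mid} for $i=0$. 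Inductively, suppose we have built $v_0,\ldots,v_{i}$ as an induced path, with $G_{i+1}=G-N[v_0,\ldots,v_i]$ having a unique big component $D_{i+1}$ (weight $>(1-\alpha)\wei(G)$) that contains a neighbor of $v_i$. If every component of $G_{i+1}$ already satisfies the bound in~\ref{p:fin} we stop with $k=i$; note $D_{i+1}$ not being big would be the stopping condition, so we only continue while $D_{i+1}$ exists. To extend, pick $v_{i+1}$ to be a neighbor of $v_i$ lying in $D_{i+1}$. Then $v_{i+1}$ is non-adjacent to $v_0,\ldots,v_{i-1}$ (as $v_{i+1}\in G_{i+1}=G-N[v_0,\ldots,v_i]$, so $v_{i+1}\notin N[v_j]$ for $j\le i-1$), hence $Q$ remains an induced path; and $G_{i+2}=G_{i+1}-N_{G_{i+1}}[v_{i+1}]$, so any component of $G_{i+2}$ of weight $>(1-\alpha)\wei(G)>\tfrac12\wei(G)$ must be a subset of $D_{i+1}$ (the only component of $G_{i+1}$ heavy enough), it is unique, and since $D_{i+1}$ is connected and $v_{i+1}\in D_{i+1}$, if such a heavy component $D_{i+2}$ exists it contains a neighbor of $v_{i+1}$ inside $D_{i+1}$ — establishing~\ref{p:mid} for $i+1$ and maintaining the invariant. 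Termination: $|N[v_0,\ldots,v_i]|$ strictly increases with $i$ (since $v_{i+1}$ is newly added), so the process stops after at most $|V(G)|$ steps, and when it stops~\ref{p:fin} holds by the stopping rule.

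For the "moreover" part, the point is that the construction only ever made a bounded number of genuinely $\wei$-dependent decisions, namely: at each step, which neighbor of $v_i$ to pick, and whether to stop. The key observation is that the \emph{path} $Q$ output by the above procedure is one of only polynomially many possibilities, independent of $\alpha$ and $\wei$: to see this I would show that $\Qq$ can be taken to be, say, the set of all paths obtainable by the following $\wei$-oblivious rule — start at $u$, and at each step, for the current endpoint $v_i$, branch over all choices of the next vertex $v_{i+1}\in N_{G_i}(v_i)$, stopping nondeterministically at any point; but this naively gives exponentially many paths. To get down to $\Oh(|V(G)|^2)$, I would instead observe that the big component $D_{i+1}$ is determined once we know $v_i$ and the set $N[v_0,\ldots,v_i]$, but more cleverly: the union $N[v_0,\ldots,v_i]\cup D_{i+1}$ is a shrinking sequence of connected vertex sets, and the relevant data is just the pair (current path length, current endpoint) together with the fact that we always descend into whichever component is heavy. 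A cleaner route: define, for each pair $(u,v)$ of vertices, the path obtained by running the procedure but always choosing the next vertex $v_{i+1}$ to be a fixed neighbor of $v_i$ inside $D_{i+1}$ that is closest (in $D_{i+1}$) to a designated vertex $v$; letting $v$ range over $V(G)$ and also allowing all $|V(G)|$ possible stopping times gives $\Oh(|V(G)|^2)$ paths, and one checks that for any fixed $\alpha,\wei$ the path produced by the honest weight-aware procedure coincides with one of these (take $v$ in the final big component $D_{k+1}$, or any vertex if it is empty). I would then verify that each such path, truncated at the appropriate time, satisfies \ref{p:init}--\ref{p:mid} for that $\alpha,\wei$ by the same inductive argument as above — crucially, properties~\ref{p:fin} and~\ref{p:mid} only reference $D_i$ being heavy, which is a monotone condition that does not depend on which specific heavy-component neighbor was chosen.

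The main obstacle I anticipate is exactly this last bookkeeping: pinning down a \emph{uniform} family $\Qq$ of polynomial size that simultaneously works for all $\alpha$ and all $\wei$. The subtlety is that different weight functions can cause the procedure to stop at different points and to sit in different "big components," so the naive branching over neighbor choices is exponential; the resolution is to decouple the path's \emph{shape} from the weights by noting that once we commit to "always walk into the heaviest component," the only free choice is a target vertex steering the walk, and the heaviness test is just used to decide \emph{when to stop} reading off the prefix. I would make sure to state clearly that the family $\Qq$ is built by a $\wei$-oblivious algorithm (so it can be precomputed), and that the selection of the right $Q\in\Qq$ and its right truncation is what depends on $\alpha$ and $\wei$. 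Everything else — the induced-path property, uniqueness of the heavy component via $(1-\alpha)>\tfrac12$, and the connectivity argument giving the required neighbor in~\ref{p:mid} — is routine and follows the classical Gy\'arf\'as argument.
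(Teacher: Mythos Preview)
Your overall shape is right --- greedy Gy\'arf\'as path, stop when no heavy component survives, and for the algorithmic part steer the walk by a target vertex --- but the inductive step as you wrote it is broken, not just in bookkeeping.

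First, the invariant you state is self-contradictory: you take $D_{i+1}$ to be a component of $G_{i+1}=G-N[v_0,\ldots,v_i]$ and simultaneously require it to ``contain a neighbor of $v_i$'', but every neighbor of $v_i$ has been removed from $G_{i+1}$. Likewise, ``pick $v_{i+1}$ to be a neighbor of $v_i$ lying in $D_{i+1}$'' asks for a vertex that does not exist. This is more than an off-by-one: even if you shift indices so that you are picking $v_{i+1}$ as an arbitrary neighbor of $v_i$ inside the current heavy component $D$, your justification of~\ref{p:mid} at the next step fails. You argue that since $D$ is connected, $v_{i+1}\in D$, and the new heavy component $D'$ is contained in $D$, it must contain a neighbor of $v_{i+1}$. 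But $D'$ is a component of $D - N_D[v_{i+1}]$ (more precisely of $D - (N[v_i]\cap D)$), and such a component is only guaranteed to be adjacent to \emph{some} vertex of $N_D[v_{i+1}]$, not to $v_{i+1}$ itself. Concretely, if $D$ is a path $a\!-\!b\!-\!c\!-\!d\!-\!e$, $v_i$ is adjacent to $a,b,c$, and you choose $v_{i+1}=a$, then the surviving heavy piece $\{d,e\}$ has no neighbor of $a$. The paper's fix is exactly here: one must choose $v_{\ell+1}\in D$ adjacent to \emph{both} $v_\ell$ and the new heavy component $D'$; such a vertex exists because $D'$ is obtained from the connected $D$ by deleting the nonempty set $N[v_\ell]\cap D$.

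For the ``moreover'' part your instinct (a target vertex plus a stopping time gives $\Oh(|V(G)|^2)$ paths) matches the paper, but your description still uses $D_{i+1}$, which is $\wei$-dependent, inside the supposedly $\wei$-oblivious rule. The clean way --- and what the paper does --- is to note that because $\alpha<\tfrac12$ the heavy components $D_0\supseteq D_1\supseteq\cdots\supseteq D_{k-1}$ form a descending chain and hence share a common vertex $z$; then ``the component of $G_i$ containing $z$'' identifies $D_i$ without reference to $\wei$, and with a fixed tie-breaking order the whole path is determined by the pair $(z,k)$.
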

\begin{proof}
We first prove the existential statement and then argue how the reasoning can be turned into a suitable algorithm.

Call an induced subgraph $H$ of $G$ {\em{heavy}} if $\wei(H)>(1-\alpha) \wei(G)$ and {\em{light}} otherwise.
We construct $P$ inductively so that after constructing $v_0,\ldots,v_{\ell}$, these vertices induce a path $(v_0,\ldots,v_\ell)$ in $G$ and property~\ref{p:mid} is satisfied for all $i\in \{0,1,\ldots,\ell\}$.
If no component of $G$ is heavy, we may finish the construction immediately by setting $k=-1$ and $Q$ as the empty path. Otherwise, we start by setting $v_0=u$. 
Since $G_0=G-v_0$ and $G$ is connected, the unique (due to $\alpha<1/2$) heavy component of $G_0$ is adjacent to $v_0$ and~\ref{p:mid} is satisfied for $i=0$.

For $\ell\geq 0$, the construction of $v_{\ell+1}$ is implemented as follows.
By~\ref{p:mid} for $i=\ell$, there is a connected component $D$ of $G_\ell$ that is heavy and adjacent to $v_\ell$.
As $\alpha<1/2$, no other connected component of $G_\ell$ can be heavy.
Since $G_{\ell+1}$ is an induced subgraph of $G_\ell$, either every connected component of $G_{\ell+1}$ is light, or there is exactly one heavy connected component $D'$ of $G_{\ell+1}$ that is moreover
an induced subgraph of $D$. In the former case, we may finish the construction by setting $k=\ell$, as then~\ref{p:fin} is satisfied.
Otherwise, observe that $G_{\ell+1}$ is obtained from $G_\ell$ by removing vertices of $N[v_\ell]\setminus N[v_0,\ldots,v_{\ell-1}]$,
hence $D'$ is a connected component of $D-(N[v_\ell]\cap V(D))$. Here observe that $N[v_{\ell}]\cap V(D)$ is non-empty, because $D$ is adjacent to $v_{\ell}$. 
Consequently, there exists a vertex $v_{\ell+1}\in V(D)$ that is simultaneously adjacent to $v_\ell$ and to $D'$. Since $v_{\ell+1}\in V(D)$, $v_{\ell+1}$ is not adjacent to any of the vertices $v_0,\ldots,v_{\ell-1}$.
We conclude that the induced path $(v_0,\ldots,v_{\ell})$ can be extended by $v_{\ell+1}$ so that~\ref{p:mid} is satisfied for $i=\ell+1$.

Since $G$ is finite, the construction eventually finishes yielding a path $Q$ satisfying both~\ref{p:fin} and~\ref{p:mid}. We are left with arguing the algorithmic statement.

Observe that in the above reasoning, we used the constant $\alpha$ and the function $\wei$ only in order to verify whether the construction should be finished, or to identify the heavy connected component $D'$ of
$D-(N[v_\ell]\cap V(D))$. Having identified $D'$, $v_{\ell+1}$ can be chosen freely among the common neighbors of $D'$ and $v_{\ell}$.
Fix beforehand a total order of $V(G)$ and assume that $v_{\ell+1}$ is always chosen as the smallest eligible vertex. 
Consider any run of the algorithm for $G,\alpha,\wei$ and for $i\in \{0,\ldots,k-1\}$ let $D_i$ be the unique heavy connected component of $G_i$.
Since $\alpha<1/2$, subgraphs $D_i$ pairwise intersect. Since $G_0,G_1,G_2,\ldots,G_{k-1}$ is a descending chain in the induced subgraph order and each $D_i$ is a connected component of $G_i$, we conclude that 
$D_0,D_1,D_2,\ldots,D_{k-1}$ is also a descending chain in the induced subgraph order. Consequently, there exists a vertex $z$ that is contained in each of $D_0,D_1,\ldots,D_{k-1}$.
Now comes the main observation: knowing $z$ and having constructed $G_i$, we may identify $D_i$ as the unique connected component of $G_i$ that contains $z$.
Thus, a path $Q$ suitable for $\alpha,\wei$ can be constructed knowing only $k$ and $z$ (given the total order fixed beforehand).
Constructing such a path $Q$ for every choice of $k$ and $z$, of which there are at most $\Oh(|V(G)|^2)$ many, yields the desired family $\Qq$. 
\end{proof}

Note that in the statement of Theorem~\ref{lem:gyarfas-path}, graph $G_{k+1}$ is equal to $G-N[Q]$ unless $Q$ is empty, when it is equal to $G-u$.

Now Theorem~\ref{thm:Pt-free-dispersible} follows from a straightforward combination of Lemmas~\ref{lem:guess-heavy} and Lemma~\ref{lem:gyarfas-path}.

\begin{proof}[Proof of Theorem~\ref{thm:Pt-free-dispersible}]
We first argue the $\frac{1}{2}$-uniform dispersibility. Set $\tau = \frac{1/4}{t-1}$ and assume $G$ is an $n$-vertex connected $P_t$-free graph and $|N[v]| \leq \tau \sqrt{n}$ for every $v\in V(G)$.
Apply Lemma~\ref{lem:gyarfas-path} to~$G$, arbitrary $u \in V(G)$, $\alpha = 1/4$, and uniform weight function $\wei$, obtaining a path $Q$.
Note that $Q$ cannot be empty, as $G$ is connected and the weight function is uniform.
Since $G$ is $P_t$-free, $Q$ has at most $t-1$ vertices, so $X := N[V(Q)]$ has size at most $\frac{1}{4} \sqrt{n}$. 
On the other hand, every connected component $C$ of $G-X = G_{k+1}$ has at most $(1-\alpha) |V(G)| = \frac{3}{4}n$ vertices, so
$|X| \leq n^{-\frac{1}{2}} (n - |C|)$. Hence, we can return $X$ and a trivial extended strip decomposition of $G-X$ as the desired uniform disperser.

For QP-dispersibility, the argument is only slightly longer.
Without loss of generality assume $t\geq 4$. We argue that the class of $P_t$-free graphs satisfies the prerequisites of Lemma~\ref{lem:guess-heavy}.
Thus we assume we are given a connected $P_t$-free graph $G$ and a parameter $\leps>0$.
Consider applying Lemma~\ref{lem:gyarfas-path} to $G$ and any vertex $u\in V(G)$.
We infer that in polynomial time we can construct a polynomial-size family $\Qq$ of induced paths in $G$ satisfying in particular the following:
for each weight function $\wei\colon V(G)\to \N$ there exists $Q\in \Qq$ such that $\wei(C)\leq \frac{3}{4}\wei(G)$ for every $C\in \cc(G-X)$, where $X=N[Q]$ if $Q$ is non-empty and $X=\{u\}$ otherwise.
Since $G$ is $P_t$-free, every path in $\Qq$ has less than $t$ vertices.
Consequently, supposing $\wei(N[v])\leq \frac{\leps}{4t} \cdot \wei(V(G))$ for every vertex $v$, we have $\wei(X)\leq \leps/4 \cdot \wei(V(G))$ for every $Q\in \Qq$, 
and in particular $\wei(X)\leq \leps\cdot \wei(G-C)$ for every $C\in \cc(G-X)$.

From $\Qq$ construct a family $\Nn$ by including, for every $Q\in \Qq$, a pair $(X,(H,\esd))$ where $X$ is as above and $(H,\esd)$ is the trivial extended strip decomposition of $G-X$.
The reasoning of the previous paragraph shows that the assumptions of Lemma~\ref{lem:guess-heavy} are satisfied for $p(\leps)=\frac{\leps}{4t}$.
Therefore, from Lemma~\ref{lem:guess-heavy} we conclude that the class of $P_t$-free graphs is strongly QP-dispersible.
\end{proof}

\subsection{Graphs without long holes}\label{ss:long-hole}

The proof of Theorem~\ref{thm:hole-free-dispersible} follows from applying exactly the same reasoning as in the proof of Theorem~\ref{thm:Pt-free-dispersible},
except that in order to obtain a suitable path family $\Qq$ we use the following Lemma~\ref{lem:hole-free-separator}, instead of Lemma~\ref{lem:gyarfas-path}.
Furthermore, the lemma below also directly implies Theorem~\ref{thm:hole-tw}
via standard arguments (see e.g. Corollary 1 of~\cite{BacsoLMPTL19}).

\begin{lemma}\label{lem:hole-free-separator}
Let $G$ be a connected $C_{\geq t}$-free graph supplied with a weight function $\wei\colon V(G)\to \N$.
Then in $G$ there is an induced path $Q$ on less than $t$ vertices such that
$$\wei(C)\leq \frac{3}{4}\wei(G)\quad\textrm{for every }C\in \cc(G-N[Q]).$$
Moreover, given $G$ alone, one can enumerate in polynomial time a family $\Qq$ of $\Oh(|V(G)|^2)$ induced paths on less than $t$ vertices with a guarantee that for every weight function $\wei$ there exists $Q\in \Qq$
satisfying the above for $\wei$.
\end{lemma}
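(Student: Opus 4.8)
The plan is to derive Lemma~\ref{lem:hole-free-separator} from the Gy\'arf\'as path construction (Lemma~\ref{lem:gyarfas-path}) by a \emph{shortening} argument, which is the only place where $C_{\geq t}$-freeness enters. Applying Lemma~\ref{lem:gyarfas-path} with $\alpha=\tfrac14$ yields an induced path $Q$ (or, in the degenerate case $k=-1$, a single vertex $u$, and then $G-N[u]\subseteq G-u$ already has all components light, so $(u)$ works) such that every component of $G-N[Q]$ has weight at most $\tfrac34\wei(G)$. The only defect is that $Q$ may have many more than $t$ vertices (e.g.\ on a long induced path, which is $C_{\geq 3}$-free, the construction produces linearly many vertices). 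So it suffices to prove that among all induced paths that separate $G$ this well, a \emph{shortest} one has fewer than $t$ vertices, and to make this constructive enough for the enumeration part.

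Thus the heart of the proof is the structural claim: \emph{if $G$ is $C_{\geq t}$-free and $Q^\star=q_0q_1\cdots q_m$ is a shortest induced path such that every component of $G-N[Q^\star]$ has weight at most $\tfrac34\wei(G)$, then $m\leq t-2$.} Suppose $m\geq t-1$ (so $m\geq 1$). By minimality, the strictly shorter induced paths $q_1\cdots q_m$ and $q_0\cdots q_{m-1}$ do not separate, so $G-N[q_1,\ldots,q_m]$ has a component $C_0$ of weight $>\tfrac34\wei(G)$ and $G-N[q_0,\ldots,q_{m-1}]$ has such a component $C_m$. Since $C_0$ is heavier than every component of $G-N[Q^\star]$, it is not contained in $V(G)\setminus N[q_0]$, so it contains a vertex $x_0$ with $x_0\sim q_0$ and $x_0$ anticomplete to $\{q_1,\ldots,q_m\}$; symmetrically $C_m$ contains $y_0\sim q_m$ anticomplete to $\{q_0,\ldots,q_{m-1}\}$.

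The step that uses $C_{\geq t}$-freeness — and the one I expect to require the most care — is the following. As $\wei(C_0)+\wei(C_m)>\tfrac32\wei(G)$, the sets $C_0,C_m$ intersect; pick $w\in C_0\cap C_m$. Then $w$ avoids both $N[q_1,\ldots,q_m]$ and $N[q_0,\ldots,q_{m-1}]$, hence is anticomplete to $Q^\star$ and lies in some component $K$ of $G-N[Q^\star]$. Following a path from $x_0$ to $w$ inside the connected set $C_0$ and taking its last vertex still adjacent to $q_0$ gives a vertex $x$ with $x\sim q_0$, $x$ anticomplete to $\{q_1,\ldots,q_m\}$, and $x$ adjacent to $K$ (the portion of the path past $x$ stays outside $N[Q^\star]$ and reaches $w$, hence lies in $K$); symmetrically we get $y\sim q_m$ with $y$ anticomplete to $\{q_0,\ldots,q_{m-1}\}$ and $y$ adjacent to $K$. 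Here $x\neq y$ (otherwise $x$ would be anticomplete to all of $Q^\star$, contradicting $x\sim q_0$), so a shortest $x$–$y$ path $R$ in $G[K\cup\{x,y\}]$ has all internal vertices in $K$. Concatenating $q_0q_1\cdots q_m$, the edge $q_my$, the reverse of $R$, and the edge $xq_0$ yields a cycle; its only conceivable chords would join $\{x,y\}$ or the interior of $R$ to $\{q_0,\ldots,q_m\}$, or lie inside $R$, and all are excluded because $K$ is anticomplete to $Q^\star$, $x$ (resp.\ $y$) meets $Q^\star$ only in $q_0$ (resp.\ $q_m$), and $R$ is induced. This is an induced cycle on at least $m+3\geq t+2$ vertices, contradicting $C_{\geq t}$-freeness; this proves the claim, and the existential part of the lemma follows by taking $Q^\star$ shortest, which exists by Lemma~\ref{lem:gyarfas-path}.

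The remaining points are routine. The degenerate situations ($\wei(G)=0$, or $Q^\star$ a single vertex, in which case $m=0<t-1$ and there is nothing to prove) are trivial. For the enumeration, the cheapest option is to list all induced paths on fewer than $t$ vertices — polynomially many since $t$ is a fixed constant — which suffices because for every $\wei$ the shortest separating induced path is among them; to get the stated $\Oh(|V(G)|^2)$ bound one instead runs the algorithmic part of Lemma~\ref{lem:gyarfas-path} to obtain an $\Oh(|V(G)|^2)$-size family of candidate paths and, for the correct candidate, replaces it by a shortest separating contiguous sub-path (still a separator, since shrinking the path only shrinks components), organising the search so that $\Oh(|V(G)|^2)$ paths are emitted overall. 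Finally, Theorem~\ref{thm:hole-tw} follows by feeding Lemma~\ref{lem:hole-free-separator} into the standard recursive construction of a tree decomposition, exactly as in Corollary~1 of~\cite{BacsoLMPTL19}.
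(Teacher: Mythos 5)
Your existential argument is correct, and it is genuinely different from the paper's. The paper takes the Gy\'arf\'as path $R=(v_0,\ldots,v_k)$ and, when $k\geq t-1$, proves that one of the two specific $(t-1)$-vertex suffixes $R'=(v_{k-t+1},\ldots,v_{k-1})$, $R''=(v_{k-t+2},\ldots,v_k)$ already separates, by showing that otherwise the heavy components of $G-N[R']$ and $G-N[R'']$, together with the middle segment $v_{k-t+2},\ldots,v_{k-1}$, yield an induced cycle of length $\geq t$. You instead observe that a shortest separating induced path $Q^\star=q_0\cdots q_m$ must have $m\leq t-2$, by comparing the heavy components of $G-N[q_1,\ldots,q_m]$ and of $G-N[q_0,\ldots,q_{m-1}]$ (which exist by minimality) and closing up an induced cycle through a common component $K$ of $G-N[Q^\star]$. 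Both proofs hinge on the same combinatorial core---two heavy components must overlap and their overlap lets you route an induced cycle around the excluded portion of the path---but your formulation is cleaner as a structural statement ("every minimum-length separating induced path is short") and needs only properties~\ref{p:mid}/\ref{p:fin} implicitly, whereas the paper's version is more tightly engineered to yield the $\Oh(n^2)$ enumeration bound for free. I checked your construction of $x,y$ carefully (in particular that the tail of the $x_0$-$w$ path lies in $K$, that $x\neq y$, and that $x,y\notin V(Q^\star)$), and it is sound; the cycle has $\geq m+3\geq t+2$ vertices.

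The enumeration part, however, contains an error and a gap. The parenthetical justification in your second proposal, "still a separator, since shrinking the path only shrinks components", is backwards: if $Q'$ is a subpath of $Q$ then $N[Q']\subseteq N[Q]$, so $G-N[Q']\supseteq G-N[Q]$ and its components can only \emph{grow}; a sub-path of a separating path need not separate (this is exactly why the nontrivial claim is that a \emph{short} separating path exists at all). Moreover, "replace it by a shortest separating contiguous sub-path" cannot be done given $G$ alone, since which sub-path separates depends on $\wei$. Your first proposal---enumerate all induced paths on $<t$ vertices---is correct and weight-independent, and it suffices for the algorithmic applications, but it gives $\Oh(n^{t-1})$ paths rather than the stated $\Oh(n^2)$. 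To actually achieve $\Oh(n^2)$ one needs a weight-uniform rule that emits $\Oh(1)$ short candidates per Gy\'arf\'as path; the paper's device of taking the two $(t-1)$-vertex suffixes $R',R''$ (and proving that one of them always works) is precisely such a rule. So your existential proof stands on its own, but to recover the literal $\Oh(n^2)$ enumeration bound you would still need an argument along the paper's lines, or else weaken the size bound in the statement to $\poly(n)$.
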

\begin{proof}
Without loss of generality assume $t\geq 4$.
We first focus on proving the existential statement. At the end we will argue how the enumeration statement can be derived from the enumeration statement of Lemma~\ref{lem:gyarfas-path}.

Fix any vertex $u$ in $G$ and apply the existential statement of Lemma~\ref{lem:gyarfas-path} to $G$, vertex $u$, weight function $\wei$, and $\alpha=\frac{1}{4}$.
This yields an induced path $R=(v_0,v_1,\ldots,v_k)$ satisfying properties~\ref{p:fin} and~\ref{p:mid}, where $v_0=u$. If $k+1<t$ then, by~\ref{p:fin}, we may simply take $Q=R$, or $Q=(u)$ 
in case $R$ is the empty path. Hence, from now on assume that $k\geq t-1$.

Let $R'$ and $R''$ be the subpaths of $R$ defined as
$$R'=(v_{k-t+1},\ldots,v_{k-1})\qquad\textrm{and}\qquad R''=(v_{k-t+2},\ldots,v_k).$$
Note that each of $R',R''$ has $t-1$ vertices. 
In the rest of the proof we argue the following claim: one of paths $R',R''$ satisfies the condition required of $Q$.

Suppose otherwise: there are components $D'\in \cc(G-N[R'])$ and $D''\in \cc(G-N[R''])$ with $\wei(D')>\frac{3}{4}\wei(G)$ and $\wei(D'')>\frac{3}{4}\wei(G)$.
Note that then $D'$ and $D''$ are unique. We observe the following.

\begin{claim}\label{cl:Dp}
$D'$ is adjacent to $v_{k}$.
\end{claim}
\begin{proof}
By property~\ref{p:mid} of Lemma~\ref{lem:gyarfas-path}, $G-N[v_0,\ldots,v_{k-1}]$ contains a (unique) connected component $C$ of weight more than $\frac{3}{4}\wei(G)$ that is moreover adjacent to $v_k$.
As $G-N[v_0,\ldots,v_{k-1}]$ is an induced subgraph of $G-N[R']$ and $\wei(D')>\frac{3}{4}\wei(G)$, it follows that $C$ is contained in $D'$.
Hence $D'$ is adjacent to $v_k$.
\cqed\end{proof}

\begin{claim}\label{cl:Dpp}
$D''$ is adjacent to $v_{k-t+1}$.
\end{claim}
\begin{proof}
Graph $G-N[v_0,\ldots,v_k]$ can be obtained from $G-N[R'']$ by removing vertices $v_0,\ldots,v_{k-t}$ and all the neighbors of $v_0,\ldots,v_{k-t+1}$ that do not belong to $N[R'']$; 
denote the set of those vertices by $Z$.
Thus, every connected component of $G-N[R'']$ that is not a connected component of $G-N[v_0,\ldots,v_k]$ necessarily contains a vertex of $Z$.
Since by property~\ref{p:fin} of Lemma~\ref{lem:gyarfas-path}, no connected component of $G-N[v_0,\ldots,v_k]$ has weight more than $\frac{3}{4}\wei(G)$, while this is the case for $D''$,
we conclude that $V(D'')\cap Z\neq \emptyset$. Now observe that $G[Z\cup \{v_{k-t+1}\}]$ is connected and all vertices of $Z$ are present in $G-N[R'']$. Hence some vertex of
$V(D'')\cap Z$ is adjacent to $v_{k-t+1}$, implying the claim.
\cqed\end{proof}

\begin{claim}\label{cl:Dintersect}
$V(D')\cap V(D'')\neq \emptyset$.
\end{claim}
\begin{proof}
Follows immediately from $\wei(D')>\frac{3}{4}\wei(G)$ and $\wei(D'')>\frac{3}{4}\wei(G)$.
\cqed\end{proof}

By Claims~\ref{cl:Dp},~\ref{cl:Dpp},~\ref{cl:Dintersect} it follows that there exists an induced path $P$ with endpoints $v_{k-t+1}$ and $v_k$ whose all internal vertices belong to $V(D')\cup V(D'')$.
As vertices of $V(D')\cup V(D'')$ are non-adjacent to $v_{k-t+2},\ldots,v_{k-1}$ by definition, path $P$ together with the subpath of $R$ from $v_{k-t+2}$ to $v_{k-1}$ induce a cycle of length 
at least $t$, a contradiction.

For the enumeration statement, it suffices to compute the family $\Rr$ provided by Lemma~\ref{lem:gyarfas-path} and, for every $R\in \Rr$, include in $\Qq$ either $R$, if its number of vertices is less than $t$,
or both $R'$ and $R''$ as defined above for $R$.
\end{proof}

\section{Rooted subdivided claw}\label{sec:spider}
In this section we will focus on the classes of graphs excluding a claw subdivided a fixed number of times. 
We try to construct such subdivided claws with the use of Theorem~\ref{thm:3-in-a-tree}. 
This provides us with extended strip decompositions of the considered graphs. 

We introduce a useful lemma that encapsulates the way we will use Theorem~\ref{thm:3-in-a-tree}. We first need a definition.

\begin{definition}
Let $G$ be a graph and let $Z\subseteq V(G)$ be such that $|Z|=3$.
An extended strip decomposition $(H,\esd)$ {\em{shatters}} $Z$ if the following condition holds:
whenever $P_1,P_2,P_3$ is a triple of induced paths in $G$ that are pairwise disjoint and non-adjacent, and each of them has one endpoint in $Z$,
then there is no atom in $(H,\esd)$ that intersects or is adjacent to each of $P_1,P_2,P_3$.
\end{definition}

\begin{lemma}\label{lem:claw-shatter}
Let $G$ be a graph and let $Z\subseteq V(G)$ be such that $|Z|=3$.
Then one can in polynomial time find either an induced tree in $G$ containing all vertices of $Z$, or an extended strip decomposition $(H,\esd)$ of $G$ that shatters $Z$.
\end{lemma}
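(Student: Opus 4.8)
The plan is to apply Theorem~\ref{thm:3-in-a-tree} (the Three-in-a-Tree Theorem) with a carefully chosen set $Z$, but we cannot simply plug in the given $Z$ because that theorem requires that \emph{every} induced tree meets $Z$ in at most two vertices, and in our setting we do not know this --- indeed the desired outcome is precisely to find such a tree if one exists. The right move is to run the polynomial-time algorithm of Theorem~\ref{thm:3-in-a-tree} on $G$ and $Z$; if its precondition is violated, the only way this can manifest is through the existence of an induced tree meeting $Z$ in all three vertices, and we will want to extract that tree as our first output. More precisely, I would first reduce to the case that $G$ is connected (if $G$ is disconnected, then either two vertices of $Z$ lie in the same component --- shrink $Z$ or handle a smaller instance --- or all three lie in different components, in which case no induced tree contains $Z$ and the trivial extended strip decomposition, which puts each component into its own piece, vacuously shatters $Z$ since no triple of pairwise non-adjacent paths rooted in $Z$ can all touch a single atom).

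Assuming $G$ connected, I would invoke Theorem~\ref{thm:3-in-a-tree} on $(G, Z)$. Here is the subtle point: that theorem as stated assumes the ``no induced tree sees all of $Z$'' hypothesis and then produces a decomposition. To make it usable as an algorithmic dichotomy, one runs the algorithm regardless; either it returns a valid extended strip decomposition $(H,\esd)$ in which each $z \in Z$ occupies a distinct degree-one vertex $w_z$ with $\esd(e_z, w_z) = \{z\}$, or (implicitly, by the contrapositive of the theorem) we can detect that the hypothesis fails and recover an induced tree through $Z$. If the existing Three-in-a-Tree machinery does not directly hand us the tree in the failure case, I would instead observe that the existence of an induced tree containing $x,y,z$ is itself decidable in polynomial time by the same theorem applied to test whether one exists --- so we can first \emph{test} for such a tree (if yes, output it), and only in the ``no'' case do we have the guarantee that every induced tree meets $Z$ in at most two vertices, at which point Theorem~\ref{thm:3-in-a-tree} applies cleanly and yields $(H,\esd)$.

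It then remains to verify that the extended strip decomposition $(H,\esd)$ returned in the ``no tree'' case actually \emph{shatters} $Z$ in the sense of the definition --- this is the heart of the argument. Suppose for contradiction that $P_1, P_2, P_3$ are pairwise disjoint, pairwise non-adjacent induced paths, each with an endpoint in $Z$, say $z_i$ is the endpoint of $P_i$, and suppose some atom $A$ of $(H,\esd)$ intersects or is adjacent to all three of them. The goal is to build from this configuration an induced tree in $G$ containing $z_1, z_2, z_3$, contradicting the ``no tree'' case. The idea is that each $z_i$ occupies a leaf $w_{z_i}$ of $H$ with $\esd(e_{z_i}, w_{z_i}) = \{z_i\}$, so the path $P_i$ must leave this singleton piece and travel through the decomposition; meanwhile, an atom $A$ that is adjacent to or intersects all three $P_i$'s is, by the structure of atoms (each atom is a localized union of $\esd$-sets around one edge, vertex, or triangle of $H$), a place where the three paths can be ``glued'' --- one roots a subdivided-claw-like induced tree at or near $A$, using segments of $P_1, P_2, P_3$ as the three legs. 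The non-adjacency of the $P_i$'s ensures no unwanted chords between legs, and the fact that atoms $A$ and $A'$ that are not in conflict have no edges between them (the observation right after the conflict definition) is what lets us keep the three legs internally non-adjacent as we route them from the $w_{z_i}$ leaves toward $A$ along non-conflicting atoms.

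The main obstacle I anticipate is exactly this last verification: carefully checking that ``$A$ touches all three $P_i$'s'' can be converted into an actual induced tree through $Z$, handling all the atom types ($A_e^\bot$, $A_e^u$, $A_e^v$, $A_e^{uv}$, $A_v$, $A_T$) and the various ways a path can enter/exit an atom through the boundary sets $\esd(e,u)$. One has to be careful about the case analysis of how $P_i$ interacts with the conflict graph of atoms and to shortcut each $P_i$ to a minimal induced subpath from $z_i$ to the first point where it meets (the closure of) $A$, then argue the union of these three shortened paths plus possibly one connector vertex inside $A$ induces a tree. I would expect the author to either do this case analysis directly or to cite a cleaner packaged statement; since the lemma is stated as a clean dichotomy, I suspect the real content is a short argument of the form ``a shattering failure witnesses a tree, so by Three-in-a-Tree either we find the tree or the decomposition must shatter.''
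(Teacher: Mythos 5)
Your framework matches the paper at the outer level: reduce to the connected case (the disconnected case is actually simpler than you make it --- the trivial extended strip decomposition shatters $Z$ whenever $Z$ is not contained in one component, since an atom is then a single connected component and can contain paths rooted at two $z$'s at most), then apply Theorem~\ref{thm:3-in-a-tree} and handle both outcomes. But the heart of the lemma --- verifying that the decomposition returned in the ``no tree'' case shatters $Z$ --- is where your route diverges from the paper's and is left as a sketch.

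You propose to argue by contradiction: assume some atom $A$ touches all three paths $P_1,P_2,P_3$, and from that configuration \emph{build} an induced tree through $z_1,z_2,z_3$. This would, in effect, re-derive part of the proof of Three-in-a-Tree, and it is harder than the task requires: constructing an induced tree inside the atom $A$ and grafting the three (shortened, non-adjacent) legs onto it is delicate precisely because $A$ can be a union $A_e^{uv}=\esd(u)\cup\esd(v)\cup\esd(uv)\cup\bigcup_{T\in\triangles(e)}\esd(T)$ where the legs enter through incompatible boundary sets, and there is no a priori single vertex that can serve as the tree's branching point. The paper does not take this route. It proves shattering \emph{directly} as a consequence of the peripherality that Theorem~\ref{thm:3-in-a-tree} guarantees (this is Lemma~\ref{lem:esd-atoms-protected}): it analyzes how an induced path traverses the features of $(H,\esd)$ via ``visits'', shows (Claim~\ref{cl:disjoint-esd}) that two disjoint non-adjacent paths with peripheral starts cannot both meet the same half-side set $\esd(uv,u)$, concludes (Claim~\ref{cl:entry-ruined}) that at most one of the three paths can meet $K_u$ and at most one can meet $K_v$, and then derives a contradiction from the third path being confined to a region whose neighbourhood lies in $K_u\cup K_v$. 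No tree is ever constructed in this argument. So your plan identifies the right ingredients but misses the actual mechanism, and the verification you flagged as ``the main obstacle I anticipate'' is exactly the content that is missing from the proposal.
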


The proof of Lemma~\ref{lem:claw-shatter} is postponed to Section~\ref{sec:claw-shatter}.
Note that contrary to Theorem~\ref{thm:3-in-a-tree}, Lemma~\ref{lem:claw-shatter} does not assume that the graph is connected.

We move to the main point of this section, which concerns classes excluding subdivided claws.

\begin{definition}
A {\em{subdivided claw}} is a graph obtained from the claw $K_{1,3}$ and subdividing each of its edges an arbitrary number of times.
The degree-$1$ vertices are then called the {\em{tips}} of the claw, while the unique vertex of degree $3$ is the {\em{center}}. 
A subdivided claw is a {\em{$(\geq t)$-claw}} if all its tips are at distance at least $t$ from its center.
A graph $G$ is {\em{$Y_{\geq t}$-free}} if it does not contain any $(\geq t)$-claw as an induced subgraph.
\end{definition}

\begin{theorem}\label{thm:claw-free-dispersible}
For every $t\in \N$, the class of $Y_{\geq t}$-free graphs is QP-dispersible and $\frac{1}{9}$-uniformly dispersible.
\end{theorem}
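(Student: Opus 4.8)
The proof will follow the same template used for $P_t$-free graphs (Theorem~\ref{thm:Pt-free-dispersible}), but now we must handle induced subdivided claws rather than paths. The key new ingredient is Lemma~\ref{lem:claw-shatter}: whenever we try to build a large balanced separator out of a constant number of neighborhoods via the Gy\'arf\'as path argument, the only way this can fail in a $Y_{\geq t}$-free graph is that we are forced to produce a long induced subdivided claw --- and Lemma~\ref{lem:claw-shatter}, applied to a suitable triple $Z$, instead hands us an extended strip decomposition that ``shatters'' $Z$. The first task is therefore to set up the correct triple of vertices whose induced connecting tree would complete a forbidden claw. Concretely, starting from a connected $Y_{\geq t}$-free graph $G$ (with all neighborhoods small, in the uniform case, or with no heavy vertices, in the QPTAS case via Lemma~\ref{lem:guess-heavy}), I would run the Gy\'arf\'as path construction of Lemma~\ref{lem:gyarfas-path} not from an arbitrary vertex but from each vertex of a carefully chosen short induced path, so that extending the separator in three ``directions'' would glue three long pendant paths onto a common stem.

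**The core argument.** First I would reduce, via Lemma~\ref{lem:guess-heavy} with $p(\leps)$ a suitable polynomial, to the situation of a connected $Y_{\geq t}$-free graph $G$ with a weight function $\wei$ in which $\wei(N[v]) \le p(\leps)\wei(V(G))$ for all $v$; the uniform case is the analogous reduction with $|N[v]| \le \tau n^{1/9}$. Now pick a vertex $u$ and run the enumerable Gy\'arf\'as path family $\Qq$ of Lemma~\ref{lem:gyarfas-path} with $\alpha$ a small constant. If the resulting path $Q$ has fewer than, say, $3t$ vertices, then $N[Q]$ is a balanced separator of small weight (resp.\ small size), and we output it with a trivial extended strip decomposition; this is exactly the $P_t$ situation. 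The interesting case is when $Q$ is long. Then, as in Lemma~\ref{lem:hole-free-separator}, I would locate a suitably late prefix/segment of $Q$ and use property~\ref{p:mid} to find that a heavy component $D$ hanging off the tip of $Q$ is adjacent to the tip. This gives me a long induced path (a segment of $Q$) together with a ``reservoir'' $D$ into which one more long path could be routed; to get a full claw I need three such reservoirs. So I would take three appropriately spaced vertices $z_1,z_2,z_3$ along $Q$ --- far enough apart (distance at least roughly $t$) that their neighborhoods on $Q$ are disjoint --- and set $Z = \{z_1,z_2,z_3\}$, then apply Lemma~\ref{lem:claw-shatter} to $G$ and $Z$. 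If it returns an induced tree on $Z$, this tree plus the three long disjoint pendant segments of $Q$ attached at $z_1,z_2,z_3$ contains an induced $(\geq t)$-claw, contradicting $Y_{\geq t}$-freeness (here one must be careful: the tree returned may use vertices interfering with $Q$, so the triple $Z$ and the segments must be chosen using the heavy components $D$ as in the long-hole proof, guaranteeing the connecting paths can be pushed into the reservoirs and stay induced and non-adjacent to the stem). Otherwise, Lemma~\ref{lem:claw-shatter} returns an extended strip decomposition $(H,\esd)$ of $G$ that shatters $Z$; I then argue this $(H,\esd)$ (possibly after restricting to $G-X$ for a small separator $X=N[\text{short part of }Q]$) is the disperser element we want --- every atom is adjacent to at most two of the three directions, hence a Gy\'arf\'as-type bound forces every atom to be ``small'' (a constant fraction lighter, resp.\ smaller by $n^{1/9}$).

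**Turning ``shattered'' into ``shrinking''.** The subtle point, and the place where the $1/9$ exponent comes from, is quantifying how shattering bounds atom weights/sizes. I would iterate the Gy\'arf\'as argument inside each atom $A$: since $(H,\esd)$ shatters $Z$, no atom can be simultaneously adjacent to all three of the long pendant paths, so within any atom at most two of the three ``branch directions'' survive; restricting attention to an atom that is too large and re-running Gy\'arf\'as from two of the three anchors yields, again via Lemma~\ref{lem:claw-shatter}, either a further refinement or a forbidden claw. Chaining this a bounded number of times drives every atom below the $(1-\Cshrink)$ weight threshold (resp.\ below $n - n^{1/9}$ vertices), at the cost of a polynomial blow-up in the number of enumerated decompositions and of the separator $X$ accumulating $\Oh(t)$ neighborhoods --- still fine for Lemma~\ref{lem:guess-heavy}, and for the uniform case the separator has $\Oh(t \cdot \tau n^{1/9}) = \Oh(n^{1/9})$ vertices, which is why $\Ehit = 1/9$ rather than $1/2$. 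Finally, feeding the resulting enumerable family $\Nn$ of pairs $(X,(H,\esd))$ into Lemma~\ref{lem:guess-heavy} (with $p$ the polynomial governing the weight loss) yields QP-dispersibility, and the analogous uniform family yields $\frac19$-uniform dispersibility.

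**Main obstacle.** The hard part is the third paragraph: making the shattering property actually yield a quantitative ``every atom is a constant fraction smaller'' bound, and doing the bookkeeping so that the number of times we must recurse on atoms (and hence the number of neighborhoods we throw into $X$, and the depth to which we must iterate the enumeration) stays bounded by a function of $t$ alone. Managing the interaction between the Gy\'arf\'as path construction (which is inherently about one ``center'' at a time) and the three-way branching needed to expose a claw --- in particular ensuring the three pendant paths stay pairwise non-adjacent and induced so that Lemma~\ref{lem:claw-shatter}'s tree genuinely completes a $(\geq t)$-claw --- is where the real technical work lies, and is presumably the reason this is "the main technical part of the paper."
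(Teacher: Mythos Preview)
Your general framework is correct: reduce via Lemma~\ref{lem:guess-heavy} to a connected graph with no heavy vertices, run the Gy\'arf\'as path of Lemma~\ref{lem:gyarfas-path}, pick three anchor points along it, and apply Lemma~\ref{lem:claw-shatter} to those three points. The paper follows exactly this outline. However, your third paragraph (``Turning shattered into shrinking'') contains a genuine gap, and your explanation of the exponent $\frac{1}{9}$ is incorrect.

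The paper does \emph{not} iterate inside atoms or re-run Gy\'arf\'as there. Instead, the three anchor points are chosen with quantitative care so that shattering yields shrinking in one shot. Concretely, along the Gy\'arf\'as path $Q=(v_0,\ldots,v_k)$ one lets $p$ be the last index with $\wei(D_p)>(1-\leps^5)\wei(G)$ and $q$ the last index with $\wei(D_q)>(1-\leps^3)\wei(G)$, and sets $Z=\{v_0,v_p,v_q\}$. The three segments $R_1=(v_0,\ldots,v_{p-2})$, $R_2=(v_p,\ldots,v_{q-2})$, $R_3=(v_q,\ldots,v_{k-1})$ must each have \emph{heavy} neighborhood (at least $\leps^6/2$, $\leps^4/2$, $\leps^2/2$ times $\wei(G)$ respectively), because if any $N[R_i]$ were light then the corresponding prefix $N[v_0,\ldots]$ would already be a light balanced separator and we would be done with a trivial decomposition. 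Now the shattering property says that every atom $A$ is disjoint from $N[R_i]$ for some $i$; since that $N[R_i]$ is heavy, $\wei(A)\leq (1-\leps^6/2)\wei(G)$ immediately. No iteration is needed. This is the key idea your proposal is missing: the anchors are not ``three appropriately spaced vertices'' but \emph{threshold} indices chosen so that the segments between them are guaranteed heavy.

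Consequently the $\frac{1}{9}$ has nothing to do with accumulating neighborhoods under iteration. The separator $X$ is always just the neighborhood of $3t$ vertices (the three length-$t$ prefixes $Q_1,Q_2,Q_3$). The $\frac{1}{9}$ arises because the chain of thresholds $(1-\leps), (1-\leps^3), (1-\leps^5)$ together with the neighborhood bound $\leps^8$ and the resulting atom bound $1-\leps^7$ use roughly eight nested powers of $\leps$; setting $\leps=n^{-1/9}$ then matches both the degree hypothesis $|N[v]|\leq \leps^8 n = n^{1/9}$ and the uniform-disperser requirements $|X|\leq n^{-1/9}(n-|A|)$, $|A|\leq n-n^{1/9}$.

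Finally, on the issue you flagged about the tree from Lemma~\ref{lem:claw-shatter} possibly interfering with $Q$: the paper resolves this by \emph{detaching} the three length-$t$ prefixes $Q_1,Q_2,Q_3$ before applying Lemma~\ref{lem:claw-shatter}, i.e.\ working in $G'=G-((N(Q_1)\cup N(Q_2)\cup N(Q_3))\setminus\{v_t,v_{p+t},v_{q+t}\})$. In $G'$ each $Q_i$ is a pendant path attached only at its far end, so any induced tree in $G'$ containing $v_0,v_p,v_q$ is forced to contain $Q_1,Q_2,Q_3$ as its three arms and hence is a $(\geq t)$-claw. This is cleaner than routing through the heavy components $D$.
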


Theorem~\ref{thm:claw-free-dispersible} is a consequence of Lemma~\ref{lem:find-claw} below.
Indeed, to obtain QP-dispersibility if suffices to combine Lemma~\ref{lem:find-claw} with Lemma~\ref{lem:guess-heavy}, 
while to obtain $\frac{1}{9}$-uniformly dispersibility, apply Lemma~\ref{lem:find-claw} for $\leps=n^{-1/9}$ (setting $n_0$ sufficiently large such that $\leps < \frac{1}{100t}$),
      uniform weight function, and any $u$, 
and observe that a pair $(X,(H,\esd))$ satisfying~\ref{c:esd} is a $\frac{1}{9}$-uniform disperser; we can find such a pair in polynomial time by inspecting all the members of $\Nn$.

\begin{lemma}\label{lem:find-claw}
Fix an integer $t\geq 4$.
Let $G$ be a connected graph supplied with a weight function $\wei\colon V(G)\to \N$ 
and let $\leps\in (0,\frac{1}{100t})$ be
such that
\begin{equation}\label{eq:light}
\wei(N[v])\leq \leps^{8}\cdot \wei(G)\textrm{ for every }v\in V(G).
\end{equation}
Let $u$ be any vertex of $G$. Then there is either
\begin{enumerate}[label=(C\arabic*),ref=(C\arabic*)]
\item\label{c:claw} an induced $(\geq t)$-claw in $G$ with one of the tips being $u$, or
\item\label{c:esd}  a subset of vertices $X\subseteq V(G)$ and an extended strip decomposition $(H,\esd)$ of $G-X$ such that 
$$\wei(A)\leq (1-\leps^7)\cdot \wei(G)\quad\textrm{and}\quad\wei(X)\leq \leps\cdot \wei(G-A)\quad\textrm{for every atom }A\textrm{ of }(H,\esd).$$
\end{enumerate}
Moreover, given $G$ and $u$ one can in polynomial time either find conclusion~\ref{c:claw}, or enumerate a family $\Nn$ of $\Oh(|V(G)|^4)$ pairs $(X,(H,\esd))$ such that
for every $\leps \in (0,\frac{1}{100t})$ and every weight function $\wei\colon V(G)\to \N$ satisfying~\eqref{eq:light} there exists $(X,(H,\esd))\in \Nn$ satisfying~\ref{c:esd} for $\wei$ and $\leps$.
\end{lemma}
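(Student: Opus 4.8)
The proof of Lemma~\ref{lem:find-claw} combines the Gy\'arf\'as path construction of Lemma~\ref{lem:gyarfas-path} with the shattering tool of Lemma~\ref{lem:claw-shatter}, in roughly the following way. We start by running Lemma~\ref{lem:gyarfas-path} on $G$, the vertex $u$, some constant $\alpha$ (say $\alpha=\frac13$), and the weight function $\wei$, obtaining an induced Gy\'arf\'as path $Q=(v_0,\dots,v_k)$ with $v_0=u$. If $k$ is small (below some threshold depending on $t$), then $X := N[Q]$ has small weight by~\eqref{eq:light} and $(X, (H,\esd))$ with the trivial extended strip decomposition of $G-X$ witnesses~\ref{c:esd}: property~\ref{p:fin} gives the shrinking bound, and~\eqref{eq:light} applied to the at most $t-1$ vertices of $Q$ gives the safety bound $\wei(X)\le\leps\cdot\wei(G-A)$. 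So the interesting case is when $Q$ is long, and the idea is that a long induced path, together with the ``heavy component hanging off its far end'' guaranteed by property~\ref{p:mid}, is the starting material for building a $(\geq t)$-claw. Concretely, walk back along $Q$ from $v_k$ a distance slightly more than $t$, say to a vertex $v_j$; the suffix $(v_j,\dots,v_k)$ is one long ``leg'' of a putative claw, and the heavy component $D$ adjacent to $v_j$ (or to some $v_i$ with $i\ge j$) is where we will look for the other two legs.

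**The core dichotomy.** To find the other two legs, pick three vertices to feed into Lemma~\ref{lem:claw-shatter}. A natural choice is a vertex $z_0$ on the path near $v_j$ (or the neighbor of $v_j$ inside $D$) together with two further vertices $z_1, z_2$ that we guess inside the heavy component $D$ --- these are the intended endpoints of the two extra legs. Apply Lemma~\ref{lem:claw-shatter} to $Z=\{z_0,z_1,z_2\}$ inside a suitable subgraph (essentially $G$ restricted to $D$ plus the relevant path vertices, with the path's interior removed so that an induced tree through $Z$ is forced to use $D$). If Lemma~\ref{lem:claw-shatter} returns an induced tree through $Z$, this tree has three branches, and by combining it with the long suffix of $Q$ we extract an induced $(\geq t)$-claw centered somewhere along the path, with $u=v_0$ reachable as one tip --- this is conclusion~\ref{c:claw}. (We must choose the walk-back distance and the locations of $z_1,z_2$ carefully so that all three resulting legs have length $\ge t$ and the configuration is genuinely induced; this is where the constant $\frac1{100t}$ and the powers $\leps^7,\leps^8$ get their slack.) If instead Lemma~\ref{lem:claw-shatter} returns an extended strip decomposition $(H,\esd)$ of (the relevant subgraph of) $G$ that shatters $Z$, then no atom of $(H,\esd)$ can touch all three legs; one then argues that, after also deleting a small set $X$ (the neighborhoods of the chosen path vertices, plus possibly $N[J]$-type heavy guesses), every atom of the resulting extended strip decomposition of $G-X$ either misses $D$ entirely --- hence has weight at most $(1-\alpha)\wei(G)$ by~\ref{p:mid}'s uniqueness of the heavy component --- or, intersecting $D$ but shattered away from the three legs, still has weight bounded below $(1-\leps^7)\wei(G)$. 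Again~\eqref{eq:light} controls $\wei(X)$, giving~\ref{c:esd}.

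**The enumeration statement.** For the algorithmic, weight-oblivious version, we replace every weight-dependent choice by a bounded enumeration. Lemma~\ref{lem:gyarfas-path} already supplies a polynomial-size family $\Qq$ of candidate paths that works for all weight functions simultaneously. For each $R\in\Qq$ we branch over the (at most linearly many) choices of the walk-back vertex $v_j$, and over the choices of $z_1, z_2$ in $V(G)$ --- these account for the claimed $\Oh(|V(G)|^4)$ bound (a path from $\Qq$ contributes two coordinates via $(k,z)$, plus $z_1,z_2$). For each such tuple we run the polynomial-time Lemma~\ref{lem:claw-shatter}; if any tuple yields a tree we are done with~\ref{c:claw}, otherwise we put the resulting $(X,(H,\esd))$ into $\Nn$. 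The verification that for every $\wei$ satisfying~\eqref{eq:light} some member of $\Nn$ satisfies~\ref{c:esd} is exactly the existential argument above, read with the observation that all the choices it makes are among the enumerated ones.

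**Main obstacle.** The delicate part is the inductiveness and the length bookkeeping in the ``tree found'' branch: one must guarantee that splicing the Three-in-a-Tree tree onto the suffix of the Gy\'arf\'as path produces a genuine induced subdivided claw --- no unwanted edges between the three legs, each leg of length $\ge t$, and $u$ sitting at the end of one leg. Controlling the interaction between the path vertices $v_0,\dots,v_{j}$ (whose neighborhoods we delete or route around) and the component $D$, and choosing which vertex plays the center, is where the real care goes; the weight estimates in the other branch, by contrast, follow routinely from properties~\ref{p:fin} and~\ref{p:mid} together with~\eqref{eq:light}.
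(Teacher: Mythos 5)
The paper's proof chooses the three targets for three-in-a-tree all on the Gy\'arf\'as path itself, at indices $0$, $p$, $q$, where $p$ (resp.\ $q$) is the last index at which the heavy component $D_i$ of $G_i$ still has weight exceeding $(1-\leps^5)\wei(G)$ (resp.\ $(1-\leps^3)\wei(G)$). This is the crucial device you miss: the segments $R_1=(v_0,\dots,v_{p-2})$, $R_2=(v_p,\dots,v_{q-2})$, $R_3=(v_q,\dots,v_{k-1})$ each carry a substantial weight drop, so that (unless we already win via conclusion~\ref{c:esd}) $\wei(N[R_1])>\leps^6/2\cdot\wei(G)$, $\wei(N[R_2])>\leps^4/2\cdot\wei(G)$, and $\wei(N[R_3])>\leps^2/2\cdot\wei(G)$. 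When Lemma~\ref{lem:claw-shatter} (applied with $Z=\{v_0,v_p,v_q\}$, after detaching the $t$-vertex prefixes $Q_1,Q_2,Q_3$ of the $R_i$'s) returns a shattering decomposition, every atom misses $N[R_i]$ for some $i$ and therefore has weight at most $(1-\leps^6/2)\wei(G)$ --- that is the shrinking bound. When it returns a tree, the detached prefixes force the tree to contain a $(\geq t)$-claw with tips $v_0,v_p,v_q$, and since $v_0=u$ this is exactly conclusion~\ref{c:claw}. Note also that $\alpha$ must be chosen as $\leps$ (not a fixed constant like $1/3$) precisely so that the path lives long enough to realize the three weight thresholds.

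Your proposal replaces the two interior tips $v_p,v_q$ by arbitrary guessed vertices $z_1,z_2$ inside the heavy component, and this breaks both branches. In the shattering branch, your claimed bound (``intersecting $D$ but shattered away from the three legs, still has weight bounded below $(1-\leps^7)\wei(G)$'') does not follow: an atom that merely avoids the vicinity of $z_1$ loses only $\wei(N[z_1])\leq \leps^8\wei(G)$ by~\eqref{eq:light}, which is far too little to be useful. It is the \emph{position} of the targets on the Gy\'arf\'as path, at weight-level crossings, that makes ``missing one leg'' an expensive event, and no amount of care in the bookkeeping you flag as the ``main obstacle'' will substitute for it. In the tree branch your proposal is also incomplete: you walk back from $v_k$ to a vertex $v_j$ and take the suffix $(v_j,\dots,v_k)$ as one leg, but then $u=v_0$ is not in the picture, and the tree through $\{z_0,z_1,z_2\}$ contains none of $u$; it is unclear how one extracts a $(\geq t)$-claw with $u$ as a tip, which the statement requires. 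Finally, the enumeration count is off: branching over a path from $\Qq$ ($\Oh(n^2)$ by Lemma~\ref{lem:gyarfas-path}), a walk-back vertex $v_j$ and two free vertices $z_1,z_2$ gives $\Oh(n^5)$; the paper gets $\Oh(n^4)$ because the only extra guesses are the two path indices $(p,q)$.
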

\begin{proof}
We first focus on proving the existential statement. At the end we will argue how the enumeration statement can be derived using the enumeration statement of Lemma~\ref{lem:gyarfas-path}.

Apply Lemma~\ref{lem:gyarfas-path} to $G$, $u$, $\wei$, and $\alpha=\leps$, yielding a suitable path $Q=(v_0,\ldots,v_k)$, where $v_0=u$ (unless $k=-1$ and $Q$ is empty).
As in Lemma~\ref{lem:gyarfas-path}, denote $G_0=G-u$ and $G_i=G-N[v_0,\ldots,v_{i-1}]$ for $i\in \{1,\ldots,k+1\}$.
For $i\in \{0,\ldots,k+1\}$, let $D_i$ be the heaviest (w.r.t. $\wei$) connected component of $G_i$. Then by~\ref{p:mid} and~\ref{p:fin} we have
\begin{equation}\label{eq:Di}
\wei(D_i)>(1-\leps)\cdot\wei(G)\textrm{ for }i\leq k\qquad\textrm{and}\qquad \wei(D_{k+1})\leq (1-\leps)\cdot\wei(G).
\end{equation}
Also, as argued in the proof of Lemma~\ref{lem:gyarfas-path}, $D_j$ is an induced subgraph of $D_i$ for each $i,j\in \{0,\ldots,k\}$ with $i\leq j$.

If $\wei(D_0)\leq (1-\leps^5)\cdot \wei(G)$, then conclusion~\ref{c:esd} can be obtained by taking $X=\{v_0\}$ and $(H,\esd)$ to be the trivial extended strip decomposition of $G-X$.
This is because $\wei(X)=\wei(v_0)\leq \leps^{8}\cdot \wei(G)$ due to~\eqref{eq:light}, while $\wei(G-D)\geq \leps^5\cdot \wei(G)$ for every connected component $D$ of $G-X$.
Note that if $k=-1$, then in particular $\wei(D_0)\leq (1-\leps)\cdot \wei(G)\leq (1-\leps^5)\cdot \wei(G)$, so the above analysis can be applied as well.
Hence, from now on assume that $k\geq 0$ and $\wei(D_0)>(1-\leps^5)\cdot \wei(G)$.

Define $p$ and $q$ as the largest indices satisfying the following:
$$\wei(D_p)>(1-\leps^5)\cdot\wei(G)\qquad\textrm{and}\qquad\wei(D_q)>(1-\leps^3)\cdot \wei(G).$$
By~\eqref{eq:Di} and the discussion of the previous paragraph we have that $p$ and $q$ are well-defined and satisfy $0\leq p\leq q\leq k$.

We now observe that indices $0,p,q,k$ have to be well-separated from each other, or otherwise we are done.
For this, consider the following paths in $G$:
$$R_1=(v_0,v_1,\ldots,v_{p-2}),\qquad R_2=(v_p,v_{p+1},\ldots,v_{q-2}),\qquad R_3=(v_q,v_{q+1},\ldots,v_{k-1}).$$
Note that the above path formally may be empty in case the index of the second endpoint is smaller than that of the first endpoint; in a moment we will see that this is actually never the case.
We now verify that the neighborhood of each of these paths has to have a significant weight, or otherwise we are done.

\begin{claim}\label{cl:paths-heavy}
If we have
$$\wei(N[R_1])\leq (\leps^6/2) \cdot \wei(G)\quad\textrm{or}\quad\wei(N[R_2])\leq (\leps^4/2) \cdot \wei(G)\quad\textrm{or}\quad\wei(N[R_3])\leq (\leps^2/2) \cdot \wei(G),$$
then conclusion~\ref{c:esd} can be obtained.
\end{claim}
\begin{proof}
We first consider the case when $\wei(N[R_1])\leq (\leps^6/2) \cdot \wei(G)$, which is slightly simpler.
By assumption we have $\wei(D_{p+1})\leq (1-\leps^5)\cdot\wei(G)$ where $D_{p+1}$ is the heaviest connected component of $G-N[v_0,v_1,\ldots,v_p]$.
On the other hand, we have
\begin{eqnarray*}
\wei(N[v_0,v_1,\ldots,v_p]) & \leq & \wei(N[R_1])+\wei(N[v_{p-1}])+\wei(N[v_{p}])\\
                            & \leq & (\leps^6/2 + 2\leps^8)\cdot \wei(G)\\
                            & \leq & \leps^6\cdot \wei(G).
\end{eqnarray*}
Hence, we can obtain conclusion~\ref{c:esd} by taking $X=N[v_0,v_1,\ldots,v_p]$ and the trivial extended strip decomposition of $G-X$.
Indeed, for every connected component $D$ of $G-X$ we have $\wei(D)\leq \wei(D_{p+1})\leq (1-\leps^5)\cdot\wei(G)$, implying also that
$\wei(X)\leq \leps^6\cdot \wei(G)\leq \leps\cdot \wei(G-D)$.

Now, consider the case when $\wei(N[R_2])\leq (\leps^4/2)\cdot \wei(G)$.
Observe that we also have $\wei(N[R_1])\leq \wei(G)-\wei(D_p)<\leps^5\cdot \wei(G)$, because $D_p$ and $N[R_1]$ are disjoint.
By assumption we have $\wei(D_{q+1})\leq (1-\leps^3)\cdot\wei(G)$ where $D_{q+1}$ is the heaviest connected component of $G-N[v_0,v_1,\ldots,v_q]$.
On the other hand, we have
\begin{eqnarray*}
\wei(N[v_0,v_1,\ldots,v_q]) & \leq & \wei(N[R_1])+\wei(N[R_2])+\wei(N[v_{p-1}])+\wei(N[v_{q-1}]) \\
                            & \leq & (\leps^5 + \leps^4/2 + 2\leps^8)\cdot \wei(G) \\
                            & \leq & \leps^4\cdot \wei(G).
\end{eqnarray*}
Hence, we can obtain conclusion~\ref{c:esd} by taking $X=N[v_0,v_1,\ldots,v_q]$ and the trivial extended strip decomposition of $G-X$.
Indeed, for every connected component $D$ of $G-X$ we have $\wei(D)\leq \wei(D_{q+1})\leq (1-\leps^3)\cdot\wei(G)$, implying also that
$\wei(X)\leq \leps^4\cdot \wei(G)\leq \leps\cdot \wei(G-D)$.

Finally, consider the case when $\wei(N[R_3])\leq (\leps^2/2) \cdot \wei(G)$.
As in the previous case, we have $\wei(N[R_1])<\leps^5\cdot \wei(G)$ and $\wei(N[R_2])<\leps^3\cdot \wei(G)$ .
By the construction of $Q$ we have $\wei(D_{k+1})\leq (1-\leps)\cdot\wei(G)$ where $D_{k+1}$ is the heaviest connected component of $G-N[v_0,v_1,\ldots,v_k]$.
On the other hand, we have
\begin{eqnarray*}
\wei(N[v_0,v_1,\ldots,v_k]) & \leq & \wei(N[R_1])+\wei(N[R_2])+\wei(N[R_3]+\wei(N[v_{p-1}])+\wei(N[v_{q-1}])+\wei(N[v_k]) \\
                            & \leq & (\leps^5 + \leps^3 + \leps^2/2 + 3\leps^8)\cdot \wei(G) \\
                            & \leq & \leps^2\cdot \wei(G).
\end{eqnarray*}
Hence, we can obtain conclusion~\ref{c:esd} by taking $X=N[v_0,v_1,\ldots,v_k]$ and the trivial extended strip decomposition of $G-X$.
Indeed, for every connected component $D$ of $G-X$ we have $\wei(D)\leq \wei(D_{k+1})\leq (1-\leps)\cdot\wei(G)$, implying also that
$\wei(X)\leq \leps^2\cdot \wei(G)\leq \leps\cdot \wei(G-D)$.
\cqed\end{proof}

We proceed under the assumption that the prerequisite of Claim~\ref{cl:paths-heavy} does not hold, that is,
\begin{equation}\label{eq:paths-heavy}
\wei(N[R_1])>(\leps^6/2) \cdot \wei(G)\textrm{ and }\wei(N[R_2])>(\leps^4/2) \cdot \wei(G)\textrm{ and }\wei(N[R_3])>(\leps^2/2) \cdot \wei(G).
\end{equation}
From this we argue that $0,p,q,k$ have to be well-separated from each other.

\begin{claim}\label{cl:close-done}
It holds that
$$p-0>t+1\qquad\textrm{and}\qquad q-p>t+1\qquad\textrm{and}\qquad k-q>t+1.$$
\end{claim}
\begin{proof}
Observe that if $p-0\leq t+1$, then 
$$\wei(N[R_1])\leq \sum_{i=0}^{p-1}\wei(N[v_i])\leq (t+1)\leps^8\cdot \wei(G)<(\leps^6/2)\cdot \wei(G),$$
contradicting the assumption~\eqref{eq:paths-heavy}. The proof for the other two inequalities is analogous.
\cqed\end{proof}

We will also consider the following subpaths of $Q$:
$$Q_1=(v_0,v_1,\ldots,v_{t-1}),\qquad Q_2=(v_p,v_{p+1},\ldots,v_{p+t-1}),\qquad Q_3=(v_q,v_{q+1},\ldots,v_{q+t-1}).$$
Note that by Claim~\ref{cl:close-done}, paths $Q_1,Q_2,Q_3$ are pairwise disjoint and non-adjacent, and they are prefixes of $R_1,R_2,R_3$, respectively. Also, each of them consists of $t$ vertices.

Now, let
$$G'=G-((N(Q_1)\cup N(Q_2)\cup N(Q_3))\setminus \{v_t,v_{p+t},v_{q+t}\}).$$
Note that in $G'$, paths $Q_1,Q_2,Q_3$ are preserved, but they become {\em{detached}} in the following sense: only one endpoint ($v_{t-1},v_{p+t-1},v_{q+t-1}$, respectively) is adjacent to one vertex from the rest
of the graph ($v_t,v_{p+t},v_{q+t}$, respectively). Also, paths $R_1,R_2,R_3$ are also preserved in $G'$.

We now apply Lemma~\ref{lem:claw-shatter} to graph $G'$ with
$$Z=\{v_0,v_p,v_q\}.$$
This either yields an induced tree $T$ in $G'$ containing $v_0,v_p,v_q$, or an extended strip decomposition $(H',\esd')$ of $G'$ which shatters $v_0,v_p,v_q$.
In the first case, by the construction of $G'$ it follows that $T$ has to contain an induced $(\geq t)$-claw $T'$ with tips $v_0,v_p,v_q$. As $v_0=u$, then $T'$ witnesses that conclusion~\ref{c:claw} holds.
Hence, from now on we assume the second case.

Observe that
$$\wei(N[Q_1]\cup N[Q_2]\cup N[Q_3])\leq 3t\cdot \leps^8\cdot \wei(G)\leq (\leps^7/2) \cdot \wei(G).$$
Hence, it now suffices to prove the following:
\begin{equation}\label{eq:atoms-shattered}
\wei(A)\leq (1-\leps^6/2)\cdot \wei(G)\textrm{ for every atom }A\textrm{ of }(H,\esd).
\end{equation}
Indeed, if~\eqref{eq:atoms-shattered} holds,
then we can obtain conclusion~\ref{c:esd} by taking $X=N[Q_1]\cup N[Q_2]\cup N[Q_3]$ and $(H,\esd)$ to be $(H',\esd')$ with all the vertices of $V(Q_1)\cup V(Q_2)\cup V(Q_3)\cup \{v_t,v_{p+t},v_{q+t}\}$ removed,
because then
$$\wei(X)\leq (\leps^7/2) \cdot \wei(G)\leq \leps\cdot \wei(G-A)\textrm{ for every atom }A\textrm{ of }(H,\esd).$$

Suppose that, contrary to~\eqref{eq:atoms-shattered}, there exists an atom $A$ in $(H',\esd')$ such that $\wei(A)>(1-\leps^6/2)\cdot \wei(G)$.
Note that since $Q$ is an induced path in $G$, we have that $R_1,R_2,R_3$ are induced paths in $G'$ that are disjoint and pairwise non-adjacent. 
Since $(H',\esd')$ shatters $\{v_0,v_p,v_q\}$, we conclude that the atom $A$ is disjoint with $N[R_t]$ for at least one $t\in \{1,2,3\}$.
However, this combined with~\eqref{eq:paths-heavy} and the assumption that $\wei(A)>(1-\leps^6/2)\cdot \wei(G)$ yields that $\wei(A\cup N[R_t])>\wei(G)$, a contradiction.
This concludes the proof of the existential statement.

\medskip

For the enumeration statement, it suffices to enumerate the family $\Qq$ provided by Lemma~\ref{lem:gyarfas-path}, and for every $Q=(v_0,\ldots,v_k)$ and $0\leq p\leq q\leq k$ include in $\Nn$ the following pairs:
\begin{itemize}
\item $X=N[v_0,\ldots,v_p]$, and the trivial extended strip decomposition of $G-X$;
\item $X=N[v_0,\ldots,v_q]$, and the trivial extended strip decomposition of $G-X$;
\item $X=N[v_0,\ldots,v_k]$, and the trivial extended strip decomposition of $G-X$;
\item $X=N[v_0,\ldots,v_{t-1}]\cup N[v_p,\ldots,v_{p+t-1}]\cup N[v_q,\ldots,v_{q+t-1}]$, and the extended strip decomposition obtained by applying Theorem~\ref{thm:3-in-a-tree} to $G'$ 
(in the notation from the proof above) and $Z=\{v_0,v_p,v_q\}$. 
\end{itemize}
In the last point, if for any choice of $Q,p,q$ we obtain an induced $(\geq t)$-claw with $u$ as one of the tips, then it can be reported by the algorithm.
Otherwise from the above proof it is clear that the enumerated family $\Nn$ consists of $\Oh(|V(G)|^4)$ pairs and satisfies the required property.
\end{proof}

\subsection{Proof of Lemma~\ref{lem:claw-shatter}}\label{sec:claw-shatter}

The following technical lemma describes how triples of disjoint, non-adjacent paths starting at peripheral vertices behave in an extended strip decomposition of a graph.

\begin{lemma}\label{lem:esd-atoms-protected}
Let $(H,\esd)$ be an extended strip decomposition of a graph $G$.
Suppose $P_1,P_2,P_3$ are three induced paths in $G$ that are pairwise disjoint and non-adjacent, and moreover each of $P_1,P_2,P_3$ has an endpoint that is peripheral in $(H,\esd)$.
Then in $(H,\esd)$ there is no atom that would intersect or be adjacent to each of $P_1,P_2,P_3$.
\end{lemma}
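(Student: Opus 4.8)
The plan is to analyze how a peripheral vertex interacts with the structure of an extended strip decomposition, and in particular with the atoms. First I would establish a purely structural fact: if $z$ is peripheral in $(H,\esd)$, occupying a degree-one vertex $w_z$ with unique incident edge $e_z=w_zu_z$ and $\esd(e_z,w_z)=\{z\}$, then any induced path $P$ in $G$ with endpoint $z$ ``enters'' the decomposition in a very controlled way. Concretely, starting from $z$, the second vertex of $P$ must lie in $\esd(e_z)\cup\esd(u_z)$ (since those are the only vertices adjacent to $z$, by the edge-type analysis in the definition of an extended strip decomposition: $z\in\esd(e_z,w_z)$ can only be adjacent to other vertices of $\esd(e_z)$, to vertices of $\esd(e',w_z)$ for edges $e'$ incident with $w_z$ — but $w_z$ has degree one — and to vertices of $\esd(u_z)$ adjacent via the vertex-strip edge type). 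The key point I want to extract is: the path $P$, read from $z$, is ``funneled'' through $w_z$ and $u_z$, so the set of $H$-vertices, edges, and triangles whose $\esd$-sets $P$ can meet forms a connected portion of $H$ that must pass through $u_z$; in particular, to leave the vicinity of $w_z$ the path has to go through $\esd(u_z)$ or $\esd(e_z,u_z)$.

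Next I would formalize what it means for an atom $A$ to ``touch'' (intersect or be adjacent to) such a path $P_i$. Each atom is associated to an edge, vertex, or triangle of $H$, hence to a small connected region of $H$; and ``$A$ touches $P_i$'' should translate (via the conflict/adjacency analysis already done for atoms, i.e.\ the discussion of conflicting atoms in Section~\ref{sec:disperser}) into: the $H$-region of $A$ meets or is $H$-adjacent to the $H$-region swept out by $P_i$. The three paths $P_1,P_2,P_3$ are pairwise disjoint and non-adjacent in $G$, so by the contrapositive of the conflict analysis their swept $H$-regions are pairwise disjoint and pairwise non-adjacent in $H$ — they live in ``different directions'' of $H$. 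But each $P_i$ is anchored at a peripheral vertex occupying a distinct degree-one vertex $w_{z_i}$, and by the funneling fact its $H$-region is a connected set containing $w_{z_i}$ and (if $P_i$ has more than one vertex) the neighbor $u_{z_i}$, continuing connectedly from there. An atom $A$ touching all three would force the $H$-region of $A$ — which is connected and of bounded size (a single vertex, a single edge with its two endpoints, or a triangle) — to be adjacent to three pairwise-disjoint, pairwise-non-adjacent connected subgraphs of $H$, each of which is a ``tentacle'' reaching out toward a distinct leaf. I would then argue this is impossible: the atom's $H$-region is too small and too ``local'' to simultaneously be adjacent to three mutually separated tentacles, because any two of the tentacles together with a connecting path through $A$'s region would create the kind of configuration the peripherality hypothesis forbids — essentially $w_{z_i}$ being degree one pins down the unique way each tentacle attaches.

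The cleanest way to package the contradiction, I expect, is this: suppose atom $A$ touches each of $P_1,P_2,P_3$. Pick for each $i$ a vertex $a_i$ that is either in $A\cap P_i$ or in $A$ adjacent to $P_i$; then inside $G[A\cup P_1\cup P_2\cup P_3]$ — using that $A$ itself is contained in a connected region and that within an atom one has connectivity through the relevant $\esd$-sets — I would build an induced tree (or a near-tree) $T$ in $G$ containing the three peripheral endpoints $z_1,z_2,z_3$, with the three branches of $T$ being (tidied-up) subpaths of $P_1,P_2,P_3$ joined through $A$. This is exactly the forbidden object: in the language of Theorem~\ref{thm:3-in-a-tree} / the definition of peripheral, no induced tree can contain three peripheral vertices occupying three distinct leaves, because such a tree would have to use all three leaf-edges $e_{z_i}$, but then two of the $w_{z_i}$'s lie in a common induced tree — and one checks directly from the structure of extended strip decompositions that an induced subgraph containing vertices of $\esd(e_{z_i},w_{z_i})$ for three distinct leaves and connecting them cannot be a tree, contradicting that $(H,\esd)$ is witnessing ``no induced tree hits $Z$''. (Alternatively, and perhaps more robustly, I would avoid re-deriving Three-in-a-Tree and instead argue directly with the adjacency-type case analysis from the definition of extended strip decomposition, tracing where each $P_i$ can go.)

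The main obstacle, I expect, is the bookkeeping in the second and third paragraphs: precisely defining the ``$H$-region swept by an induced path'', showing it is connected and that it attaches to a peripheral leaf only through that leaf's unique incident edge, and then showing a single atom's small $H$-region cannot be adjacent to three such mutually separated regions. All the individual adjacency facts are already available from the definition of extended strip decomposition and from the conflict analysis of Section~\ref{sec:disperser}; the work is in assembling them into a clean connectivity/separation argument and handling the degenerate cases where some $P_i$ is a single vertex or where $A$ is a triangle-atom $\esd(T)$ (whose neighbors all live in sets $\esd(e,v)\cap\esd(e,u)$ for edges $e$ of $T$, which is a strong restriction I would exploit). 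I would structure the final write-up as: (1) funneling lemma for peripheral endpoints; (2) translation of ``atom touches $P_i$'' into an $H$-adjacency statement; (3) the counting/separation contradiction, with the triangle-atom and single-vertex-path cases checked separately.
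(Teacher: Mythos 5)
Your plan has the right general shape---analyze how an induced path anchored at a peripheral vertex threads through the decomposition, then argue that an atom is too ``local'' in $H$ to reach three pairwise non-adjacent such paths---but the way you propose to close the argument has a genuine gap. Your primary route is to build an induced tree in $G[A\cup P_1\cup P_2\cup P_3]$ containing $z_1,z_2,z_3$ and then declare this ``forbidden.'' But Theorem~\ref{thm:3-in-a-tree} as stated in the paper only gives the implication \emph{from} ``no induced tree meets $Z$ in $\geq 3$ points'' \emph{to} ``there is an ESD with $Z$ peripheral''; the converse---that an ESD with $Z$ peripheral forbids induced trees through three points of $Z$---is exactly the kind of structural statement that Lemma~\ref{lem:esd-atoms-protected} itself is a stepping stone toward. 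Invoking it would be circular unless you prove it separately, and proving it separately is comparable in difficulty to the lemma at hand. (Moreover, it is not automatic that $G[A\cup P_1\cup P_2\cup P_3]$ even contains an induced tree through all three $z_i$: the paths may attach to $A$ via many vertices each, producing cycles that cannot be pruned away while keeping all three $z_i$.) You do flag, parenthetically, that a ``direct adjacency-type case analysis'' might be more robust---that instinct is right, and it is what the paper does, but you leave it undeveloped.

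Where the paper actually closes the argument is a single sharp claim that your write-up does not isolate: two disjoint, non-adjacent induced paths, each starting at a peripheral vertex, cannot \emph{both} intersect the same set $\esd(uv,u)$. This is Claim~5.6 (\texttt{cl:disjoint-esd}) in the paper, proved by a minimal-counterexample argument on the first entry points into $\esd(uv,u)$, using a prior observation (Claim~5.5, \texttt{cl:hobbit}) that visits to vertex- or triangle-features of $H$ are always sandwiched between visits to a single incident edge-feature. Once you have this, the lemma drops out quickly: reduce to a maximal atom $A^{uv}_e$, write its closed neighborhood as $K_u\cup K_v\cup L$ where $K_u=\esd(u)\cup\bigcup_{w}\esd(uw,u)$, $K_v$ symmetric, and $L$ is the ``interior''; Claim~5.6 forces each of $K_u,K_v$ to be hit by at most one path, so the third path lies entirely in $L$, contradicting that it has a peripheral endpoint. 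Your ``funneling'' intuition is aiming at Claim~5.5, and your ``$H$-regions pairwise separated'' intuition is aiming at Claim~5.6, but neither is stated or proved at the level of precision where the argument could actually be checked, and without Claim~5.6 specifically the final counting step (``an atom's $H$-region cannot be adjacent to three mutually separated tentacles'') has no teeth: an atom $A^{uv}_e$ genuinely can be adjacent to many different edges of $H$ at once, so a purely $H$-combinatorial separation argument will not rule out three paths touching it. It is the fact that two peripheral-started, non-adjacent paths cannot share a single gate $\esd(uv,u)$ that does the work, and that fact does not appear in your proposal.
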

\begin{proof}
A {\em{feature}} of $(H,\esd)$ is a vertex, an edge, or a triangle of $H$.
We introduce the following incidence relation between features: two edges are incident if they share a vertex, a vertex of $H$ is incident to all edges of $H$ it is an endpoint of, and a triangle of $H$ is incident to all edges of $H$ that it contains.
Thus, vertices and triangles are considered to be non-incident. Note that every edge of $G$ connects either vertices from $\esd(f)$ for the same feature $f$, or from $\esd(f)$ and $\esd(f')$ for two incident
features $f,f'$.

Consider an induced path $Q$ in $G$. 
A {\em{visit}} of a feature $f$ by $Q$ is a maximal subpath of $Q$ consisting of vertices belonging to $\esd(f)$.
The order of vertices on $Q$ naturally gives rise to an order of visits of features by $Q$.
We now establish a few basic properties of how induced paths in $G$ behave w.r.t. the decomposition $(H,\esd)$ in order to get an understanding of the interaction between $P_1,P_2,P_3$ in $(H,\esd)$.

\begin{claim}\label{cl:hobbit}
Suppose $Q$ is an induced path in $G$. Consider some visit $W$ of a feature $f$ by $Q$, where $f$ is either a vertex or a triangle.
Let $W_1$ be the visit on $Q$ directly before $W$ and $W_2$ be the visit on $Q$ directly after $W$; possibly $W_1$ or $W_2$ does not exist when $W$ is the first, respectively last visit of a feature on $Q$.
Then $W_1$ and $W_2$, if existent, are visits of an edge in $H$ that is incident to $f$, and if they are both existent, then this is the same edge of $H$.
\end{claim}
\begin{proof}
Let $f_1,f_2$ be the features visited by $Q$ in $W_1,W_2$, respectively.
The fact that $f_1,f_2$ are both edges incident to $f$ follows directly from the definition of an extended strip decomposition, in particular the conditions on edges of $G$.
We are left with proving that if both $W_1,W_2$ exist (i.e., visit $W$ appears neither at the front nor at the end of $Q$), then $f_1=f_2$.

Consider first the case when $f$ is a vertex. 
Then $f_1$ and $f_2$ are both edges incident to $f$.
Moreover, then the last vertex of the visit $W_1$ belongs to $\esd(f_1,f)$, while the first vertex of the visit of $W_2$ belongs to $\esd(f_2,f)$.
But if $f_1\neq f_2$, then $\esd(f_1,f)$ and $\esd(f_2,f)$ would be complete to each other, which would contradict the assumption that $P$ is induced.
Therefore we conclude that $f_1=f_2$.

Consider now the case when $f$ is a triangle; then $f_1$ and $f_2$ are both edges contained in $f$.
Supposing $f_1\neq f_2$, we may denote $f=uvw$, $f_1=uv$, $f_2=uw$.
Then the last vertex of the visit $W_1$ belongs to $\esd(uv,u)\cap \esd(uv,v)$, while the first vertex of the visit of $W_2$ belongs to $\esd(uw,u)\cap \esd(uw,w)$.
This means that these two vertices are adjacent, because they belong to $\esd(uv,u)$ and $\esd(uw,u)$, respectively.
This is a contradiction with the assumption that $P$ is an induced path.
\cqed\end{proof}

\begin{claim}\label{cl:disjoint-esd}
Suppose $Q_1$ and $Q_2$ are two induced paths in $G$ that do not intersect and are non-adjacent.
Suppose further that $Q_1$ has endpoint $z_1$ and $Q_2$ has endpoint $z_2$ such that $z_1,z_2$ are peripheral.
Then there does not exist an edge $uv$ of $H$ such that both $Q_1$ and $Q_2$ intersect $\esd(uv,u)$.
\end{claim}
\begin{proof}
Orient $Q_1,Q_2$ so that $z_1,z_2$ are their first vertices, respectively.
Suppose the claim does not hold and let $(uv,u)$ be such that both $Q_1$ and $Q_2$ intersect $\esd(uv,u)$; among such pairs, choose $(uv,u)$ so that the distance from $z_1$ to the first vertex of $\esd(uv,u)$ on $Q_1$ 
plus the distance from $z_2$ to the first vertex of $\esd(uv,u)$ on $Q_2$ 
is as small as possible.
Let $y_1,y_2$ be the first vertices on $Q_1,Q_2$ that belong to $\esd(uv,u)$, respectively.

Consider first the corner case when $z_1=y_1$ and $z_2=y_2$.
Since both $z_1,z_2$ are peripheral and $z_1,z_2\in \esd(uv)$, it must be that $\esd(uv,u)=\{z_1\}$ and $\esd(uv,v)=\{z_2\}$, or vice versa.
But then $z_2\notin \esd(uv,u)$, a contradiction.

Hence, either $y_1\neq z_1$ or $y_2\neq z_2$.
Assume without loss of generality the former and let $x_1$ be the vertex directly preceding $y_1$ on $Q_1$; clearly, $x_1\notin \esd(uv,u)$ by the choice of $y_1$.

First observe that $x_1$ cannot belong to ($\esd(\cdot)$ of) any vertex or triangle of $H$.
Indeed, if this was the case, then by Claim~\ref{cl:hobbit} we would conclude that $Q_1$ would already intersect $\esd(uv,u)$ before $x_1$, so $y_1$ would not be the first vertex of $\esd(uv,u)$ on $Q_1$.
Hence, either $x_1\in \esd(uw,u)$ for some $w\neq v$, or $x_1\in \esd(uv)\setminus \esd(uv,u)$.
In the former case we infer that $x_1$ and $y_2$ would be adjacent, a contradiction with the assumption that $Q_1$ and $Q_2$ are non-adjacent.
Hence, we have $x_1\in \esd(uv)\setminus \esd(uv,u)$.
Since $Q_1$ starts in a peripheral vertex $z_1$, we conclude that on $Q_1$ there is a vertex $t_1\in \esd(uv,v)$ that appears no later than $x_1$ (possibly $t_1=x_1$).

Consider now the corner case when $z_2=y_2$. Let $ww'\in E(H)$ be such that $w$ has degree $1$ in $H$ and $\esd(ww',w)=\{z_2\}$.
Then $(uv,u)=(ww',w)$ or $(uv,u)=(ww',w')$.
In the former case we would have $y_1\in \esd(ww',w)$ and $y_1\neq y_2=z_2$, a contradiction to $|\esd(ww',w)|=1$.
In the latter case, however, we would have $t_1\in \esd(ww',w)$, again a contradiction to $|\esd(ww',w)|=1$, because $t_1\neq z_2$. 

Hence, from now on assume that $z_2\neq y_2$. By applying the same reasoning to $Q_2$ as we did for $Q_1$ we infer that on $Q_2$ there is a vertex $t_2\in \esd(uv,v)$ that appears earlier than $y_2$.
However, now the existence of $t_1,t_2\in \esd(uv,v)$ is a contradiction with the choice of the pair $(uv,u)$.
\cqed\end{proof}

We proceed to the proof of the lemma statement.
It suffices to prove the statement for atoms of the form $A^{uv}_e$ for some edge $e=uv\in E(H)$, as every atom of $(H,\esd)$ is contained in an atom of this form, 
apart from atoms corresponding to isolated vertices of $H$ for which the statement holds trivially.
Recall that then 
$$A^{uv}_e = \esd(u)\cup \esd(v)\cup \esd(uv)\cup \bigcup_{T\supseteq uv} \esd(T).$$
We first note the following.

\begin{claim}\label{cl:entry-ruined}
Among paths $P_1,P_2,P_3$, at most one can intersect the set $\esd(u)\cup \bigcup_{w\colon uw\in E(H)} \esd(uw,u)$.
\end{claim}
\begin{proof}
As each of the paths $P_1,P_2,P_3$ starts in a peripheral vertex, intersecting $\esd(u)$ entails intersecting $\bigcup_{w\colon uw\in E(H)} \esd(uw,u)$.
By Claim~\ref{cl:disjoint-esd}, no two of the paths $P_1,P_2,P_3$ intersect the same set $\esd(uw,u)$, for some $w$ with $uw\in E(H)$.
However, if, say, $P_1$ intersected $\esd(uw_1,u)$ and $P_2$ intersected $\esd(uw_2,u)$ for some $uw_1,uw_2\in E(H)$, $w_1\neq w_2$, then $P_1$ and $P_2$ would contain adjacent vertices, a contradiction.
\cqed\end{proof}

Denote
\begin{eqnarray*}
K_u & = & \esd(u)\cup \bigcup_{w\colon uw\in E(H)} \esd(uw,u),\\
K_v & = & \esd(v)\cup \bigcup_{w\colon vw\in E(H)} \esd(vw,v),\\
L   & = & \bigcup_{T\supseteq uv} \esd(T)\cup (\esd(uv)\setminus (\esd(uv,u)\cup \esd(uv,v))),
\end{eqnarray*}
and observe that
$$N[A^{uv}_e] = K_u\cup K_v\cup L.$$
By Claim~\ref{cl:entry-ruined}, $K_u$ above can be intersected by at most one of the paths $P_1,P_2,P_3$, and similarly for $K_v$.
Hence, if $N[A^{uv}_e]$ is intersected by all three paths $P_1,P_2,P_3$, then one of them, say $P_3$, intersects $L$ 
while not intersecting $K_u\cup K_v$.
Note that $$N(L)\subseteq \esd(uv,u)\cup \esd(uv,v)\subseteq K_u\cup K_v,$$ hence we conclude that $P_3$ 
is entirely contained in $L$. This is a contradiction with the assumption that one of the endpoints of $P_3$ is peripheral in $(H,\esd)$.
\end{proof}

The proof of Lemma~\ref{lem:claw-shatter} is now an easy combination of Theorem~\ref{thm:3-in-a-tree} and 
Lemma~\ref{lem:esd-atoms-protected}.

\begin{proof}[Proof of Lemma~\ref{lem:claw-shatter}.]
Consider first the case when vertices of $Z$ are not all in the same connected component of $G$. Then we can output the trivial extended strip decomposition of $G$, as it clearly shatters $Z$.

Suppose now that all vertices of $Z$ are in the same connected component $C$ of $G$.
Apply Theorem~\ref{thm:3-in-a-tree} to $Z$ in $C$.
Then, in polynomial time we can either find an induced tree $T$ in $C$ that contains all vertices of $Z$ (which can be directly returned), or an extended strip decomposition $(H_C,\esd_C)$ of $C$ such that all vertices of $Z$ are peripheral
in $(H_C,\esd_C)$. 
In the latter case, by Lemma~\ref{lem:esd-atoms-protected} we conclude that $Z$ is shattered by $(H_C,\esd_C)$ in $C$.
We augment $(H_C,\esd_C)$ to an extended strip decomposition $(H,\esd)$ of $G$ by adding for every component $C'\in \cc(G)$, $C'\neq C$, a new isolated vertex $v_{C'}$ with $\esd(v_{C'})=V(C')$.
Then it is easy to see that $(H,\esd)$ shatters $Z$ in $G$.
\end{proof}

\subsection{Proof of Theorems~\ref{thm:main} and~\ref{thm:main2}}\label{ssec:thm:main}

With Theorem~\ref{thm:claw-free-dispersible}, the proofs of Theorems~\ref{thm:main} and~\ref{thm:main2} are straightforward.
\begin{proof}[Proof of Theorem~\ref{thm:main}.]
Let $H$ be such that every connected component of $H$ is a path or a subdivided claw. 
Let $Y$ be a subdivided claw such that every connected component of $H$ is an induced subgraph of $Y$.

Let $G$ be $H$-free, let $\wei \colon V(G) \to \mathbb{N}$ be a weight function, and let $\varepsilon > 0$ be an accuracy parameter.
Set $\beta := \varepsilon / (2|V(H)|)$. 
Let $I$ be an independent set in $(G,\wei)$ of  maximum-weight.
By Lemma~\ref{lem:heavy}, there exists a set $J \subseteq I$ of size at most $\lceil \beta \log n \rceil = \Oh(\varepsilon^{-1} \log n)$
such that all $\beta$-heavy vertices w.r.t. $I$ are contained in $N[J]$. 
By branching into $n^{\Oh(\varepsilon^{-1} \log n)}$ subcases, we guess the set $J$.

Let $G' = G - N[J]$. 
Let $\mathcal{C}$ be a maximal family of connected components of $H$ such that $H[\bigcup \mathcal{C}]$ is an induced
subgraph of $G'$. Let $H' = H[\bigcup \mathcal{C}]$ and note that $H'$ is a proper induced subgraph of $H$.
Let $X \subseteq V(G')$ be such that $G'[X]$ is isomorphic to $H'$. 
Note that $|X| < |V(H)|$.

Observe that $G'' := G'-N[X]$ is $Y$-free. Indeed, if $G''$ contains $Y$ as an induced subgraph, then, by the choice of $Y$, it contains
some connected component $C$ of $H-V(H')$ as an induced subgraph. Together with $G'[X]$ isomorphic to $H'$, this contradicts the choice of $\mathcal{C}$.

Apply the algorithm of Theorem~\ref{thm:claw-free-dispersible} to find an independent set $I''$ in $G''$
that is a $(1-\varepsilon/2)$-approximation to a maximum weight independent set problem on $G''$ and $\wei|_{V(G'')}$.
This takes time $2^{\mathrm{poly}(\varepsilon^{-1}, \log n)}$ and we have $\wei(I'') \geq (1-\varepsilon/2) \wei(I \cap V(G''))$.
Finally, we return $I' := I'' \cup J$. 

Consider the branch where $J$ is guessed correctly.
We have $\wei(I \cap N[X]) \leq \beta |X| \wei(I) < \varepsilon/2 \cdot \wei(I)$.
Furthermore, 
  $$\wei(I) - \wei(I'') \leq (\varepsilon/2) \cdot \wei(I \cap V(G'')) + \wei(I \cap N[X]) \leq \varepsilon \wei(I).$$
This finishes the proof of Theorem~\ref{thm:main}.
\end{proof}

\begin{proof}[Proof of Theorem~\ref{thm:main2}.]
Again, let $H$ be such that every connected component of $H$ is a path or a subdivided claw and
let $Y$ be a subdivided claw such that every connected component of $H$ is an induced subgraph of $Y$.

Let $G$ be $H$-free and let $\wei \colon V(G) \to \mathbb{N}$ be a weight function.
If $G$ is disconnected, recurse on every connected component separately.
If $G$ contains a vertex $v$ such that $|N[v]| > |V(G)|^{1/9}$, branch exhaustively on $v$:
in one branch, delete $v$ and recurse, in the other branch, delete $N[v]$, recurse, and add $v$ to the independent set
returned by the recursive call; finally, output the one of the two obtained independent sets of higher weight. 

Otherwise, let $\mathcal{C}$ be a maximal family of connected components of $H$ such that $H[\bigcup \mathcal{C}]$ is an induced
subgraph of $G$. Let $H' = H[\bigcup \mathcal{C}]$ and note that $H'$ is a proper induced subgraph of $H$.
Let $X \subseteq V(G)$ be such that $G[X]$ is isomorphic to $H'$. 
Note that $|X| < |V(H)|$.

Observe that $G-N[X]$ is $Y$-free. Indeed, if $G-N[X]$ contains $Y$ as an induced subgraph, then, by the choice of $Y$, it contains
some connected component $C$ of $H-V(H')$ as an induced subgraph. Together with $G[X]$ isomorphic to $H'$, this contradicts the choice of $\mathcal{C}$.

For every independent set $Z \subseteq N[X]$, invoke the algorithm of Theorem~\ref{thm:claw-free-dispersible} on the $Y$-free graph
$G-(N[X]\cup N[Z])$, obtaining an independent set $I(Z)$, and observe that $I_Z := Z \cup I(Z)$ is an independent set in $G$.
Out of all independent sets $I_Z$ for $Z \subseteq N[X]$, return the one of maximum weight. 

Since we consider every independent set $Z \subseteq N[X]$, the returned solution is indeed an independent set in $G$ of maximum possible weight.

For the running time bound, note that $|N[X]| < |V(G)|^{1/9} \cdot |V(H)|$, hence we invoke
less than $2^{|V(G)|^{1/9} \cdot |V(H)|}$ calls to the algorithm of Theorem~\ref{thm:claw-free-dispersible}, each
taking $2^{\Oh(|V(G)|^{8/9} \log |V(G)|)}$ time. 
In a recursive step, the analysis is straightforward if $G$ is disconnected
and follows by standard arguments if $G$ contains a vertex $v$ with $|N[v]| > |V(G)|^{1/9}$.
This finishes the proof of Theorem~\ref{thm:main2}.
\end{proof}

\section{A small generalization}\label{sec:horse}
In this section we generalize Theorem~\ref{thm:main} by proving that Conjecture~\ref{conj:main} holds for all subcubic forests $H$ that have at most three vertices of degree three.
Let $L$ be the {\em{lobster graph}} depicted in Figure~\ref{fig:lobster}.
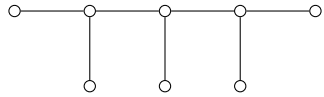
\begin{figure}[h]
\begin{center}
\begin{tikzpicture}
   \tikzstyle{vertex}=[circle,draw=black,fill=white,minimum size=0.15cm,inner sep=0pt]
        
       \foreach \c in {0,1,2,3,4} {
	  \node[vertex] (a\c) at (\c,1) {};
	}
       \foreach \c in {1,2,3} {
	  \node[vertex] (b\c) at (\c,0) {};
	  \draw (a\c) -- (b\c);
	}
       \foreach \c/\d in {0/1,1/2,2/3,3/4} {
	  \draw (a\c) -- (a\d);
	}
\end{tikzpicture}
\end{center}
\caption{The lobster graph $L$.}\label{fig:lobster}
\end{figure}
For $t\in \N$, an {\em{$(\geq t)$-lobster}} is any graph obtained from $L$ by subdividing every edge at least $t-1$ times.
Then a graph is {\em{$L_{\geq t}$-free}} if it does not contain any $(\geq t)$-lobster as an induced subgraph.

By Theorem~\ref{thm:disperser}, to prove Conjecture~\ref{conj:main} for all subcubic forests $H$ that have at most three vertices of degree three it suffices to prove the following.

\begin{theorem}\label{thm:lobster-free-dispersible}
For every $t\in \N$, the class of $L_{\geq t}$-free graphs is QP-dispersible and $\frac{1}{41}$-uniformly dispersible.
\end{theorem}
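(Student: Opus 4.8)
The plan is to prove, in the spirit of Lemma~\ref{lem:find-claw}, a ``find a lobster or a good separator'' statement and then feed it into Lemma~\ref{lem:guess-heavy} exactly as Theorem~\ref{thm:claw-free-dispersible} is obtained from Lemma~\ref{lem:find-claw}. Concretely: for every $t\ge 4$, every connected graph $G$ with a weight function $\wei$ satisfying $\wei(N[v])\le \leps^{40}\wei(G)$ for every $v$, every $\leps\in(0,\frac{1}{100t})$, and every $u\in V(G)$, one should be able to enumerate in polynomial time a $\wei$-independent family of $\poly(|V(G)|)$ candidates containing either an induced $(\ge t)$-lobster with $u$ as a leaf, or a pair $(X,(H,\esd))$ where $(H,\esd)$ is an extended strip decomposition of $G-X$ with $\wei(A)\le(1-\leps^{39})\wei(G)$ and $\wei(X)\le\leps\cdot\wei(G-A)$ for every atom $A$. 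Then Lemma~\ref{lem:guess-heavy} with $p(\leps)=\leps^{40}$ gives QP-dispersibility (note that on an $L_{\geq t}$-free graph the first alternative never fires), and instantiating the statement with the uniform weight function and $\leps=n^{-1/41}$ (choosing $n_0$ large enough that $\leps<\frac{1}{100t}$) gives $\frac{1}{41}$-uniform dispersibility, since then a pair realizing the second alternative is a $\frac{1}{41}$-uniform disperser.

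For the main statement I would start exactly as in Lemma~\ref{lem:find-claw}: run Lemma~\ref{lem:gyarfas-path} on $G$, $u$, $\wei$, $\alpha=\leps$ to get a Gyárfás path $Q=(v_0,\dots,v_k)$ with $v_0=u$, and keep the notation $G_i=G-N[v_0,\dots,v_{i-1}]$ and $D_i$ for its heaviest component. The path $Q$ is the candidate \emph{spine} of the lobster; since the lobster's spine has three interior degree-$3$ vertices, I would single out five well-separated critical indices $0=m_0<m_1<m_2<m_3<m_4=k$ in the same way $p,q$ are singled out in Lemma~\ref{lem:find-claw}, namely as the largest indices at which $\wei(D_{m_j})$ still exceeds a prescribed threshold $(1-\leps^{e_j})\wei(G)$, along a strictly decreasing exponent sequence. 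As there, if $\wei(D_0)$ is already small, or if some spine segment $R_j=(v_{m_{j-1}},\dots,v_{m_j-2})$ has $\wei(N[R_j])$ below its threshold, the algorithm outputs $X=N[v_0,\dots,v_{m_j}]$ together with the trivial extended strip decomposition of $G-X$; otherwise every $N[R_j]$ is heavy, which combined with $\wei(N[v_i])\le\leps^{40}\wei(G)$ forces $m_j-m_{j-1}>t+1$ for all $j$, so the spine segments are genuinely long and the windows of $Q$ around $v_{m_1},v_{m_2},v_{m_3}$ are pairwise disjoint and non-adjacent.

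The heart of the argument is to attach a \emph{foot} — an induced path of length $\ge t$, anticomplete to the rest of the construction — near each of $v_{m_1},v_{m_2},v_{m_3}$. Here the lobster is seen as three subdivided claws chained along the spine, each centred at one of the degree-$3$ vertices, so the three feet are the ``new'' arms produced by three applications of Lemma~\ref{lem:claw-shatter}: for the $j$-th one, detach a window of $Q$ around $v_{m_j}$ so that the attachment vertex becomes a cut vertex separating the two window halves from the heavy component $D_{m_j}$, and call Lemma~\ref{lem:claw-shatter} with terminals the two window ends and a deep vertex of $D_{m_j}$. In the tree alternative of Lemma~\ref{lem:claw-shatter} the median of the returned induced tree is forced into the window, its two ``window arms'' absorb the window halves, and its third arm is an induced path of length $\ge t$ into the core, which we take as the foot; in the ESD alternative the shattering property together with the heaviness of the relevant $R_j$'s forces $\wei(A)\le(1-\leps^{39})\wei(G)$ for every atom $A$, and $X$ equal to the (light) union of the detached window neighbourhoods finishes the second alternative. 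Splicing the three feet onto $Q$ — cutting $Q$ at the three attachment vertices and discarding the window halves — yields an induced $(\ge t)$-lobster with $u=v_0$ as a leaf, i.e.\ the first alternative.

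The main obstacle is making the three feet mutually compatible: they must be pairwise disjoint, pairwise non-adjacent, and each anticomplete to the \emph{whole} spine, not merely to its local window, even though all three feet are grown inside essentially the same heavy component. This forces the three applications of Lemma~\ref{lem:claw-shatter} to be carried out on carefully nested, successively pruned subgraphs (deleting the neighbourhoods of earlier feet and re-certifying the remaining part of the spine as long inside the pruned component each time), and each such pruning step consumes its own block of $\leps$-power levels on top of the spine-segment thresholds and the closing $\leps^{39},\leps^{40}$ — which is why the hierarchy has to reach $\leps^{40}$, compared with $\leps^8$ for a single claw, and why $\leps=n^{-1/41}$ is the parameter that makes the uniform-disperser version go through. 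The remaining verification — that the enumerated family over choices of $Q\in\Qq$, of the indices $m_1,m_2,m_3$, of the windows, and of the outputs of Lemma~\ref{lem:claw-shatter} has polynomial size and, whatever $\wei$ is, contains a good candidate — is routine, following the end of the proof of Lemma~\ref{lem:find-claw} almost verbatim.
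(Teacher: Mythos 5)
Your reduction strategy is exactly right and matches the paper: prove a ``find an induced $(\geq t)$-lobster or a good $(X,(H,\esd))$'' lemma analogous to Lemma~\ref{lem:find-claw} with the weight-bound hierarchy running up to $\leps^{40}$, feed it into Lemma~\ref{lem:guess-heavy} for QP-dispersibility, and instantiate with uniform weights and $\leps=n^{-1/41}$ for $\frac{1}{41}$-uniform dispersibility. The difference, and the gap, is in how you propose to actually build the lobster.

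You view the lobster as a single Gy\'arf\'as spine $Q$ with three pendant feet grafted on near $v_{m_1},v_{m_2},v_{m_3}$, one per application of Lemma~\ref{lem:claw-shatter}. The paper instead views it as two \emph{end-claws} plus a \emph{middle connecting tree}: it builds the right claw $T$ with Lemma~\ref{lem:find-claw} (anchored at a well-chosen vertex $v_p$ of the Gy\'arf\'as path), then \emph{discards $Q$ entirely}, restricts to $G''=G[V(D)\cup\{w\}]$ where $D$ is the heavy component of $G-N[V(T)\setminus\{w\}]$ adjacent to the tip $w$, runs a fresh Gy\'arf\'as path $P$ from $w$ inside $G''$, builds the left claw $S$ with a second call to Lemma~\ref{lem:find-claw} in the heavy component $D_r''$, and only then makes a \emph{single} call to Lemma~\ref{lem:claw-shatter} with $Z=\{v,w,y_p\}$ to produce the tree $U$ joining the two claws and supplying the middle pendant via the detached path $P_2$. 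The discard-and-restart discipline is precisely what makes the pieces automatically non-adjacent: $S$ and $U$ live inside $G''$, hence cannot touch $T$ except at $w$; $U$ lives inside the further pruned $G^{(4)}$, hence cannot touch $S$ except at $v$.

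Your variant has an unresolved gap exactly where you flag the ``main obstacle.'' Each foot $F_j$ is the third arm of a tree returned by Lemma~\ref{lem:claw-shatter} applied with terminals the window ends and a ``deep vertex'' of $D_{m_j}$. But $D_{m_j}$ \emph{contains the remainder of the spine} $v_{m_j+1},\ldots,v_k$ (the Gy\'arf\'as construction forces $v_{m_j+1}\in D_{m_j}$), so nothing prevents $F_j$ from hitting or touching spine vertices far beyond the local window, nor from touching the other feet. The nested-pruning fix you sketch does not resolve this: deleting $N[F_1]$ before looking for $F_2$ may cut the spine (since $F_1$ can be adjacent to it), and the pruned component $D_{m_2}$ may no longer be heavy or the spine may no longer satisfy the Gy\'arf\'as properties that your window/threshold machinery relies on, so ``re-certifying the remaining spine'' is not a local patch but would require re-running Lemma~\ref{lem:gyarfas-path} from scratch on a pruned heavy component and re-locating all the critical indices — at which point you are no longer splicing onto the original $Q$, and you are drifting toward the paper's discard-and-restart architecture anyway. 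If you want to pursue your three-feet picture, you need to either work inside successively restricted graphs of the form $G[V(D)\cup\{\text{attachment point}\}]$ with a fresh Gy\'arf\'as path at each level (as the paper does for its two claws), or add explicit arguments pruning $N[Q]$ away from the core before each claw-shatter call while tracking the weight loss; neither of these is routine, and as written the step ``the foot is anticomplete to the whole spine and to the earlier feet'' is simply asserted, not proved.

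Two smaller remarks. First, the paper's find-lobster lemma (Lemma~\ref{lem:find-lobster}) takes no anchor vertex $u$ — there is no need to fix which vertex is a leaf of the lobster, and insisting on it is harmless but slightly complicates the statement and its use in Lemma~\ref{lem:guess-heavy}. Second, the paper's proof also uses Lemma~\ref{lem:claw-shatter} three times in total, but two of those are hidden inside the two calls to Lemma~\ref{lem:find-claw}, which is what lets those two applications produce full $(\geq t)$-claws in one shot (and, when they fail, a ready-made $(X,(H,\esd))$ pair) rather than bare pendant paths whose interaction with everything else must be controlled by hand.
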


Similarly as was the case with Theorem~\ref{thm:claw-free-dispersible} and Lemma~\ref{lem:find-claw}, to prove Theorem~\ref{thm:lobster-free-dispersible} it suffices to show the following.

\begin{lemma}\label{lem:find-lobster}
Fix an integer $t\geq 4$.
Let $G$ be a connected graph supplied with a weight function $\wei\colon V(G)\to \N$
and let $\leps\in (0,\frac{1}{100t})$ be
such that
\begin{equation}\label{eq:light2}
\wei(N[v])\leq \leps^{40}\cdot \wei(G)\textrm{ for every }v\in V(G).
\end{equation}
Then there is either
\begin{enumerate}[label=(L\arabic*),ref=(L\arabic*)]
\item\label{c:lobster} an induced $(\geq t)$-lobster in $G$, or
\item\label{c:lesd}  a subset of vertices $X\subseteq V(G)$ and an extended strip decomposition $(H,\esd)$ of $G-X$ such that 
$$\wei(A)\leq (1-\leps^{39})\cdot \wei(G)\quad\textrm{and}\quad\wei(X)\leq \leps\cdot \wei(G-A)\quad\textrm{for every atom }A\textrm{ of }(H,\esd).$$
\end{enumerate}
Moreover, given $G$ one can in polynomial time either find conclusion~\ref{c:lobster}, or enumerate a family $\Nn$ of $\Oh(|V(G)|^{12})$ pairs $(X,(H,\esd))$ such that
for every $\leps \in (0,\frac{1}{100t})$ and every weight function $\wei\colon V(G)\to \N$ satisfying~\eqref{eq:light2}
there exists $(X,(H,\esd))\in \Nn$ satisfying~\ref{c:lesd} for $\wei$ and $\leps$.
\end{lemma}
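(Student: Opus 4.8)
The plan is to prove Lemma~\ref{lem:find-lobster} by lifting the single-branch-vertex argument of Lemma~\ref{lem:find-claw} to the three degree-three vertices of the lobster. Up to relabelling, $L$ is a path with three pendant paths attached at three interior vertices, so in a subdivided copy all seven arms have length $\geq t$. I would discover the three branch vertices together with their arms in three nested rounds, each round being a ``find a short claw or decompose'' step in the style of Lemma~\ref{lem:find-claw}, executed inside the heaviest component left by the previous round. The structural observation that makes the nesting affordable is already present in Lemma~\ref{lem:find-claw}: the subdivided claw it produces uses only the length-$t$ prefixes $Q_1,Q_2,Q_3$, whose closed neighbourhoods have total weight at most $\poly(t)\cdot\leps^{40}\wei(G)$ by~\eqref{eq:light2}, while the long Gy\'arf\'as segments $R_i$ are deleted only in the decomposition branch and otherwise serve purely to certify lightness of atoms. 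Hence each round removes only a $\poly(t)\cdot\leps^{40}$ fraction of the total weight, so all three rounds operate inside one common heavy region of $G$ and the partial structures they record can be assembled into a single induced subgraph.

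Concretely, I would carry a pair $(D,r)$, with $D=G$ and $r\in V(G)$ arbitrary initially. In round $j\in\{1,2,3\}$: apply Lemma~\ref{lem:gyarfas-path} inside $D$ from $r$ with $\alpha$ an appropriate power of $\leps$ to get an induced path $Q=(v_0,\dots,v_k)$ with $v_0=r$ and its associated chain of heaviest components; introduce a chain of weight thresholds and, as in Claims~\ref{cl:paths-heavy} and~\ref{cl:close-done}, either (i) some Gy\'arf\'as segment between consecutive marked indices has a too-light closed neighbourhood, in which case $X=N[\text{a prefix of }Q]$ together with the trivial extended strip decomposition of $G-X$ yields~\ref{c:lesd}; or (ii) the marked indices are pairwise more than $t+1$ apart, producing pairwise disjoint, pairwise non-adjacent length-$t$ subpaths of $Q$ which are detached by the same $G'$-surgery as in Lemma~\ref{lem:find-claw}, after which we invoke Lemma~\ref{lem:claw-shatter} on the detached graph with $Z$ consisting of the detached tip towards $r$, the detached tip of a new pendant arm, and --- in rounds $1,2$ --- a detached tip pointing towards the next branch vertex, or --- in round $3$ --- a further pendant tip. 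If Lemma~\ref{lem:claw-shatter} returns an induced tree through $Z$, then, since the vertices of $Z$ sit at the far ends of detached degree-one length-$t$ paths, this tree contains an induced subdivided claw with a centre $c_j$ and three legs of length $\geq t$; we record $c_j$ and the length-$t$ portions of its legs, update $r$ to a far tip of the appropriate leg and $D$ to the heaviest component of $D$ with the closed neighbourhoods of all length-$t$ paths used so far removed, and continue. After round $3$ the recorded data --- three branch vertices joined consecutively by length-$\geq t$ arms, with one pendant length-$\geq t$ arm on the middle one and two on each outer one, all pairwise disjoint and non-adjacent thanks to the cumulative detaching --- forms an induced $(\geq t)$-lobster, i.e.\ conclusion~\ref{c:lobster}.

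If instead Lemma~\ref{lem:claw-shatter} returns in some round an extended strip decomposition $(H',\esd')$ of the detached graph that shatters $Z$, I output $X$ equal to the union of the closed neighbourhoods of all length-$t$ paths used up to that point, together with $(H,\esd)$ obtained from $(H',\esd')$ by deleting all vertices of those paths and re-attaching the components of $G-X$ removed in the process as isolated pieces, exactly as at the end of the proof of Lemma~\ref{lem:find-claw}. Then $\wei(X)\leq\poly(t)\cdot\leps^{40}\wei(G)\leq\tfrac12\leps^{39}\wei(G)$ by~\eqref{eq:light2}, so it remains to show $\wei(A)\leq(1-\leps^{39})\wei(G)$ for every atom $A$ of $(H,\esd)$. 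This is where shattering delivers: the long Gy\'arf\'as segments of the current round, together with the heavy prefix-neighbourhoods inherited from the earlier rounds (which are precisely the obstructions used to rule out a too-light segment in case (i)), form pairwise disjoint, pairwise non-adjacent induced paths, each with an endpoint playing the role of an element of $Z$ in the relevant detached graph; since $(H',\esd')$ shatters $Z$, no atom can intersect or be adjacent to all of them, so any atom $A$ with $\wei(A)>(1-\leps^{39})\wei(G)$ would be disjoint from the closed neighbourhood of one of them --- but each such neighbourhood has weight more than $\leps^{39}\wei(G)$ in case (ii), a contradiction. The exponents $\leps^{40}$ in~\eqref{eq:light2} and $\leps^{39}$ in~\ref{c:lesd} come out of stacking a threshold chain roughly three times the length of the one in Lemma~\ref{lem:find-claw}, arranged so that the residual heavy component passed between rounds never loses more than a $\leps^{39}$ fraction of $\wei(G)$; the uniform-dispersibility claim then follows by taking $\leps=n^{-1/41}$, uniform weights, and observing (as for Lemma~\ref{lem:find-claw}) that any pair satisfying~\ref{c:lesd} is a $\frac{1}{41}$-uniform disperser.

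For the moreover part one proceeds as in Lemma~\ref{lem:find-claw}: replace every existential use of Lemma~\ref{lem:gyarfas-path} and every choice of marked indices by the polynomial-size families those statements provide, independently in each of the three rounds; the enumeration tree then has $\Oh(|V(G)|^{4})$ branches per round, hence $\Oh(|V(G)|^{12})$ leaves, and each leaf produces one pair $(X,(H,\esd))$ --- a trivial decomposition of $G$ minus a union of prefix-neighbourhoods, or the output of Lemma~\ref{lem:claw-shatter} on the corresponding detached graph --- unless along the way an induced $(\geq t)$-lobster is detected and reported. I expect the genuinely delicate part, beyond what is already in Lemma~\ref{lem:find-claw}, to be twofold: first, choosing the weight thresholds and the Gy\'arf\'as parameters so that in each round cases (i) and (ii) exhaust all possibilities, the heavy component handed down keeps a $(1-o(1))$ fraction of the weight, and the final shattering inequality still has the needed slack, all within the exponent budget $40$; and second, the cumulative $G'$-surgery --- one must verify that after three rounds of detaching, the three recorded claws glue into a genuine induced lobster, with no extra edge between different arms or between an arm and an earlier round's structure.
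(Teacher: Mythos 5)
Your plan matches the paper in spirit --- both use the Gy\'arf\'as path and the claw-shatter machinery (Lemmas~\ref{lem:gyarfas-path} and~\ref{lem:claw-shatter}) iteratively to locate the three branch vertices, and both exploit the fact that only the length-$t$ prefixes carry the weight cost under~\eqref{eq:light2}. But the paper's actual proof is deliberately \emph{asymmetric}, not a triple repetition of one step. It first invokes Lemma~\ref{lem:find-claw} (as a black box) to produce the two outer claws $T$ and $S$, and only then applies Lemma~\ref{lem:claw-shatter} once more, directly, with $Z=\{v,w,y_p\}$ consisting of the already-constructed tips $v$ of $S$ and $w$ of $T$ together with the detached tip $y_p$ of the middle pendant; the tree through this $Z$ \emph{is} the middle claw. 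Crucially, all three shattering paths $R_1,R_2,R_3'$ used to bound atom weights come from the \emph{second} Gy\'arf\'as path $P$ (starting at $w$) and from $S$, and they all have endpoints exactly in $Z$. Your symmetric three-round plan instead runs a fresh Gy\'arf\'as $+$ claw-shatter in each round; this forces you to supply, for each round independently, three heavy pairwise-non-adjacent paths with endpoints in the current $Z$, and your sketch that ``prefix-neighbourhoods inherited from the earlier rounds'' serve this role does not work: shattering only controls atoms against paths with endpoints in the current $Z$, and those inherited prefixes do not have endpoints in the current $Z$.

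The gap I would flag as the most serious, though, is the chaining property that you do not establish at all: to run round $j{+}1$'s Gy\'arf\'as \emph{from} $r$ \emph{inside} $D$, you need $r$ to be adjacent to $D$. This is not automatic. In the paper this is precisely the content of Claim~\ref{cl:right-claw}: one does \emph{not} apply Lemma~\ref{lem:find-claw} from an arbitrary vertex, but first runs an extra Gy\'arf\'as path in $G$ to find the threshold index $p$ where $\wei(D_p)$ drops below $(1-\leps^{35})\wei(G)$, then applies Lemma~\ref{lem:find-claw} from $v_p$ inside $D_p$, and finally proves --- via the path $\widehat R$ glued from a prefix of that Gy\'arf\'as path and an arm of $T'$ --- that the heavy component of $G-N[V(T)\setminus\{w\}]$ is adjacent to $w$. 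Your plan simply declares ``update $r$ to a far tip of the appropriate leg and $D$ to the heaviest component,'' but without a Claim~\ref{cl:right-claw}-type argument there is no reason the new $r$ touches the new $D$, and the next round's Gy\'arf\'as has nowhere to start. Since you propose to chain twice (round~1 $\to$ round~2 and round~2 $\to$ round~3), you would need this preparatory step twice, which is more than the paper's single use; the paper sidesteps the second use by making round~3 a plain claw-shatter on the already-built tips $v,w$ rather than a fresh Gy\'arf\'as search. The remaining issues you name yourself (threshold budget within $\leps^{40}$, cumulative detaching giving an induced lobster) are real but more routine, and the paper handles them via an explicit descending chain $\leps^{35},\leps^{30},\leps^{25},\leps^{20}$ and the surgery defining $\Gf$.
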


The proof of Lemma~\ref{lem:find-lobster} uses the same set of ideas as that of Lemma~\ref{lem:find-claw}, but the number of steps in the construction of a lobster is larger and one needs to tend to more technical details. Essentially, the overall strategy can be summarized as follows. 
We try to construct an induced $(\geq t)$-lobster in $G$; each step of the construction may fail and produce conclusion~\ref{c:lesd} as a result. 
We start by building the right claw $T$ of the lobster using Lemma~\ref{lem:find-claw}, however we make sure that one of the tips of this claw, call it $w$, is adjacent to a connected component of $G-N[T-w]$
that contains almost the whole weight of the graph. This is done by applying Lemma~\ref{lem:gyarfas-path} to construct a long Gy\'arf\'as path $Q$, and then applying Lemma~\ref{lem:find-claw} not to any initial vertex, but to a vertex $v_i$ of the Gy\'arf\'as path such that $\wei(G_i)$ is significantly separated from $\wei(G)$. Having constructed $T$ and $w$, we forget about the first Gy\'arf\'as path $Q$ and construct, using Lemma~\ref{lem:gyarfas-path}, a second Gy\'arf\'as path $P$, this time starting from $w$. 
We construct the left claw $S$ of the lobster, but again we start this construction at later sections of $P$ so that we can ensure the following: 
there is a tip $v$ of $S$ so that in the graph $G-N[S-v]-N[T-w]$ there is a connected component containing $v$, $w$, and a long prefix of $P$. 
Then we construct the ``tail'' (that is, the middle pendant edge) of the lobster from the saved prefix of $P$, by applying Lemma~\ref{lem:claw-shatter} in this component in a manner similar to how we did it in the proof of Lemma~\ref{lem:find-claw}.

We now proceed to the formal details.

\begin{proof}[Proof of Lemma~\ref{lem:find-lobster}]
As usual, we first focus on proving the existential statement, and at the end we argue how the proof can be turned into an enumeration algorithm.

Let a {\em{$t$-claw}} be a subdivided claw in which all the tips are at distance exactly $t$ from the center. Note that a $t$-claw has exactly $3t+1$ vertices.
The first step is to use Lemmas~\ref{lem:gyarfas-path} and~\ref{lem:find-claw} to find an induced $t$-claw in $G$ that is placed robustly with respect to further constructions.

\begin{claim}\label{cl:right-claw}
We can either reach conclusion~\ref{c:lesd}, or find an induced $t$-claw $T$ in $G$ with
a tip $w$ satisfying the following property:
there is a connected component $D$ of the graph $G-N[V(T)\setminus \{w\}]$ that is adjacent to $w$ and satisfies $\wei(D)\geq (1-\leps^{35})\cdot \wei(G)$.
\end{claim}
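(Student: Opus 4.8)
The plan is to anchor a claw deep inside a dominant heavy component of $G$, so that after deleting the closed neighbourhood of all but one of its tips that component survives almost intact and stays attached to the remaining tip; the anchor is located by first running the Gy\'arf\'as construction and then invoking Lemma~\ref{lem:find-claw} from a vertex that already lies deep in the graph.

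First I would apply Lemma~\ref{lem:gyarfas-path} to $G$, an arbitrary vertex $u$, the weight function $\wei$, and parameter $\alpha:=\leps^{36}$, obtaining an induced path $Q=(v_0,\ldots,v_k)$ together with the graphs $G_i$; write $D_i$ for the heaviest component of $G_i$. Property~\ref{p:mid} gives $\wei(D_i)>(1-\leps^{36})\wei(G)$ and a neighbour of $v_i$ in $D_i$ for every $i\le k$, so everything of $G$ lying outside $D_i$ has weight less than $\leps^{36}\wei(G)$. If $\wei(N[V(Q)])<\leps^{39}\wei(G)$ (which covers the case of a short $Q$), then conclusion~\ref{c:lesd} is immediate: take $X:=N[V(Q)]$, or $X:=\{u\}$ if $Q$ is empty, with the trivial extended strip decomposition of $G-X$; by~\ref{p:fin} every atom $A$ (a component of $G-X$) has $\wei(A)\le(1-\leps^{36})\wei(G)\le(1-\leps^{39})\wei(G)$, whence $\wei(G-A)\ge\leps^{36}\wei(G)$ and $\wei(X)<\leps^{39}\wei(G)\le\leps\wei(G-A)$. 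So assume $\wei(N[V(Q)])\ge\leps^{39}\wei(G)$ and let $i^\star$ be the least index with $\wei(N[v_0,\ldots,v_{i^\star-1}])\ge\leps^{39}\wei(G)$; by~\eqref{eq:light2} and minimality, $\leps^{39}\wei(G)\le\wei(N[v_0,\ldots,v_{i^\star-1}])<2\leps^{39}\wei(G)$, and as this quantity is at most $\wei(G)-\wei(D_{i^\star})<\leps^{36}\wei(G)$ we also get $i^\star\le k$. Thus $D_{i^\star}$ is dominant but the carved‑off part already carries significant weight: $\leps^{39}\wei(G)\le\wei(G)-\wei(D_{i^\star})<\leps^{36}\wei(G)$; this is the ``significant separation'' of the hint.

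Next I would fix a neighbour $d\in V(D_{i^\star})$ of $v_{i^\star}$ and apply Lemma~\ref{lem:find-claw} to the connected graph $D_{i^\star}$, the weight function $\wei|_{V(D_{i^\star})}$, the parameter $\leps':=\leps/2$, and the root $d$; hypothesis~\eqref{eq:light} holds since $\wei(N_{D_{i^\star}}[v])\le\wei(N_G[v])\le\leps^{40}\wei(G)\le 2\leps^{40}\wei(D_{i^\star})\le\leps'^{\,8}\wei(D_{i^\star})$, using $\wei(D_{i^\star})>\tfrac12\wei(G)$ and $\leps<\tfrac1{100t}$. If Lemma~\ref{lem:find-claw} returns conclusion~\ref{c:esd} for $D_{i^\star}$ — a set $X'\subseteq V(D_{i^\star})$ and an extended strip decomposition $(H',\esd')$ of $D_{i^\star}-X'$ with $\wei(A)\le(1-\leps'^{\,7})\wei(D_{i^\star})$ and $\wei(X')\le\leps'\wei(D_{i^\star}\setminus A)$ for every atom $A$ — then I would lift it to conclusion~\ref{c:lesd} for $G$ by setting $X:=X'\cup N[v_0,\ldots,v_{i^\star-1}]$ and obtaining $(H,\esd)$ from $(H',\esd')$ by adjoining each component of $G_{i^\star}-D_{i^\star}$ as an isolated piece. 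Since $D_{i^\star}$ is a component of $G_{i^\star}$, we have $G-X=(D_{i^\star}-X')\sqcup(G_{i^\star}-D_{i^\star})$, so the atoms of $(H,\esd)$ are those of $(H',\esd')$ together with the components of $G_{i^\star}-D_{i^\star}$; the latter have weight below $\leps^{36}\wei(G)$ and cause no trouble, while for an atom $A$ of $(H',\esd')$ one has $\wei(A)\le(1-\leps^{39})\wei(G)$ because $\leps'^{\,7}\ge\leps^{39}$, and $\wei(X)\le\leps\wei(G-A)$ follows from $\wei(X')\le\leps'\wei(D_{i^\star}\setminus A)$, $\wei(N[v_0,\ldots,v_{i^\star-1}])<2\leps^{39}\wei(G)$, and $\wei(G\setminus A)\ge\wei(D_{i^\star}\setminus A)\ge\leps'^{\,7}\wei(D_{i^\star})\ge\tfrac12\leps'^{\,7}\wei(G)$, using that $1024\leps^{31}<1$.

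The remaining case — which I expect to be the genuinely delicate one — is when Lemma~\ref{lem:find-claw} returns conclusion~\ref{c:claw}: an induced $(\geq t)$-claw $T_0\subseteq D_{i^\star}$ with $d$ as one of its tips. Truncating each leg of $T_0$ to length exactly $t$ yields an induced $t$-claw $T\subseteq D_{i^\star}$; I would let $w$ be the tip of $T$ at distance $t$ from the centre along the leg pointing to $d$, and $\tau$ the leftover portion of that leg, an induced path from $w$ to $d$. Because $T_0$ is induced, $V(\tau)\setminus\{w\}$ is anticomplete to $V(T)\setminus\{w\}$ except for the edge from $w$ to its neighbour on $\tau$; hence in $G-N[V(T)\setminus\{w\}]$ the path $\tau$ and the part of $D_{i^\star}$ reachable from $d$ form one component $D$, and $w$ is adjacent to $D$. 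Since $|V(T)\setminus\{w\}|=3t$, we have $\wei(N[V(T)\setminus\{w\}])\le 3t\leps^{40}\wei(G)\le\leps^{38}\wei(G)$, so deleting it removes less than $\leps^{38}\wei(G)$ of weight from $D_{i^\star}$. If $\wei(D)\ge(1-\leps^{35})\wei(G)$ then $T$ and $w$ are as required. Otherwise the claw has split $D_{i^\star}$ into $D$ and further pieces; if none of those exceeds $(1-\leps^{38})\wei(G)$ in weight, conclusion~\ref{c:lesd} follows from $X:=N[V(T)\setminus\{w\}]\cup N[v_0,\ldots,v_{i^\star-1}]$ with the trivial decomposition of $G-X$. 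The main obstacle will be to exclude the last possibility: a leftover component $C^\dagger$ of weight above $(1-\leps^{38})\wei(G)$ that avoids $w$. Such a $C^\dagger$ would be anticomplete to all of $V(T)$ and shielded from $D$ by $N[V(T)\setminus\{w\}]$; I expect to rule it out — or, failing that, absorb it, e.g.\ by reassigning the role of $w$ — by using not merely that $T$ is some induced $(\geq t)$-claw but the specific way Lemma~\ref{lem:find-claw} produces it, through Lemma~\ref{lem:claw-shatter} and an extended strip decomposition, which controls the connectivity of $D_{i^\star}$ around the claw. Carrying this out, while keeping all weight inequalities within the margins afforded by the exponents $35,36,38,39$, is the technical heart of the claim; everything else is routine bookkeeping.
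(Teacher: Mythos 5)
Your overall strategy---Gy\'arf\'as path, then Lemma~\ref{lem:find-claw} rooted at a vertex deep along the path---is the right one and matches the paper at a high level, but you have an unresolved gap that you yourself flag, and your guess about how to close it points in the wrong direction.

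The gap is genuine. You define $i^\star$ as the \emph{least} index for which the carved-off set $N[v_0,\ldots,v_{i^\star-1}]$ has weight at least $\leps^{39}\wei(G)$. That choice tells you a nontrivial amount of weight has been carved off, but it tells you nothing about how the remaining graph $G_{i^\star+1}$ shatters under one more deletion. The paper instead defines $p$ as the \emph{largest} index with $\wei(D_p) > (1-\leps^{35})\wei(G)$, which by maximality delivers the crucial dual inequality $\wei(D_{p+1}) \leq (1-\leps^{35})\wei(G)$: \emph{every} connected component of $G_{p+1}=G-N[v_0,\ldots,v_p]$ is light. This is exactly what lets one finish. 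Since $D$ is connected, has weight more than $(1-\leps^{35})\wei(G)$, and lives inside $G-N[V(T)\setminus\{w\}]$, if $D$ were disjoint from $N[v_0,\ldots,v_p]$ it would be confined to a single component of $G_{p+1}$, contradicting the weight bound. Hence $D$ must intersect $N[v_0,\ldots,v_p]$, and because the claw is rooted at $v_p$ itself (not at a neighbor of $v_p$, as in your version), the set $N[v_0,\ldots,v_p]$ sits inside $N[\widehat{R}]$, where $\widehat{R}$ is the induced path formed by concatenating the Gy\'arf\'as prefix $(v_0,\ldots,v_p)$ with the leg of the claw from $v_p$ to $w$, with $w$ removed. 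As $\widehat{R}$ is adjacent to $w$ and survives in $G-N[V(T)\setminus\{w\}]$, the adjacency of $D$ to $\widehat{R}$ yields adjacency of $D$ to $w$. None of this invokes Lemma~\ref{lem:claw-shatter} or the internal structure of the extended strip decomposition that Lemma~\ref{lem:find-claw} builds; that is where your speculative final paragraph goes astray.

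Two concrete fixes if you want to repair your draft: (1) replace your ``least $i^\star$'' definition with ``largest $p$ such that $\wei(D_p)>(1-\leps^{35})\wei(G)$,'' so that you also get $\wei(D_{p+1})\leq(1-\leps^{35})\wei(G)$; (2) run Lemma~\ref{lem:find-claw} on $G[\{v_p\}\cup V(D_p)]$ with root $u'=v_p$, not on $D_p$ with root a neighbor $d$ of $v_p$---having $v_p$ itself as the tip is what makes the concatenated path $\widehat{R}$ induced and makes $N[v_0,\ldots,v_p]\subseteq N[\widehat{R}]$ go through cleanly (with the small corner case $w=v_p$ handled separately). The rest of your write-up, including the lifting of conclusion~\ref{c:esd} to conclusion~\ref{c:lesd}, is essentially the same as the paper's calculation.
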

\begin{proof}
Pick any vertex $u$ of $G$ and apply Lemma~\ref{lem:gyarfas-path} to $G$, $u$, and $\alpha=\leps$. 
This yields a suitable induced path $Q=(v_0,v_1,\ldots,v_k)$ in $G$, where $v_0=u$.
We adopt the notation from Lemma~\ref{lem:gyarfas-path} and let $D_i$ be the heaviest connected component of $G_i$, for $i\in \{0,1,\ldots,k+1\}$.
As in the proof of Lemma~\ref{lem:find-claw}, we have that $\wei(D_i)>(1-\leps)\cdot\wei(G)$ for all $i\leq k$ and $\wei(D_{k+1})\leq (1-\leps)\cdot \wei(G)$.
In the same manner as in the proof of Lemma~\ref{lem:find-claw}, we may assume that $\wei(D_0)>(1-\leps^{39})\cdot \wei(G)$, which in particular entails $k\geq 0$, for otherwise conclusion~\ref{c:lesd} can be immediately reached by taking $X=\{u\}$ and the trivial extended strip decomposition of $G_0=G-u$.

We now define $p$ as the largest index satisfying the following:
$$\wei(D_p)>(1-\leps^{35})\cdot\wei(G).$$
Since $\wei(D_0)>(1-\leps^{39})\cdot \wei(G)$ and $\wei(D_{k+1})\leq (1-\leps)\cdot \wei(G)$, we have that $p$ is well-defined and satisfies $0\leq p\leq k$.

Consider now the connected graph $G'=G[\{v_p\}\cup V(D_p)]$ and the vertex $u':=v_p$ in it.
Since $\wei(G')\geq \wei(D_p)>\wei(G)/2$, we have $\wei(N_{G'}[v])\leq \leps^{40}\cdot \wei(G)\leq \leps^{16}\cdot \wei(G')$ for each vertex $v$ of $G'$.
Hence, we can apply Lemma~\ref{lem:find-claw} to $G'$ (with the weight function $\wei(\cdot)$), vertex $u'$, and parameters $t$ and $\leps^2$.
This either yields 
\begin{enumerate}[label=(C'\arabic*),ref=(C'\arabic*)]
\item\label{c:p-claw} an induced $(\geq t)$-claw $T'$ in $G'$ with $u'$ being one of its tips; or
\item\label{c:p-esd}  a vertex subset $X'\subseteq V(G')$ and an extended strip decomposition $(H',\esd')$ of $G'-X'$ such that 
$$\wei(A)\leq (1-\leps^{14})\cdot \wei(G')\quad\textrm{and}\quad\wei(X')\leq \leps^2\cdot \wei(G'-A)\quad\textrm{for every atom }A\textrm{ of }(H',\esd').$$
\end{enumerate}
We now argue that in the second case, when conclusion~\ref{c:p-esd} is drawn, we can immediately reach conclusion~\ref{c:lesd}.

\begin{claim}
If the above application of Lemma~\ref{lem:find-claw} leads to conclusion~\ref{c:p-esd}, then conclusion~\ref{c:lesd} can be reached.
\end{claim}
\begin{proof}
Let us set
$$X = N[v_0,v_1,\ldots,v_{p-1}]\cup X'.$$
Then the graph $G-X$ is the disjoint union of $G'-X'-u'$ and all the connected components of $G_p$ different from $D_p$.
Consequently, we can obtain an extended strip decomposition $(H,\esd)$ of $G-X$ by taking $(H',\esd')$, removing $u'$ from it if $u'\notin X'$, and adding, for each component $C\in \cc(G_p)$ different from $D_p$, a new isolated vertex $x_{C}$ with $\esd(x_{C})=V(C)$. We claim that $(X,(H,\esd))$ satisfies all the properties required by conclusion~\ref{c:lesd}.

Recall that $\wei(G')\geq \wei(D_p)>(1-\leps^{35})\cdot \wei(G)$.
Take any atom $A$ of $(H,\esd)$.
If $A$ is the vertex set of a connected component $C$ of $G_p$ different from $D_p$, then we have
\begin{equation}\label{eq:otter}
\wei(A)=\wei(C)\leq \wei(G)-\wei(D_p)<\leps^{35}\cdot \wei(G)<(1-\leps^{39})\cdot\wei(G),
\end{equation}
as required. Now assume that $A$ is an atom $(H,\esd)$ that is also an atom of $(H',\esd')$ (possibly with $u'$ removed). Then by condition~\ref{c:p-esd}, we have
\begin{equation}\label{eq:squirrel}
\wei(A)\leq (1-\leps^{14})\cdot \wei(G')\leq (1-\leps^{14})\cdot\wei(G)<(1-\leps^{39})\cdot \wei(G),
\end{equation}
again as required.

Finally, let us estimate the weight of $X$. By condition~\ref{c:p-esd}, for every atom $A$ of $(H,\esd)$ that is also an atom of $(H',\esd')$ (possibly with $u'$ removed) we have
\begin{eqnarray}
\wei(X) & \leq & \wei(N[v_0,v_1,\ldots,v_{p-1}])+\wei(X')\nonumber\\
        & \leq & (\wei(G)-\wei(D_p))+\leps^2\cdot \wei(G'-A)\nonumber\\
        & \leq & \leps^{35}\cdot \wei(G)+\leps^2\cdot\wei(G-A).\label{eq:beaver}
\end{eqnarray}
On the other hand, by~\eqref{eq:squirrel} we have
$$\wei(G-A)=\wei(G)-\wei(A)\geq \leps^{14}\cdot \wei(G).$$
The above two inequalities together imply that
$$\wei(X)\leq \leps^{21}\cdot \wei(G-A)+\leps^2\cdot \wei(G-A)\leq \leps\cdot \wei(G-A).$$
This establishes the property required in conclusion~\ref{c:lesd} for atoms $A$ of $(H,\esd)$ that are actually atoms of $(H',\esd')$, possibly with $u'$ removed.
It remains to verify this property for the other atoms, that is, for connected components of $G_p$ different from $D_p$.
Let then $C$ be such a component; then by~\eqref{eq:otter} we have $\wei(C)\leq \leps^{35}\cdot \wei(G)$. Hence, by~\eqref{eq:beaver} we have
$$\wei(X)\leq \leps^{35}\cdot \wei(G)+\leps^2\cdot \wei(G')\leq 2\leps^2\cdot \wei(G)\leq \leps\cdot \wei(G-C),$$
and we are done.
\cqed\end{proof}

We continue the proof of Claim~\ref{cl:right-claw}:
we are left with considering what happens in case conclusion~\ref{c:p-claw} is drawn as a consequence of applying Lemma~\ref{lem:find-claw}.
Let $c$ be the center of the constructed $(\geq t)$-claw $T'$ and let $T$ be the induced $t$-claw in $T'$, 
that is, $T$ the subgraph of $T'$ induced by all the vertices at distance at most $t$ from the center $c$.
We define $w$ as the tip of $T$ that lies on the path connecting $u'$ and $c$ in $T'$, and we let $R$ be the subpath of this path with endpoints $u'$ and $w$.
See Figure~\ref{fig:lobster-right-claw} for an illustration.
We claim that either we can again reach conclusion~\ref{c:lesd}, or $T$ and $w$ satisfy the properties from the statement of the claim.

\begin{figure}[htbp]
\begin{center}
\def\svgwidth{0.85\textwidth}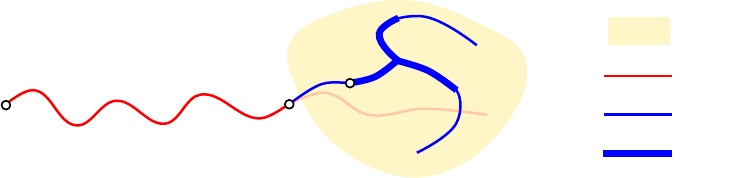
\end{center}
\caption{Situation in the proof of Claim~\ref{cl:right-claw}}\label{fig:lobster-right-claw}
\end{figure}

Let $D$ be the heaviest connected component of $G-N[V(T)\setminus \{w\}]$.
Taking $X=N[V(T)\setminus \{w\}]$, we have $\wei(X)\leq 3t\leps^{40}\cdot \wei(G)\leq \leps^{39}\cdot \wei(G)$.
Therefore, if we had $\wei(D)\leq (1-\leps^{35})\cdot \wei(G)$, then $X$ together with the trivial extended strip decomposition of $G-X$ would satisfy the requirements of conclusion~\ref{c:lesd}.
Indeed, for every connected component $D'$ of $G-X$ we would have $\wei(D')\leq \wei(D)\leq (1-\leps^{35})\cdot\wei(G)$, which entails $\wei(X)\leq \leps^{39}\cdot \wei(G)\leq \leps^{4}\cdot \wei(G-D')$.
Hence, from now on assume that $\wei(D)>(1-\leps^{35})\cdot \wei(G)$.

It remains to argue that $D$ is adjacent to $w$.
Let $\widehat{R}$ be the path obtained by concatenating the prefix of $Q$ from $u$ to $v_{p}$ with the path $R$, and removing $w$ (note that in case $w=v_p$, we also remove it from $\widehat{R}$).
Observe that $\widehat{R}$ is adjacent to $w$ and is entirely contained in $G-N[V(T)\setminus \{w\}]$, because $T'$ is an induced subdivided claw in $G'=G-(N[v_0,\ldots,v_{p-1}]\setminus \{v_p\})$.
Therefore, it suffices to argue that $N[\widehat{R}]$ and $D$ intersect.

By the choice of $p$, every connected component of $G_{p+1}=G-N[v_0,v_1,\ldots,v_p]$ has weight at most $(1-\leps^{35})\cdot \wei(G)$. On the other hand, we know that $D$ is connected in $G$ and $\wei(D)>(1-\leps^{35})\cdot \wei(G)$. Therefore, $D$ has to intersect $N[v_0,v_1,\ldots,v_p]$. We now have either $w\neq v_p$ and $N[v_0,v_1,\ldots,v_p]\subseteq N[\widehat{R}]$ and we are done,
or $w=v_p$. In the latter case, either $D$ actually intersects $N[v_0,v_1,\ldots,v_{p-1}]\subseteq N[\widehat{R}]$, or $D$ intersects $N[w]$, directly implying that $D$ is adjacent to $w$.
\cqed\end{proof}

We continue the proof of Lemma~\ref{lem:find-lobster}.
By applying Claim~\ref{cl:right-claw}, we may assume that we have constructed a suitable $t$-claw $T$ and its tip~$w$.
Let us denote by $D$ the heaviest connected component of $G-N[V(T)\setminus \{w\}]$; then Claim~\ref{cl:right-claw} ensures us that
$$\wei(D)>(1-\leps^{35})\cdot \wei(G)\qquad\textrm{and}\qquad D\textrm{ is adjacent to }w.$$
Let us define
$$G''=G[V(D)\cup \{w\}].$$
Note that $G''$ is connected.

We first verify that achieving an appropriate variant of conclusion~\ref{c:lesd} for $G''$ is sufficient for our needs.

\begin{claim}\label{cl:restrict-Gpp}
Suppose we construct a set $X''\subseteq V(G'')$ and an extended strip decomposition $(H'',\esd'')$ of $G''-X''$ with the following property:
$$\wei(A)\leq (1-\leps^{35})\cdot \wei(G'')\quad\textrm{and}\quad\wei(X'')\leq \leps^2\cdot \wei(G''-A)\quad\textrm{for every atom }A\textrm{ of }(H'',\esd'').$$
Then we can reach conclusion~\ref{c:lesd}.
\end{claim}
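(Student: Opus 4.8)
\emph{Proof plan.} The idea is to glue the given decomposition of $G''=G[V(D)\cup\{w\}]$ back onto the rest of $G$. Recall that $D$ is a connected component of $G-N[V(T)\setminus\{w\}]$ with $\wei(D)>(1-\leps^{35})\wei(G)$, and that $w$ is a tip of the $t$-claw $T$, hence $w$ has a neighbour in $V(T)\setminus\{w\}$ and therefore $w\in N[V(T)\setminus\{w\}]$. I would set
\[
X\ =\ N[V(T)\setminus\{w\}]\ \cup\ X''.
\]
Then $G-X=(G-N[V(T)\setminus\{w\}])-X''$ is the disjoint union of $G[V(D)\setminus X'']$ (which is $G''-X''$ with $w$ removed) and of all connected components of $G-N[V(T)\setminus\{w\}]$ other than $D$. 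Correspondingly I would build $(H,\esd)$ by first restricting $(H'',\esd'')$ to $G''-X''$ with $w$ deleted -- deleting a vertex from an extended strip decomposition always yields an extended strip decomposition of the smaller graph, exactly as in the restriction operation used in the proof of Theorem~\ref{thm:uniform-disperser} -- and then, for every $C\in\cc(G-N[V(T)\setminus\{w\}])$ with $C\neq D$, appending an isolated vertex $x_C$ to $H$ with $\esd(x_C)=V(C)$. This is plainly an extended strip decomposition of $G-X$, and each of its atoms is either (i) an atom of the restricted $(H'',\esd'')$, and thus contained in a corresponding atom of $(H'',\esd'')$, or (ii) one of the sets $V(C)$ with $C\neq D$.

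It then remains to verify, for each atom, the two inequalities of conclusion~\ref{c:lesd}, which I expect to be a routine calculation with the powers of $\leps$. The ingredients are: $\wei(G)-\wei(G'')\le\wei(G)-\wei(D)<\leps^{35}\wei(G)$; the estimate $\wei(N[V(T)\setminus\{w\}])\le 3t\,\leps^{40}\wei(G)\le\leps^{39}\wei(G)$, obtained by applying~\eqref{eq:light2} to the $3t$ vertices of $V(T)\setminus\{w\}$ and using $\leps<\frac{1}{100t}$; and, for any atom $B$ of $(H'',\esd'')$, the hypothesis $\wei(X'')\le\leps^2\wei(G''\setminus B)$, which in particular gives $\wei(X'')\le\leps^2\wei(G'')\le\leps^2\wei(G)$. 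For a type-(ii) atom $A=V(C)$ one has $\wei(A)<\leps^{35}\wei(G)\le(1-\leps^{39})\wei(G)$ and $\wei(G-A)>\wei(D)>\frac12\wei(G)$, whence $\wei(X)\le 2\leps^2\wei(G)\le\leps\wei(G-A)$. For a type-(i) atom $A$ contained in an atom $B$ of $(H'',\esd'')$ one has $\wei(A)\le\wei(B)\le(1-\leps^{35})\wei(G'')\le(1-\leps^{39})\wei(G)$, while $\wei(G-A)\ge\wei(G''\setminus B)$ and $\wei(G-A)\ge\leps^{35}\wei(G)$; combining these with $\wei(X)\le\leps^{39}\wei(G)+\leps^2\wei(G''\setminus B)$ gives $\wei(X)\le\leps^4\wei(G-A)+\leps^2\wei(G-A)\le\leps\wei(G-A)$.

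The only genuinely delicate point, and the closest thing to an obstacle, is the bookkeeping around the tip $w$: one must make sure that $w$ indeed lands in $X$ (so that $G[V(D)\setminus X'']$, the components $C\neq D$, and $X$ really partition $V(G)$) and that deleting $w$ leaves the decomposition legal -- both hold for the reasons indicated above. A harmless edge case, $G''-X''$ being empty (equivalently $(H'',\esd'')$ having no atom), can be ruled out, since then the assumed property would be vacuous; in the intended application the decomposition $(H'',\esd'')$ produced for $G''$ always carries at least one atom.
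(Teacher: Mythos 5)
Your proof is correct and takes essentially the same approach as the paper: the same choice $X=N[V(T)\setminus\{w\}]\cup X''$, the same gluing of the restricted $(H'',\esd'')$ (with $w$ deleted) together with trivial pieces for the components of $G-N[V(T)\setminus\{w\}]$ other than $D$, and the same power-of-$\leps$ bookkeeping (the paper routes the type-(i) estimate through $\wei(G''-A)$ rather than $\wei(G''\setminus B)$, but this is cosmetic).
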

\begin{proof}
Set $X=X''\cup N[V(T)\setminus \{w\}]$ and observe that the graph $G-X$ can be obtained by taking a disjoint union of the graph $G''-X''-w$ and adding all the connected components of $J:=G-N[V(T)\setminus \{w\}]$
that are different from $D$.
Hence, we can construct an extended strip decomposition $(H,\esd)$ of $G-X$ by taking $(H'',\esd'')$, removing $w$ if necessary, and adding, for each component $C\in \cc(J)$ different from $D$, a new isolated vertex $x_{C}$ with $\esd(x_{C})=V(C)$. We claim that $(X,(H,\esd))$ satisfies all the properties required by conclusion~\ref{c:lesd}.

Recall that $\wei(J)\geq \wei(D)>(1-\leps^{35})\cdot \wei(G)$.
Take any atom $A$ of $(H,\esd)$.
If $A$ is the vertex set of a connected component $C$ of $J$ different from $D$, then we have
\begin{equation}\label{eq:otter2}
\wei(A)=\wei(C)\leq \wei(J)-\wei(D)<\leps^{35}\cdot\wei(G)<(1-\leps^{39})\cdot \wei(G),
\end{equation}
as required. Now assume that $A$ is an atom of $(H,\esd)$ that is also an atom of $(H'',\esd'')$ (possibly with $w$ removed). Then by the assumption of the claim we have
\begin{equation}\label{eq:squirrel2}
\wei(A)\leq (1-\leps^{35})\cdot \wei(G'')\leq (1-\leps^{39})\cdot \wei(G),
\end{equation}
again as required.

Finally, let us estimate the weight of $X$. By the assumption, for every atom $A$ of $(H,\esd)$ that is also an atom of $(H'',\esd'')$ (possibly with $w$ removed) we have
\begin{eqnarray}
\wei(X) & \leq & \wei(N[V(T)\setminus \{w\}])+\wei(X'')\nonumber\\
        & \leq & 3t\leps^{40}\cdot \wei(G)+\leps^2\cdot \wei(G''-A)\nonumber\\
        & \leq & \leps^{39}\cdot \wei(G)+\leps^2\cdot \wei(G-A).\label{eq:beaver2}
\end{eqnarray}
On the other hand, by~\eqref{eq:squirrel2} we have
$$\wei(G-A)=\wei(G)-\wei(A)\geq \leps^{35}\cdot \wei(G).$$
The above two inequalities together imply that
$$\wei(X)\leq \leps^{4}\cdot \wei(G-A)+\leps^2\cdot \wei(G-A)\leq \leps\cdot \wei(G-A).$$
This establishes the property required in conclusion~\ref{c:lesd} for atoms $A$ of $(H,\esd)$ that are actually atoms of $(H'',\esd'')$, possibly with $w$ removed.
It remains to verify this property for the other atoms, that is, for connected components of $J$ different from $D$.
Let $C$ be such a component; then by~\eqref{eq:otter2} we have $\wei(C)\leq \leps^{35}\cdot \wei(G)$. Hence, by~\eqref{eq:beaver2} we have
$$\wei(X)\leq \leps^{35}\cdot \wei(G)+\leps^2\cdot \wei(G)\leq 2\leps^2\cdot \wei(G)\leq \leps\cdot \wei(G-C),$$
and we are done.
\cqed\end{proof}

Therefore, from now on we may focus on the graph $G''$. The intuition is that $T$ is already one claw of the lobster, and in $G''$ we try to first construct the second claw, and finally the ``tail''.

Apply Lemma~\ref{lem:gyarfas-path} to the graph $G''$, vertex $w$, and $\alpha=\leps$.
This yields a suitable path $P=(y_0,y_1,y_2,\ldots,y_\ell)$, where $y_0=w$.
We adopt the notation from the statement of Lemma~\ref{lem:gyarfas-path} in the following form:
$G''_0=G''-w$ and $G''_i=G''-N[y_0,\ldots,y_{i-1}]$ for $i\in \{1,\ldots,\ell+1\}$.
Moreover, for $i\in \{0,1,\ldots,\ell+1\}$, let $D''_i$ be the heaviest connected component of $G''_i$;
then $\wei(D''_i)>(1-\leps)\cdot \wei(G'')$ for $i\leq \ell$ and $\wei(D''_{\ell+1})\leq (1-\leps)\wei(G'')$.
Again, we may assume that $\wei(D''_0)>(1-\sigma^{35})\cdot \wei(G'')$, which in particular entails $\ell\geq 0$: otherwise, the prerequisites of Claim~\ref{cl:restrict-Gpp} can be achieved by taking
$X''=\{w\}$ and the trivial extended strip decomposition of $G''-X''$, so we can reach conclusion~\ref{c:lesd}.

Let us define $p,q,r$ as the largest indices satisfying the following:
$$\wei(D''_p)>(1-\leps^{30})\cdot\wei(G)\qquad\textrm{and}\qquad\wei(D''_q)>(1-\leps^{25})\cdot\wei(G)\qquad\textrm{and}\qquad\wei(D''_r)>(1-\leps^{20})\cdot \wei(G).$$
Since $\wei(D''_0)>(1-\leps^{35})\cdot \wei(G)$ and $\wei(D''_{\ell+1})\leq (1-\leps)\cdot \wei(G)$, the indices $p,q,r$ are well-defined and satisfy $0\leq p\leq q\leq r\leq \ell$.

Similarly as in the proof of Lemma~\ref{lem:find-claw}, let us define the following subpaths of $P$:
$$R_1=(y_0,y_1,\ldots,y_{p-2}),\qquad R_2=(y_p,y_{p+1},\ldots,y_{q-2}),\qquad R_3=(y_q,y_{q+1},\ldots,y_{r-1}).$$
Observe that paths $R_1,R_2,R_3$ are pairwise disjoint and non-adjacent. 
Moreover, the same reasoning as in Claims~\ref{cl:paths-heavy} and~\ref{cl:close-done} in the proof of Lemma~\ref{lem:find-claw} easily yields the following;
we note that we verify the condition provided to Claim~\ref{cl:restrict-Gpp} in order to reach conclusion~\ref{c:lesd}.

\begin{claim}\label{cl:lpaths-heavy}
If we have
$$\wei(N[R_1])\leq \leps^{33}\cdot \wei(G'')\quad\textrm{or}\quad\wei(N[R_2])\leq \leps^{28}\cdot \wei(G'')\quad\textrm{or}\quad\wei(N[R_3])\leq \leps^{23}\cdot \wei(G''),$$
then conclusion~\ref{c:lesd} can be obtained.
In particular, if the above condition does not hold, then
$$p-0>t+1\qquad\textrm{and}\qquad q-p>t+1\qquad\textrm{and}\qquad r-q>t+1.$$
\end{claim}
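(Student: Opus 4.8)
The plan is to re-run, this time inside the graph $G''$, the arguments of Claims~\ref{cl:paths-heavy} and~\ref{cl:close-done} from the proof of Lemma~\ref{lem:find-claw}, and then to feed the resulting pair into Claim~\ref{cl:restrict-Gpp}, which converts a suitable decomposition of $G''$ into conclusion~\ref{c:lesd} for $G$. Throughout I would take all closed neighbourhoods in $G''$, and I would use that $(H'',\esd'')$ can always be taken to be the trivial extended strip decomposition of $G''-X''$, whose atoms are precisely the connected components of $G''-X''$.

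First I would set up the bookkeeping between weights measured against $\wei(G)$ and against $\wei(G'')$. Since $\wei(G'')\geq \wei(D)>(1-\leps^{35})\wei(G)$, we have $\wei(G)\leq (1+2\leps^{35})\wei(G'')$, hence $\wei(N[v])\leq \leps^{40}\wei(G)\leq 2\leps^{40}\wei(G'')$ for every $v\in V(G'')$. From the maximality of $p,q,r$ together with property~\ref{p:fin} of Lemma~\ref{lem:gyarfas-path} (which gives $\wei(D''_{\ell+1})\leq (1-\leps)\wei(G'')$) I would derive the atom bounds
$$\wei(D''_{p+1})\leq \left(1-\tfrac{1}{2}\leps^{30}\right)\wei(G''),\quad \wei(D''_{q+1})\leq \left(1-\tfrac{1}{2}\leps^{25}\right)\wei(G''),\quad \wei(D''_{r+1})\leq \left(1-\tfrac{1}{2}\leps^{20}\right)\wei(G''),$$
where the factor $\tfrac{1}{2}$ absorbs the conversion above (using $4\leps^{5}\leq 1$). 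Finally, since $N[R_1]\subseteq N[y_0,\dots,y_{p-1}]$ is disjoint from $V(D''_p)$ and $\wei(D''_p)>(1-\leps^{30})\wei(G)\geq (1-\leps^{30})\wei(G'')$ — and analogously for $R_2,R_3$ — I would record the ``free'' bounds $\wei(N[R_1])<\leps^{30}\wei(G'')$, $\wei(N[R_2])<\leps^{25}\wei(G'')$, $\wei(N[R_3])<\leps^{20}\wei(G'')$.

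With this in hand the three cases are short. If $\wei(N[R_1])\leq \leps^{33}\wei(G'')$, I would put $X''=N[y_0,\dots,y_p]$, so that $G''-X''=G''_{p+1}$ and every atom $A$ satisfies $\wei(A)\leq \wei(D''_{p+1})\leq (1-\tfrac{1}{2}\leps^{30})\wei(G'')\leq (1-\leps^{35})\wei(G'')$, hence $\wei(G''-A)\geq \tfrac{1}{2}\leps^{30}\wei(G'')$; since $X''\subseteq N[R_1]\cup N[y_{p-1}]\cup N[y_p]$ we get $\wei(X'')\leq 2\leps^{33}\wei(G'')$ and therefore $\wei(X'')\leq 4\leps^{3}\wei(G''-A)\leq \leps^{2}\wei(G''-A)$ since $\leps<\tfrac{1}{4}$. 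This is precisely the hypothesis of Claim~\ref{cl:restrict-Gpp}, so conclusion~\ref{c:lesd} follows. If $\wei(N[R_2])\leq \leps^{28}\wei(G'')$ I would instead take $X''=N[y_0,\dots,y_q]$, and if $\wei(N[R_3])\leq \leps^{23}\wei(G'')$ take $X''=N[y_0,\dots,y_r]$; in these cases $X''$ picks up the extra pieces $N[R_1]$ (and $N[R_2]$), but the free bounds above keep $\wei(X'')\leq 2\leps^{28}\wei(G'')$, respectively $2\leps^{23}\wei(G'')$, against atom bounds $(1-\tfrac{1}{2}\leps^{25})\wei(G'')$, respectively $(1-\tfrac{1}{2}\leps^{20})\wei(G'')$; the uniform exponent gap of $3$ ($33$ over $30$, $28$ over $25$, $23$ over $20$) again yields $\wei(X'')\leq \leps^{2}\wei(G''-A)$, and Claim~\ref{cl:restrict-Gpp} applies.

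For the ``in particular'' part, assume none of the three inequalities holds. If $p\leq t+1$ then $R_1$ has at most $t$ vertices, so $\wei(N[R_1])\leq 2(t+1)\leps^{40}\wei(G'')<\leps^{33}\wei(G'')$ because $2(t+1)\leps^{7}<1$ (as $\leps<\tfrac{1}{100t}$), contradicting $\wei(N[R_1])>\leps^{33}\wei(G'')$; hence $p>t+1$. The identical count, using $2(t+1)\leps^{12}<1$ and $2(t+1)\leps^{17}<1$, gives $q-p>t+1$ and $r-q>t+1$. The one point that needs care — and it is the reason the thresholds sit three exponents above $30,25,20$ rather than one above them, as in Lemma~\ref{lem:find-claw} — is to carry the $\wei(G)\leftrightarrow \wei(G'')$ conversion through every estimate while meeting the \emph{stronger} relation $\wei(X'')\leq \leps^{2}\wei(G''-A)$ demanded by Claim~\ref{cl:restrict-Gpp}.
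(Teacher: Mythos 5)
Your proof is correct and takes exactly the route the paper intends: the paper's own ``proof'' of this claim is a single sentence saying to repeat the arguments of Claims~\ref{cl:paths-heavy} and~\ref{cl:close-done} from Lemma~\ref{lem:find-claw} inside $G''$ and then funnel the resulting $(X'',(H'',\esd''))$ through Claim~\ref{cl:restrict-Gpp}. You spell out the bookkeeping between $\wei(G)$ and $\wei(G'')$, the choice of $X''=N[y_0,\dots,y_p]$ (respectively $q$, $r$) with the trivial extended strip decomposition, and the exponent slack needed to meet the stronger demand $\wei(X'')\leq \leps^2\wei(G''-A)$, all of which is accurate.
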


Hence, from now on we assume that the condition stated in Claim~\ref{cl:lpaths-heavy} does not hold, that is:
\begin{equation}\label{eq:lpaths-heavy0}
\wei(N[R_1])>\leps^{33}\cdot \wei(G'')\quad\textrm{and}\quad\wei(N[R_2])>\leps^{28}\cdot \wei(G'')\quad\textrm{and}\quad\wei(N[R_3])>\leps^{23}\cdot \wei(G''),
\end{equation}
which in particular implies that $p>t+1$, $q>p+t+1$, and $r>q+t+1$. 
Since $\wei(G'')>\wei(G)/2$, assertion~\eqref{eq:lpaths-heavy0} in particular implies that 
\begin{equation}\label{eq:lpaths-heavy}
\wei(N[R_1])>\leps^{34}\cdot \wei(G)\quad\textrm{and}\quad\wei(N[R_2])>\leps^{29}\cdot \wei(G)\quad\textrm{and}\quad\wei(N[R_3])>\leps^{24}\cdot \wei(G).
\end{equation}

Consider now the connected graph $G'''=G''[\{y_r\}\cup V(D''_r)]$ and the vertex $u''':=y_r$ in it.
Since $\wei(G''')\geq \wei(D''_r)>\wei(G'')/2>\wei(G)/4$, we have $\wei(N_{G'''}[v])\leq \leps^{40}\cdot \wei(G)\leq \leps^{24}\cdot \wei(G''')$ for each vertex $v$ of $G'''$.
Hence, we can apply Lemma~\ref{lem:find-claw} to $G'''$ (with the weight function $\wei(\cdot)$), vertex $u'''$, and parameters $t$ and $\leps^3$.
This either yields 
\begin{enumerate}[label=(C''\arabic*),ref=(C''\arabic*)]
\item\label{c:pp-claw} an induced $(\geq t)$-claw $S'$ in $G'''$ with $u'''$ being one of its tips; or
\item\label{c:pp-esd}  a vertex subset $X'''\subseteq V(G''')$ and an extended strip decomposition $(H''',\esd''')$ of $G'''-X'''$ such that 
$$\wei(A)\leq (1-\leps^{21})\cdot \wei(G''')\quad\textrm{and}\quad\wei(X''')\leq \leps^3\cdot \wei(G'''-A)\quad\textrm{for every atom }A\textrm{ of }(H''',\esd''').$$
\end{enumerate}
We now argue that in the second case, when conclusion~\ref{c:pp-esd} is drawn, we can immediately reach conclusion~\ref{c:lesd}.

\begin{claim}
If the above application of Lemma~\ref{lem:find-claw} leads to conclusion~\ref{c:pp-esd}, then we can reach conclusion~\ref{c:lesd}.
\end{claim}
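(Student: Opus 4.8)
The plan is to replay, one level deeper, the argument that turns conclusion~\ref{c:p-esd} into conclusion~\ref{c:lesd} in the proof of Claim~\ref{cl:right-claw}. We are given an extended strip decomposition $(H''',\esd''')$ of $G'''-X'''$ with $\wei(A)\leq(1-\leps^{21})\wei(G''')$ and $\wei(X''')\leq\leps^3\wei(G'''-A)$ for every atom $A$, and we want to reach conclusion~\ref{c:lesd}. Since $G'''$ is an induced subgraph of $G''$, which is in turn an induced subgraph of $G$, it suffices by Claim~\ref{cl:restrict-Gpp} to construct a set $X''\subseteq V(G'')$ together with an extended strip decomposition $(H'',\esd'')$ of $G''-X''$ such that every atom $A$ of $(H'',\esd'')$ satisfies $\wei(A)\leq(1-\leps^{35})\wei(G'')$ and $\wei(X'')\leq\leps^2\wei(G''-A)$.

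First I would set $X'':=N_{G''}[y_0,\dots,y_{r-1}]\cup X'''$. Since $D''_r$ is a connected component of $G''_r=G''-N_{G''}[y_0,\dots,y_{r-1}]$ and $X'''\subseteq V(G''')=\{y_r\}\cup V(D''_r)$, a direct check shows that $G''-X''$ is the disjoint union of $G'''-X'''$ (with $y_r$ removed when $y_r\notin X'''$) and of all connected components of $G''_r$ other than $D''_r$. Accordingly, $(H'',\esd'')$ is obtained from $(H''',\esd''')$ by removing $y_r$ from it if $y_r\notin X'''$ and then adding, for every $C\in\cc(G''_r)\setminus\{D''_r\}$, a new isolated vertex $x_C$ of $H''$ with $\esd''(x_C)=V(C)$; this is routinely an extended strip decomposition of $G''-X''$.

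Next I would verify the two inequalities required by Claim~\ref{cl:restrict-Gpp}. For the shrinking bound, an atom inherited from $(H''',\esd''')$ has weight at most $(1-\leps^{21})\wei(G''')\leq(1-\leps^{21})\wei(G'')\leq(1-\leps^{35})\wei(G'')$, while an atom of the form $V(C)$ is disjoint from $D''_r$, so, using $\wei(D''_r)>(1-\leps^{20})\wei(G)\geq(1-\leps^{20})\wei(G'')$, we get $\wei(V(C))\leq\wei(G'')-\wei(D''_r)<\leps^{20}\wei(G'')$, again comfortably below $(1-\leps^{35})\wei(G'')$. For the safety bound I would write $\wei(X'')\leq\wei(N_{G''}[y_0,\dots,y_{r-1}])+\wei(X''')$, bound $\wei(N_{G''}[y_0,\dots,y_{r-1}])\leq\wei(G'')-\wei(D''_r)<\leps^{20}\wei(G)$ by disjointness from $D''_r$, bound $\wei(X''')$ by $\leps^3\wei(G'''-A)\leq\leps^3\wei(G''-A)$ for atoms inherited from $(H''',\esd''')$ (and by $\leps^3\wei(G''')\leq\leps^3\wei(G'')$ otherwise), and then convert the $\wei(G)$-term into a $\wei(G''-A)$-term via the lower bound $\wei(G''-A)\geq\wei(G'''-A)\geq\leps^{21}\wei(G''')>\tfrac12\leps^{21}\wei(G)$ read off from the shrinking bound — for atoms $V(C)$ one instead uses the much larger lower bound $\wei(G''-V(C))\geq\wei(D''_r)>\tfrac12\wei(G'')$, which makes the estimate easy. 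Once these inequalities close, Claim~\ref{cl:restrict-Gpp} applied to $(X'',(H'',\esd''))$ delivers conclusion~\ref{c:lesd}.

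The step I expect to require the most care is exactly the last chain of estimates in the safety bound: the weight $\wei(N_{G''}[y_0,\dots,y_{r-1}])$ that we dump into $X''$ is controlled only by the threshold $\leps^{20}$ defining $r$, whereas $\wei(G''-A)$ for an atom of $(H''',\esd''')$ is guaranteed only to be at least $\leps^{21}\wei(G''')$, so one must track precisely how much slack is left between the exponent of the $r$-threshold and the shrinking exponent produced by Lemma~\ref{lem:find-claw} applied with parameter $\leps^3$. It is in order to keep enough room for this comparison and for the analogous ones at every level of the construction that the proof is set up with the long tower of powers of $\leps$ (from $\leps^{40}$ down through $\leps^{30},\leps^{25},\leps^{20}$) and with the carefully decreasing parameters handed to the successive invocations of Lemma~\ref{lem:find-claw}.
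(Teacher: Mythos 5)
Your proposal replicates the paper's own proof step by step: the same definition of $X''=N_{G''}[y_0,\ldots,y_{r-1}]\cup X'''$, the same assembly of $(H'',\esd'')$ from $(H''',\esd''')$ and the components of $G''_r$ other than $D''_r$, and the same plan to verify the prerequisites of Claim~\ref{cl:restrict-Gpp}. However, there is a genuine gap in the safety bound for atoms $A$ inherited from $(H''',\esd''')$, which you correctly single out as the delicate step but then leave unverified. With your own estimates, $\wei(N_{G''}[y_0,\ldots,y_{r-1}])<\leps^{20}\wei(G)$ while $\wei(G''-A)>\tfrac12\leps^{21}\wei(G)$, so the first quantity is bounded only by $2\leps^{-1}\wei(G''-A)$, roughly $\leps^{-1}$ times what you need rather than $\leps^2$ times it. The exponent $\leps^{20}$ in the $r$-threshold is \emph{larger} than the exponent $\leps^{21}=(\leps^3)^7$ in the shrinkage guarantee of Lemma~\ref{lem:find-claw} applied with parameter $\leps^3$, so the ``slack'' you appeal to has the wrong sign: no tower of $\leps$-powers with these particular exponents makes this inequality close.

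For what it is worth, the paper's own write-up of this claim has the same arithmetic slip: from $\wei(X'')\leq\leps^{20}\wei(G'')+\leps^3\wei(G''-A)$ and the stated bound $\wei(G''-A)\geq\leps^{21}\wei(G'')$, it concludes $\wei(X'')\leq(\leps^6+\leps^3)\wei(G''-A)$, but the first term actually gives $\leps^{-1}\wei(G''-A)$; the claimed $\leps^6$ would only follow from $\wei(G''-A)\geq\leps^{14}\wei(G'')$. The argument is repairable (e.g.\ by taking the $r$-threshold several powers of $\leps$ smaller, so that $\wei(N_{G''}[y_0,\ldots,y_{r-1}])$ is strictly below the guaranteed $\wei(G''-A)$, and propagating the adjustment), but as written your chain of estimates does not follow, and it should not be waved through on the assumption that the exponents were chosen to make it work.
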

\begin{proof}
Let us define
$$X'' = N_{G''}[y_0,y_1,\ldots,y_{r-1}]\cup X'''.$$
Then the graph $G''-X''$ is the disjoint union of $G'''-X'''-u'''$ and all the connected components of $G''_r$ different from $D''_r$.
Consequently, we can obtain an extended strip decomposition $(H'',\esd'')$ of $G''-X''$ by taking $(H''',\esd''')$, removing $u'''$ from it if $u'''\notin X'''$, and adding, for each component $C\in \cc(G''_r)$ different from $D''_r$, a new isolated vertex $x_{C}$ with $\esd(x_{C})=V(C)$. We claim that $(X'',(H'',\esd''))$ satisfies the prerequisites of Claim~\ref{cl:restrict-Gpp}, which then entails conclusion~\ref{c:lesd}

Recall that $\wei(G''')\geq \wei(D''_r)>(1-\leps^{20})\cdot \wei(G'')$.
Take any atom $A$ of $(H'',\esd'')$.
If $A$ is the vertex set of a connected component $C$ of $G''_r$ different from $D''_r$, then we have
\begin{equation}\label{eq:otter3}
\wei(A)=\wei(C)\leq \wei(G'')-\wei(D_r)<\leps^{20}\cdot \wei(G'')<(1-\leps^{35})\cdot \wei(G''),
\end{equation}
as required. Now assume that $A$ is an atom $(H'',\esd'')$ that is also an atom of $(H''',\esd''')$ (possibly with $u'''$ removed). Then by condition~\ref{c:pp-esd}, we have
\begin{equation}\label{eq:squirrel3}
\wei(A)\leq (1-\leps^{21})\cdot \wei(G''')\leq (1-\leps^{21})\cdot \wei(G'')\leq (1-\leps^{35})\cdot \wei(G''),
\end{equation}
again as required.

Finally, let us estimate the weight of $X''$. By condition~\ref{c:pp-esd}, for every atom $A$ of $(H'',\esd'')$ that is also an atom of $(H''',\esd''')$ (possibly with $u'''$ removed), we have
\begin{eqnarray}
\wei(X'') & \leq & \wei(N_{G''}[y_0,y_1,\ldots,y_{r-1}])+\wei(X''')\nonumber \\
          & \leq & (\wei(G'')-\wei(D''_r))+\leps^3\cdot \wei(G'''-A)\nonumber \\
          & \leq & \leps^{20}\cdot \wei(G'')+\leps^3\cdot\wei(G''-A).\label{eq:beaver3}
\end{eqnarray}
On the other hand, by~\eqref{eq:squirrel3} we have
$$\wei(G''-A)=\wei(G'')-\wei(A)\geq \leps^{21}\cdot \wei(G'').$$
The above two inequalities together imply that
$$\wei(X'')\leq \leps^{6}\cdot \wei(G''-A)+\leps^3\cdot \wei(G''-A)\leq \leps^2\cdot \wei(G''-A).$$
This establishes the property required in conclusion~\ref{c:lesd} for atoms $A$ of $(H'',\esd'')$ that are actually atoms of $(H''',\esd''')$, possibly with $u'''$ removed.
It remains to verify this property for the other atoms, that is, for connected components of $G''_r$ different from $D''_r$.
Let then $C$ be such a component; then by~\eqref{eq:otter3} we have $\wei(C)\leq \leps^{20}\cdot \wei(G'')$. Hence, by~\eqref{eq:beaver3} we have
$$\wei(X'')\leq \leps^{20}\cdot \wei(G'')+\leps^2/2\cdot \wei(G''')\leq 3\leps^2/4\cdot \wei(G'')\leq \leps^2\cdot \wei(G''-C),$$
and we are done.
\cqed\end{proof}

Hence, from now on we may assume that the application of Lemma~\ref{lem:find-claw} leads to conclusion~\ref{c:pp-claw}.
That is, we constructed an induced $(\geq t)$-claw $S'$ in $G'''$ with $u'''$ being one of the tips.

Let $S$ be the induced $t$-claw in $S'$, that is, $S$ is induced in $S'$ by all vertices at distance at most $t$ from the center of $S'$.
Let $v$ be the tip of $S$ that is the closest in $S'$ to $u'''$.
We now define $R_3'$ as the path obtained by concatenating: the path $R_3$ (leading from $y_q$ to $y_{r-1}$) and the path within $S$ from $u'''=y_r$ to $v$.
Since $S-y_r$ is by construction contained in $G_r=G-N[y_0,\ldots,y_{r-1}]$, and $P$ is an induced path in $G''$, we infer that paths $R_1,R_2,R_3'$ are pairwise disjoint and non-adjacent.
Moreover, since $R_3$ is a subpath of $R_3'$, by~\eqref{eq:lpaths-heavy} we infer that $\wei(N[R_3'])>\leps^{24}\cdot \wei(G'')$.

\newcommand{\Gf}{G^{(4)}}
\newcommand{\Hf}{H^{(4)}}
\newcommand{\esdf}{\esd^{(4)}}

Define the following prefix of $R_2$:
$$P_2=(y_p,y_{p+1},\ldots,y_{p+t-1}).$$
We now define the graph 
$$\Gf=G-(N[V(S)\setminus \{v\}]\cup N[V(T)\setminus \{w\}]\cup (N(P_2)\setminus y_{p+t-1})).$$
Note that in $\Gf$, the path $P_2$ is preserved but becomes detached in the following sense: only the endpoint $y_{p+t-1}$ is adjacent to one vertex from the rest of the graph, namely $y_{p+t}$.
Observe that the paths $R_1,R_2,R_3'$ are also preserved in $\Gf$, and of course they are still disjoint and pairwise non-adjacent.

We now apply Lemma~\ref{lem:claw-shatter} to graph $\Gf$ with
$$Z=\{v,w,y_p\}.$$
This either yields an induced tree $U$ in $\Gf$ that contains $v,w,y_p$, or an extended strip decomposition $(\Hf,\esdf)$ of $\Gf$ that shatters $Z$.

In the first case, letting $U$ be inclusion-wise minimal subject to being connected and containing $v,w,y_p$, we observe that the set
$$V(T)\cup V(U)\cup V(S)$$
induces an $(\geq t)$-lobster in $G$. Thus, we reach conclusion~\ref{c:lobster}; see Figure~\ref{fig:lobster-left-claw}.

\begin{figure}[htbp]
\begin{center}
\def\svgwidth{0.85\textwidth}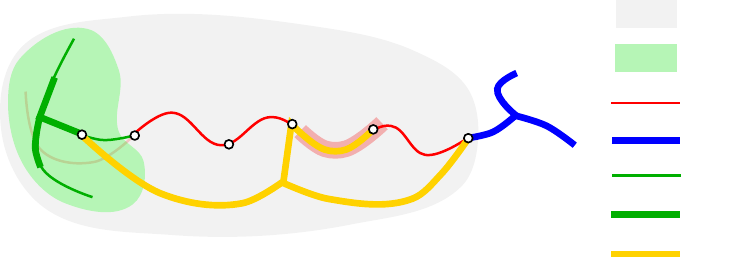
\end{center}
\caption{Final construction of the lobster}\label{fig:lobster-left-claw}
\end{figure}

We now argue that in the second case we may reach conclusion~\ref{c:lesd} by taking
$$X=N[S]\cup N[T]\cup N[P_2],$$
and an extended strip decomposition $(H,\esd)$ of $G-X$ defined as follows: 
take $(\Hf,\esdf)$ and, keeping $H=\Hf$, remove all vertices that belong to $X$ from all the sets in the image of $\esdf(\cdot)$.

Since $(\Hf,\esdf)$ shatters $Z$ in $\Gf$, while $R_1,R_2,R_3'$ are pairwise disjoint and non-adjacent paths in $\Gf$, each having an endpoint in $Z$, we infer that every atom $A$ of $(\Hf,\esdf)$
is disjoint with either $N[R_1]$, or $N[R_2]$, or $N[R_3']$. By~\eqref{eq:lpaths-heavy} we infer that $\wei(A)\leq (1-\leps^{34})\cdot \wei(G)$ for every atom $A$ of $(\Hf,\esdf)$. 
Since atoms of $(H,\esd)$ are subsets of atoms of $(\Hf,\esdf)$, we also have $\wei(A)\leq (1-\leps^{34})\cdot \wei(G)$ for every atom $A$ of $(H,\esd)$. 

Now, observe that since $|X|\leq 7t+2$, we have
$$\wei(X)\leq (7t+2)\leps^{40}\cdot \wei(G)\leq \leps^{39}\cdot \wei(G).$$
As $\wei(A)\leq (1-\leps^{34})\cdot \wei(G)$ for every atom $A$ of $(H,\esd)$, we also have $\wei(G-A)\geq \leps^{34}\cdot \wei(G)$, which in conjunction with the above yields that
$$\wei(X)\leq \leps^{5}\cdot \wei(G-A)\qquad\textrm{for every atom }A\textrm{ of }(H,\esd).$$
This means that we have indeed reached conclusion~\ref{c:lesd}.

\medskip

For the enumeration statement, if suffices to examine the consecutive steps of the reasoning and replace all steps where we invoke the existential statements of Lemmas~\ref{lem:gyarfas-path} and~\ref{lem:find-claw} with iteration over the families obtained by respective enumeration statements. The final family $\Nn$ consists of all the pairs $(X,(H,\esd))$ that we might have obtained at any point in the reasoning as witnesses for conclusion~\ref{c:lesd}, for all possible choices of objects from the families provided by Lemmas~\ref{lem:gyarfas-path} and~\ref{lem:find-claw}.
To be more precise, we first invoked Lemma~\ref{lem:gyarfas-path} followed by Lemma~\ref{lem:find-claw} in the proof of Claim~\ref{cl:right-claw}, which results in either finding an induced $t$-claw $T$ or a suitable family $\Nn$ of size $\Oh(|V(G)|^6)$. Then we again invoked Lemma~\ref{lem:gyarfas-path} followed by Lemma~\ref{lem:find-claw} in the remainder of the proof, which again results in either finding an induced $(\geq t)$-lobster or a suitable family $\Nn$ of size $\Oh(|V(G)|^6)$.
\end{proof}

\bibliographystyle{plain}

\bibliography{../references}

\end{document}